\documentclass[a4paper,12pt]{article}
\usepackage{arxiv}
\usepackage{mathtools}

\allowdisplaybreaks
\usepackage{microtype}
\usepackage{xcolor}
\usepackage[normalem]{ulem}
\usepackage{changepage}
\usepackage{setspace}
\usepackage{amsmath, amssymb, amsthm}
\usepackage{algorithm}

\theoremstyle{definition}

\theoremstyle{plain}
\newtheorem{claim}{Claim}[section]

\begin{document}
	
\title{Gaussian Differentially Private $e$-values: Construction, Threshold Calibration, and Multiple Testing}
	
	\author{Qi Kuang}
	\author{Bowen Gang}
	\author{Yin Xia}
	
	\affil{Department of Statistics and Data Science, Fudan University}
	\date{}
	\maketitle
	
	\begin{abstract}
	This paper develops a framework for differentially private $e$-values under Gaussian differential privacy ($\mu$-GDP). We characterize the canonical noise mechanism, establishing that optimal multiplicative perturbation follows a Gaussian distribution. Using this distribution, we derive a globally sharp rejection threshold that strictly improves upon the standard Markov bound. Asymptotic analysis shows that in low-sensitivity regimes, the calibrated private test achieves a net power gain over the non-private baseline. For multiple testing, we introduce a recursive peeling algorithm that adaptively concentrates the privacy budget on the most promising hypotheses. This construction guarantees rigorous $\mu$-GDP and yields valid private $e$-values compatible with standard multiple testing procedures. Simulations and a genome-wide association study confirm that the method controls the false discovery rate while improving upon naive all-noisy privatization and recovering power close to non-private benchmarks.
	\end{abstract}
	
	\noindent{\bf Keywords:} {$e$-values, Gaussian Differential Privacy, Hypothesis Testing, Large-scale Inference}
	
	\newpage
	\section{Introduction}
Modern statistical inference frequently relies on large-scale datasets containing sensitive individual-level records. While such data enable rigorous scientific analysis, their release poses substantial privacy risks. \citet{homer2008resolving} demonstrated that even aggregated summary statistics can be exploited to infer an individual's presence in a study, a vulnerability that compromises participant confidentiality. To mitigate information leakage, differential privacy (DP) has become the standard framework for data analysis \citep{dwork2006our,dwork2006calibrating}. Within this framework, Gaussian differential privacy ($\mu$-GDP, \cite{dong2022gdp}) serves as the preferred accounting mechanism. By framing privacy loss as a hypothesis testing problem between neighboring datasets, $\mu$-GDP provides a single-parameter characterization that admits direct statistical interpretation. This formulation yields tight composition bounds and aligns naturally with the central limit behavior of iterative algorithms, making it appropriate for complex, multi-stage analyses.

At the same time, modern large-scale inference increasingly requires statistical procedures that remain valid under sequential, adaptive, and highly dependent settings. In this context, $e$-values have emerged as a flexible alternative to classical $p$-values for statistical inference \citep{ramdas2025hypothesis}.
%The structural properties of $e$-values have established them as a flexible alternative to classical $p$-values for statistical inference \citep{ramdas2025hypothesis}. 
Formally defined as nonnegative random variables with an expectation of at most one under the null hypothesis, $e$-values confer several practical advantages. These include anytime validity \citep{safetesting}, robustness to arbitrary dependence \citep{wang2022false, xu2025bringing}, and compositional stability under both independent and dependent aggregation \citep{vovk2021values}. Such properties have driven their widespread adoption in sequential monitoring, post-hoc error control, and large-scale multiple testing. 
%Unlike classical procedures, which typically require stringent independence or positive dependence assumptions to maintain validity, e-based methods preserve error guarantees under complex, data-dependent selection mechanisms.

Despite the widespread adoption of $\mu$-GDP for privacy accounting and the growing reliance on $e$-values for inference, the construction and application of $\mu$-GDP $e$-values remain largely unexplored. Standard additive noise mechanisms violate the non-negativity and unit-expectation constraints. In this paper, we provide a systematic study of $\mu$-GDP $e$-values, progressing from optimal construction to downstream multiple testing. We characterize the noise distribution that fully exhausts the privacy budget while preserving validity, derive globally sharp rejection thresholds that strictly improve upon the standard Markov bound for single hypothesis testing, and develop a peeling algorithm that concentrates the privacy budget on the most promising hypotheses to avoid the severe noise inflation that would otherwise degrade power in large-scale multiple testing.
	%%%

	\subsection{Related Work}
	
	Recent work has begun addressing the privatization of $e$-values. \citet{csillag2026differentially} introduced a framework for transforming non-private $e$-values into differentially private counterparts under $(\epsilon, \delta)$-DP and $(\alpha, \epsilon)$-R\'enyi DP. Their approach employs a biased multiplicative noise mechanism to preserve the non-negativity constraint and the unit-expectation bound under the null hypothesis. The authors demonstrate that the resulting private $e$-values maintain statistical power and recover non-private asymptotic growth rates. Despite these contributions, their work does not provide a characterization under $\mu$-GDP, which is increasingly standard for multi-stage analyses due to its tight composition properties. Furthermore, the proposed noise mechanisms lack a derivation optimized for power maximization under a fixed privacy budget. The work relies on the standard Markov threshold $1/\alpha$ for hypothesis testing and does not address the theoretical challenges of adaptive selection or composition.
	
	Parallel developments have focused on private multiple testing using $p$-values. \citet{dwork2021differentially} proposed a differentially private Benjamini-Hochberg (BH) procedure \citep{benjamini1995controlling} that applies additive noise to log-transformed $p$-values following a randomized peeling step. This forward peeling design introduces complex dependence among the extracted statistics. Consequently, the authors derived conservative bounds for conditional false discovery rate variants rather than the standard metric. 
	
	\citet{xia2023adaptive} extended the peeling strategy to the $\mu$-GDP framework by introducing a mirror peeling algorithm. By employing a transformation that preserves the mirror-conservative property of null $p$-values, they developed an adaptive procedure that achieves finite-sample false discovery rate (FDR) control. However, establishing a precise $\mu$-GDP guarantee across adaptive peeling steps requires careful privacy accounting for the extraction of the most promising statistics, and this aspect does not appear to be fully justified in their analysis. Building upon this work, \citet{wang2025supinferableprivatemultiple} introduced the SUP framework, which utilizes a reversed peeling mechanism to maintain the super-uniformity of noisy statistics. This design yields privacy-parameter-free rejection thresholds and ensures finite-sample FDR and family-wise error rate (FWER) control. Nevertheless, the proof of the privacy guarantee in \citet{wang2025supinferableprivatemultiple} relies on a technical lemma from \citet{xia2023adaptive}, thereby inheriting the aforementioned problem. Consequently, neither framework currently provides a valid $\mu$-GDP guarantee for the selection step.

	%%%%%%%%%%%
\subsection{Our Contributions}
In this article, we present a systematic development of $\mu$-GDP $e$-values, progressing from optimal single-statistic construction to adaptive multiple testing. Our main contributions are as follows:
\begin{enumerate}
	\item \textbf{Canonical construction of $\mu$-GDP $e$-values.} We derive a multiplicative perturbation framework that transforms standard $e$-values into differentially private counterparts. Under minimal assumptions, we prove that the optimal injected noise must follow a Gaussian distribution, which induces a log-normal multiplicative factor. This characterization establishes the resulting mechanism as the canonical construction for private $e$-values under GDP.

	\item \textbf{Calibrated thresholds for single hypothesis testing.}
For $e$-values constructed under the canonical Gaussian mechanism, the standard Markov threshold $1/\alpha$ is conservative. Exploiting the known Gaussian noise distribution, we derive a globally sharp rejection threshold that preserves Type I error control at level $\alpha$. We quantify the resulting power improvement, establishing that the calibrated threshold yields a non-trivial gain in statistical power, with the most substantial recovery occurring for borderline signals near the decision boundary. Furthermore, our asymptotic power analysis reveals a counter-intuitive phenomenon: in the low-sensitivity regime, the calibrated private test can outperform the non-private baseline.
	
	\item \textbf{Improved privacy accounting for aggregated $e$-values.} We analyze the composition properties of multiple private $e$-values and show that our construction tightens the privacy bound for multiplicative aggregation. While general composition theory predicts a $\sqrt{K}\mu$ inflation for $K$ statistics, the structural properties of the independent product ensure that the combined statistic satisfies a sharper privacy guarantee.
	
		\item \textbf{A peeling algorithm for adaptive multiple testing.} We develop a recursive selection procedure that concentrates the privacy budget on the most promising hypotheses. To resolve the privacy leakage inherent in Gaussian arg-max selection, we decouple index extraction from value privatization, employing Gumbel noise for the selection step and our optimal Gaussian mechanism for the final perturbation. We also provide a private data-adaptive rule for choosing the peeling size. This procedure guarantees rigorous $\mu$-GDP across adaptive iterations and outputs a set of valid private $e$-values that can be used directly for downstream multiple testing tasks.
\end{enumerate}

	\subsection{Organization}
	The remainder of this paper is organized as follows. Section~\ref{sec:preliminaries} reviews the basic concepts of differential privacy and $e$-values. Section~\ref{sec:optimal_evalues} establishes the optimal construction of $\mu$-GDP $e$-values, alongside noise-calibrated thresholds, power analysis, and aggregation properties. Section~\ref{sec:depth_framework} develops the $\mu$-GDP e-Peeling algorithm for multiple testing. We evaluate our methodology through numerical simulations in Section~\ref{sec:experiments} and demonstrate its practical utility on a real-world GWAS dataset in Section~\ref{sec:realdata}. Finally, Section~\ref{sec:conclusion} concludes with a discussion.

	%%%%%%%%%
\section{Preliminaries}\label{sec:preliminaries}

\subsection{Differential Privacy Framework}
The formal study of differential privacy originated with the $(\epsilon, \delta)$-DP framework \citep{dwork2006our,dwork2006calibrating}. Let $\mathcal{D}$ and $\mathcal{D}'$ denote datasets that differ by a single record. A randomized mechanism $\mathcal{M}$ satisfies $(\epsilon, \delta)$-differential privacy if, for all measurable sets $S$ in the output space,
\[
\Pr(\mathcal{M}(\mathcal{D}) \in S) \leq e^{\epsilon} \Pr(\mathcal{M}(\mathcal{D}') \in S) + \delta.
\]
While this formulation provides a rigorous worst-case guarantee, the parameters $\epsilon$ and $\delta$ become difficult to track under iterative procedures. Recent developments have shifted toward a more general characterization via $f$-differential privacy \citep{dong2022gdp}, which reframes privacy loss as a hypothesis testing problem. 
Given the mechanism $\mathcal{M}$, let $P = \mathcal{M}(\mathcal{D})$ and $Q = \mathcal{M}(\mathcal{D}')$ denote the output distributions. The trade-off function $T(P, Q): [0, 1] \to [0, 1]$ is defined as
\[
T(P, Q)(\alpha) = \inf \{ \beta_{\phi} : \alpha_{\phi} \leq \alpha \},
\]
where the infimum is taken over all measurable rejection rules $\phi$, and $\alpha_{\phi} = \mathbb{E}_{P}[\phi]$ and $\beta_{\phi} = 1 - \mathbb{E}_{Q}[\phi]$ represent the Type I and Type II errors for testing the following pair of hypotheses
\begin{equation*}
	H_0: \text{The underlying dataset is } \mathcal{D} \quad \text{vs.} \quad H_1: \text{The underlying dataset is } \mathcal{D}'.
\end{equation*}
The function $T(P, Q)(\alpha)$ characterizes the minimum achievable Type II error at any significance level $\alpha$, quantifying the statistical difficulty of distinguishing $P$ from $Q$ from a single draw.
A mechanism $\mathcal{M}$ satisfies $f$-differential privacy if $T(\mathcal{M}(\mathcal{D}), \mathcal{M}(\mathcal{D}')) \geq f$ for all neighboring datasets. The classical $(\epsilon, \delta)$-DP formulation corresponds to the piecewise linear function $f_{\epsilon, \delta}(\alpha) = \max\{0, 1 - \delta - e^{\epsilon}\alpha, e^{-\epsilon}(1 - \delta - \alpha)\}$, establishing $f$-DP as a strict generalization.

Within this family, Gaussian differential privacy ($\mu$-GDP) has become the standard accounting mechanism for modern statistical procedures. It is a parametric specialization indexed by a single parameter $\mu \geq 0$. Let $\Phi$ denote the standard normal cumulative distribution function. The function $G_{\mu}(\alpha) = \Phi(\Phi^{-1}(1 - \alpha) - \mu)$ represents the trade-off function for distinguishing $\mathcal{N}(0, 1)$ from $\mathcal{N}(\mu, 1)$. A mechanism $\mathcal{M}$ satisfies $\mu$-GDP if it is $G_{\mu}$-DP. This formulation yields a \textit{general GDP composition theorem}: the sequential, possibly adaptive, application of mechanisms satisfying $\mu_k$-GDP guarantees $\sqrt{\sum_k \mu_k^2}$-GDP\citep[Corollary~3.3]{dong2022gdp}. The family of GDP guarantees aligns naturally with the central limit behavior of iterative algorithms. We adopt $\mu$-GDP as the privacy accounting framework throughout this work.

\subsection{$e$-value}
$e$-values have recently emerged as a powerful alternative to classical $p$-values for modern statistical inference.
Consider testing a null hypothesis $H_0$. An \emph{$e$-value} is a nonnegative random variable $E$ with $\mathbb{E}_{H_0}[E] \leq 1$ \citep{ramdas2025hypothesis}. By Markov's inequality, rejecting $H_0$ when $E \geq 1/\alpha$ guarantees Type~I error control at level $\alpha$. The reciprocal $1/E$ serves as a $p$-value, and any $p$-value can be transformed into an $e$-value through calibration, though some power may be sacrificed.

$e$-values serve as structural building blocks for a broad class of modern inferential procedures \citep{testingbybetting,ramdas2025hypothesis}. They underpin universal inference frameworks that yield exact confidence regions without relying on asymptotic approximations \citep{wasserman2020universal} and enable anytime-valid sequential testing, where evidence may be monitored continuously without compromising error guarantees \citep{safetesting,henzi2022valid,hao2024values}. Furthermore, they facilitate post-hoc valid inference, permitting data-dependent significance levels to be selected after observation \citep{koning2023post}. In multiple testing, they support rigorous error control under arbitrary dependence \citep{wang2022false,lee2024boosting,zhang2026generalized}. Beyond traditional statistics, their utility has recently expanded into modern machine learning, providing robust uncertainty quantification and decision making for sequential tasks \citep{zhang2025egai, gauthier2025values}.

	%%%%%%%%%

	\section{Optimal \texorpdfstring{$\mu$}{mu}-GDP $e$-values}\label{sec:optimal_evalues}
	
		With these preliminaries in place, we develop the theory of $\mu$-GDP $e$-values from the ground up. We first characterize the canonical noise mechanism for private $e$-values. Using the resulting Gaussian noise distribution, we then derive a noise-calibrated threshold for single hypothesis testing. We accompany this threshold calibration with a power analysis that separates two effects: the power recovered by replacing the standard Markov threshold with the calibrated threshold, and the net power change relative to the original non-private test. Finally, we analyze the aggregation of multiple private $e$-values to establish statistical validity and privacy composition guarantees.
	
	\subsection{Canonical Construction of $\mu$-GDP $e$-values}\label{sec:3.1}
	In the differential privacy literature, many standard mechanisms privatize a statistic through additive noise. However, direct additive perturbation is unsuitable for $e$-values, as it may violate the non-negativity and unit-expectation constraints. Following \citet{csillag2026differentially}, we employ a multiplicative perturbation framework, which corresponds to additive noise in the logarithmic domain. Let $E(\mathcal{D})$ be a valid non-private $e$-value. We define the private $e$-value as
\[
	E^{\mathrm{DP}}(\mathcal{D}) = E(\mathcal{D}) e^{-\xi},
	\]
	where $\xi$ is an independent noise variable. To derive privacy-preserving noise, we define the sensitivity of the $e$-value construction as
\[
	\Delta = \sup_{\mathcal{D} \sim \mathcal{D}'} \left| \log E(\mathcal{D}) - \log E(\mathcal{D}') \right|.
	\]
	This metric bounds the maximal change in the logarithmic scale across neighboring datasets and governs the required noise magnitude. The following theorem characterizes the exact family of noise distributions that achieve a tight $\mu$-GDP bound under natural regularity conditions.
	
	\begin{theorem} \label{thm:noise_shape}
		Consider the private $e$-value construction $E^{\mathrm{DP}}(\mathcal{D}) = E(\mathcal{D})e^{-\xi}$. Suppose the noise variable $\xi$ is symmetric with respect to its expectation and possesses a log-concave density. Further assume that the mechanism fully exhausts the privacy budget, meaning
\[
		\inf_{\mathcal{D} \sim \mathcal{D}'} T\left(E^{\mathrm{DP}}(\mathcal{D}), E^{\mathrm{DP}}(\mathcal{D}')\right)(\alpha) = G_{\mu}(\alpha), \quad \forall \alpha \in (0,1).
		\]
		Then $\xi$ must follow a normal distribution with variance $\sigma^2 = \Delta^2/\mu^2$, i.e., $\xi \sim \mathcal{N}(\tau, \Delta^2/\mu^2)$ for some $\tau \in \mathbb{R}$.
	\end{theorem}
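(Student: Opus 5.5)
Passing to the logarithmic scale reduces the statement to a location-family problem. Since $x\mapsto\log x$ is a bijection on $(0,\infty)$ and trade-off functions are invariant under measurable bijections of the output space, we have $T(E^{\mathrm{DP}}(\mathcal D),E^{\mathrm{DP}}(\mathcal D'))=T(L-\xi,L'-\xi)$ with $L=\log E(\mathcal D)$ and $L'=\log E(\mathcal D')$. Let $F$ be the law of $-\xi$. Then this trade-off function equals $T(F,F(\cdot-\delta))$ with $\delta=L'-L$, $|\delta|\le\Delta$; symmetry of $\xi$ about its mean lets us take $\delta\ge0$ without loss. The plan has three steps: reduce the infimum over neighbours to the worst shift $\Delta$, show that $T(F,F(\cdot-\Delta))=G_\mu$ forces $F$ to be Gaussian, and read off the variance.

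For the first step, take $\xi$ with log-concave density $g$. The shift family $g(x-\delta)$ then has monotone likelihood ratio: for $\delta_2>\delta_1$, the ratio $g(x-\delta_2)/g(x-\delta_1)$ is nondecreasing in $x$ because $\log g$ is concave. Hence the optimal tests are one-sided threshold tests, and the trade-off function is
\[
T_\delta(\alpha)=\bar G^{-1}\bigl(\ldots\bigr)\quad\text{more concretely}\quad T_\delta(\alpha)=G\bigl(G^{-1}(1-\alpha)-\delta\bigr),
\]
where $G$ is the CDF of the centred noise. This expression is pointwise decreasing in $\delta$, so the infimum over neighbouring datasets is attained, or approached, at $\delta=\Delta$. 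The hypothesis therefore says that $G(G^{-1}(1-\alpha)-\Delta)=\Phi(\Phi^{-1}(1-\alpha)-\mu)$ for all $\alpha$. Implicitly $\Delta$ must be attained or approximated, and I will assume it is attained, noting that a limiting argument covers the supremum case.

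For the second step, I expect the main obstacle to be converting this functional equation into Gaussianity. Set $h=\Phi^{-1}\circ G$, which is continuous and strictly increasing on the support. Substituting $u=G^{-1}(1-\alpha)$ gives $h(u-\Delta)=h(u)-\mu$ for all $u$, so $h(u)=\frac{\mu}{\Delta}u+p(u)$ with $p$ a $\Delta$-periodic function. A single shift alone does not force $p$ to be constant, and here log-concavity and symmetry must do the work. Symmetry gives $G(-u)=1-G(u)$, so $h$ is odd and therefore $p$ is odd. For log-concavity, the density is $g=\phi(h)\,h'$, and
\[
\log g=-\tfrac12 h^2+\log h' + \text{const}.
\]
Suppose $p$ is nonconstant. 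Then $h'=\mu/\Delta+p'$ oscillates periodically; if $p'$ is not smooth, use the distributional sense or left/right derivatives. I will then argue that $(\log g)''=-(h')^2-h\,h''+(\log h')''$ must be $\le0$ everywhere, yet for large $|u|$ the term $-h\,h''\approx -\frac{\mu}{\Delta}u\,p''(u)$ takes both signs with magnitude growing linearly in $u$ unless $p''\equiv0$. This would give a contradiction. So $p$ is affine and periodic, hence constant, and oddness gives $p\equiv0$. I will do this carefully with one-sided derivatives, since a log-concave density is only known to be locally Lipschitz on the interior of its support. An alternative route is that log-concavity makes $-\log g$ convex, and convexity is incompatible with the periodic oscillation $\log h'$ superposed on a quadratic whose curvature term $h\,h''$ grows; this route may be cleaner.

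For the third step, once $h(u)=\frac{\mu}{\Delta}u$ we get $G(u)=\Phi(\mu u/\Delta)$. So the centred noise is $\mathcal N(0,\Delta^2/\mu^2)$, and $\xi\sim\mathcal N(\tau,\Delta^2/\mu^2)$ with $\tau=\mathbb E\xi$ arbitrary, since a location shift does not affect the trade-off function. Finally, I will check that the support of $G$ must be all of $\mathbb R$: a bounded support would make $T_\Delta(\alpha)=0$ for small $\alpha$, contradicting $G_\mu>0$. This justifies defining $h$ on $\mathbb R$.
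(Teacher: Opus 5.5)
Your proposal is correct and follows essentially the paper's route. Both reduce to the worst shift $\Delta$ via log-concavity and symmetry, force the support to be all of $\mathbb{R}$, and write $\Phi^{-1}\circ F(x)=(\mu/\Delta)x+P(x)$ with $P$ $\Delta$-periodic. Both then observe that shifting by $n\Delta$ adds a term linear in $n$ to the log-density, which concavity cannot tolerate unless $P$ is constant.

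The paper runs that last step with forward differences rather than second derivatives. It uses $V(x+n\Delta)=V(x)-n\mu H(x)-\tfrac12 n^2\mu^2$ for the log-density $V$ and lets $n\to\pm\infty$ in $V(a+\delta+n\Delta)-V(a+n\Delta)\ge V(b+\delta+n\Delta)-V(b+n\Delta)$, which sidesteps the $h''$ regularity issue you flag.

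One small slip: bounded support forces $T_\Delta(\alpha)=0$ for $\alpha$ near $1$, not for small $\alpha$.
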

	
	\begin{remark}
		The assumptions in Theorem~\ref{thm:noise_shape} are natural within the $\mu$-GDP framework. The symmetry condition preserves the inherent structure of the hypothesis testing interpretation of differential privacy, where distinguishing neighboring datasets is a symmetric decision problem. The log-concavity requirement ensures unimodality and regularity of the noise density, which aligns with standard statistical inference practices \citep{dong2022gdp}. The exact exhaustion condition identifies the unique noise distribution family that achieves the privacy bound.
	\end{remark}
	
	%Theorem~\ref{thm:noise_shape} identifies the Gaussian family as the unique noise distribution satisfying the privacy constraints.

	Theorem~\ref{thm:noise_shape} identifies the Gaussian family as the unique noise distribution satisfying the privacy constraints within the class of symmetric, log-concave mechanisms that exactly exhaust the $\mu$-GDP bound.
	The next theorem determines the optimal choice of $\tau$ that maximizes both statistical power and expected log-growth while ensuring $\mathbb{E}_{H_0}[E^{\mathrm{DP}}(\mathcal{D})] \leq 1$. The expected log-growth criterion is commonly used to formalize log-optimality for $e$-values \citep{kelly1956new,waudby2025universal}.

	\begin{theorem}\label{thm:optimal_expectation}
		Let $\xi \sim \mathcal{N}(\tau, \Delta^2/\mu^2)$. Under the validity constraint  $\mathbb{E}_{H_0}[ E^{\mathrm{DP}}(\mathcal{D}) ] \leq 1$, the mean parameter $\tau = \Delta^2/(2\mu^2)$ simultaneously maximizes the expected log-growth $\mathbb{E}_{H_1}[\log E^{\mathrm{DP}}(\mathcal{D})]$ and the rejection probability $\mathbb{P}_{H_1}(E^{\mathrm{DP}}(\mathcal{D}) \geq c)$ for any fixed threshold $c > 0$. 
	\end{theorem}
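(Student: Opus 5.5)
The proof is a monotonicity argument in $\tau$. Write $\sigma^2=\Delta^2/\mu^2$. First we turn the validity constraint into an explicit bound on $\tau$, and then we check that both objectives are decreasing in $\tau$. Then the smallest feasible $\tau$ is optimal for both.

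\textbf{Step 1 (feasibility).} Since $\xi$ is independent of the data, for any data-generating distribution under $H_0$ we have
\[
\mathbb{E}_{H_0}[E^{\mathrm{DP}}(\mathcal{D})]=\mathbb{E}_{H_0}[E(\mathcal{D})]\,\mathbb{E}[e^{-\xi}]=\mathbb{E}_{H_0}[E(\mathcal{D})]\,e^{-\tau+\sigma^2/2},
\]
using the Gaussian moment generating function. The validity constraint must hold for every valid $e$-value $E$, including those with $\mathbb{E}_{H_0}[E]=1$ (the worst case; otherwise one can argue with $\sup_{H_0}\mathbb{E}[E]=1$). So it holds if and only if $e^{-\tau+\sigma^2/2}\le 1$, that is, $\tau\ge \sigma^2/2=\Delta^2/(2\mu^2)$. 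I expect the main subtlety to be here: I must state that the constraint is imposed uniformly over valid base $e$-values, or at least for an $E$ that is exact under the null. Otherwise, smaller $\tau$ could be feasible for a particular conservative $E$. I will phrase the result accordingly.

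\textbf{Step 2 (log-growth).} We have
\[
\mathbb{E}_{H_1}[\log E^{\mathrm{DP}}]=\mathbb{E}_{H_1}[\log E]-\mathbb{E}[\xi]=\mathbb{E}_{H_1}[\log E]-\tau,
\]
which is strictly decreasing in $\tau$. It is therefore maximized over the feasible set $\{\tau\ge\sigma^2/2\}$ at $\tau=\sigma^2/2$. If $\mathbb{E}_{H_1}[\log E]=-\infty$ the statement is trivial.

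\textbf{Step 3 (power).} Write $\xi=\tau+\sigma Z$ with $Z\sim\mathcal{N}(0,1)$ independent of the data. Then
\[
\mathbb{P}_{H_1}(E^{\mathrm{DP}}\ge c)=\mathbb{P}_{H_1}\bigl(\log E-\log c-\tau\ge \sigma Z\bigr)=\mathbb{E}_{H_1}\Bigl[\Phi\Bigl(\tfrac{\log E-\log c-\tau}{\sigma}\Bigr)\Bigr],
\]
with the convention $\Phi(-\infty)=0$ on $\{E=0\}$. For each fixed data realization the integrand is nonincreasing in $\tau$, so the expectation is nonincreasing in $\tau$. It is strictly decreasing whenever $\mathbb{P}_{H_1}(E>0)>0$. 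Hence the rejection probability is maximized at the smallest feasible value $\tau=\Delta^2/(2\mu^2)$, simultaneously for every threshold $c>0$.

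Combining the three steps shows that $\tau=\Delta^2/(2\mu^2)$ maximizes both criteria at once. At this value $\mathbb{E}[e^{-\xi}]=1$, so the validity constraint is tight.
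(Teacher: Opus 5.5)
Your proposal is correct and follows essentially the same route as the paper: the Gaussian moment generating function turns validity into $\tau \ge \Delta^2/(2\mu^2)$, and both the expected log-growth and the rejection probability $\mathbb{E}_{H_1}[\Phi((\log E - \log c - \tau)\mu/\Delta)]$ decrease in $\tau$, so the boundary value is optimal. Your point that the ``if and only if'' in the feasibility step needs the constraint to hold uniformly over valid base $e$-values (or for one with $\mathbb{E}_{H_0}[E]=1$) is a useful precision that the paper's proof leaves implicit.
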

	In particular, taking $c=1/\alpha$ gives the power under the conventional threshold as a special case. Therefore, $\tau = \Delta^2/(2\mu^2)$ maximizes the power of the conventional-threshold test.
	
Theorems \ref{thm:noise_shape} and \ref{thm:optimal_expectation} together establish that injecting $\xi \sim \mathcal{N}(\Delta^2/(2\mu^2), \Delta^2/\mu^2)$ yields the unique log-concave, symmetric mechanism that fully utilizes the $\mu$-GDP budget while preserving validity and maximizing the testing power. Consequently, we refer to it as the canonical construction of a $\mu$-GDP $e$-value. This canonical mechanism serves as the foundation for subsequent theoretical developments, including threshold calibration, power analysis and a peeling algorithm.

	\subsection{The Noise-Calibrated Threshold}\label{sec:3.2}
	
	Given the canonical construction $E^{\mathrm{DP}}(\mathcal{D}) = E(\mathcal{D})e^{-\xi}$ with $\xi \sim \mathcal{N}(\Delta^2/(2\mu^2), \Delta^2/\mu^2)$, we now address the calibration of the rejection threshold for single hypothesis testing. The standard inferential procedure for $e$-values relies on Markov's inequality and rejects when $E^{\mathrm{DP}}(\mathcal{D}) \geq 1/\alpha$. This threshold guarantees $\mathbb{P}_{H_0}(E^{\mathrm{DP}}(\mathcal{D}) \geq 1/\alpha) \leq \alpha$ for any valid $e$-value. However, this argument uses only the $e$-value expectation bound and does not exploit the known Gaussian distribution of the privacy noise $\xi$, resulting in a conservative rejection region that unnecessarily sacrifices statistical power.
	
	%By incorporating the known Gaussian density of $\xi$, we can derive a strictly smaller threshold $c^* < 1/\alpha$ that maintains Type I error control. The following result characterizes the globally sharp threshold for the canonical $\mu$-GDP $e$-value.
	By incorporating the known Gaussian density of $\xi$, we derive a strictly smaller threshold $c^* < 1/\alpha$ that maintains Type I error control, and then analyze its consequences for statistical power.
	\begin{theorem} \label{thm:noise-calibrated_threshold}
		Let $E^{\mathrm{DP}}(\mathcal{D}) = E(\mathcal{D})e^{-\xi}$ be constructed with $\xi \sim \mathcal{N}(\Delta^2/(2\mu^2), \Delta^2/\mu^2)$. For any significance level $\alpha \in (0, 1)$, the globally sharp threshold $c^*$ satisfying $\mathbb{P}_{H_0}(E^{\mathrm{DP}}(\mathcal{D}) \geq c^*) \leq \alpha$ is given by
\[
		c^* = \begin{cases}
			\displaystyle \frac{1}{\alpha}\Phi(z^*) \exp\left(-\frac{\Delta^2}{2\mu^2} - \frac{\Delta}{\mu} z^*\right) & \text{if } \alpha \leq \Phi(z^*) \\
			\displaystyle \exp\left(-\frac{\Delta^2}{2\mu^2} - \frac{\Delta}{\mu} \Phi^{-1}(\alpha)\right) & \text{if } \alpha > \Phi(z^*)
		\end{cases}
		\]
		where $z^*$ is the unique solution to $\phi(z)/\Phi(z) = \Delta/\mu$.
	\end{theorem}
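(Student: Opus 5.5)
The plan is to turn "globally sharp" into a worst-case optimization over all valid non-private $e$-values and solve it as a concave-envelope problem. Write $\sigma=\Delta/\mu$ and $\xi=\sigma^2/2+\sigma Z$ with $Z\sim\mathcal N(0,1)$ independent of $E=E(\mathcal D)$. Conditioning on $E=e$,
\[
g_c(e):=\mathbb P\bigl(e\,e^{-\xi}\ge c\bigr)=\Phi\!\left(\frac{\log(e/c)-\sigma^2/2}{\sigma}\right),\qquad g_c(0)=0 .
\]
So the quantity to control is $\sup\{\mathbb E[g_c(E)] : E\ge 0,\ \mathbb E[E]\le 1\}$. The threshold $c^*$ is the smallest $c$ for which this supremum is at most $\alpha$. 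Since $g_c$ is decreasing in $c$ pointwise, the supremum is monotone in $c$, and it suffices to solve $\sup=\alpha$.

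\textbf{Step 1: reduce to the concave majorant.} Because $g_c$ is nondecreasing, the supremum over laws on $[0,\infty)$ with mean at most $1$ equals $\hat g_c(1)$, where $\hat g_c$ is the least concave majorant of $g_c$ on $[0,\infty)$. The upper bound follows from Jensen's inequality applied to $\hat g_c$. Attainment, or approximation, comes from two-point laws supported on $\{0,e\}$, or the point mass at $1$.

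In the variable $t=\log e$, $g_c$ is a normal CDF. As a function of $e$ it is convex near $0$ and concave beyond a single inflection point. Hence $\hat g_c$ is the chord from $(0,0)$ tangent to the graph at some point $e^\dagger$, followed by $g_c$ itself for $e\ge e^\dagger$. This gives
\[
\hat g_c(1)=\max\Bigl\{g_c(1),\ \sup_{e\ge 1} g_c(e)/e\Bigr\}.
\]
The second term is realized by putting mass $1/e$ at $e$ and the rest at $0$.

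\textbf{Step 2: locate the tangent point.} Parametrize $\log e=\log c+\sigma^2/2+\sigma z$. Then
\[
\frac{g_c(e)}{e}=\frac{\Phi(z)e^{-\sigma z}}{c\,e^{\sigma^2/2}} .
\]
Maximizing $\Phi(z)e^{-\sigma z}$ gives the first-order condition $\phi(z)/\Phi(z)=\sigma$. The inverse Mills ratio $\phi/\Phi$ decreases strictly from $+\infty$ to $0$, which yields uniqueness of $z^*$ and shows this is a global maximum. I expect this step, together with the careful justification of the S-shape and envelope structure in Step 1, to be the main technical point.

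The tangent point is $e^*=c\,e^{\sigma^2/2+\sigma z^*}$. The chord regime applies exactly when $e^*\ge 1$.

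\textbf{Step 3: solve $\hat g_c(1)=\alpha$ in each regime.} In the chord regime, setting $\Phi(z^*)e^{-\sigma^2/2-\sigma z^*}/c=\alpha$ gives the first formula for $c^*$. Substituting this $c^*$, the condition $e^*\ge1$ becomes $\Phi(z^*)/\alpha\ge 1$, that is $\alpha\le\Phi(z^*)$.

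Otherwise $1$ lies in the concave part, so the worst case is the point mass $E\equiv1$. Solving $g_c(1)=\Phi\bigl((-\log c-\sigma^2/2)/\sigma\bigr)=\alpha$ gives $c^*=\exp(-\sigma^2/2-\sigma\Phi^{-1}(\alpha))$, the second formula.

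At $\alpha=\Phi(z^*)$ the two expressions coincide, so $c^*$ is continuous in $\alpha$. In both regimes the bound is attained by an explicit valid $e$-value, so no smaller threshold works; this is the sense in which the threshold is globally sharp.

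Finally, I would check that $c^*<1/\alpha$. In the first case this follows from $\Phi(z^*)e^{-\sigma^2/2-\sigma z^*}<1$, which holds at the optimum because $z=-\infty$ gives a strictly smaller value.
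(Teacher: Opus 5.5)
Your proposal is correct and is essentially the paper's argument. The paper's bound $g_c(x)\le S(c)\,x$ in the regime $\alpha\le\Phi(z^*)$, and its auxiliary function $\widetilde g$ in the other regime (the chord $Sx$ up to $x^*_{c_2}$, then $g_{c_2}$, shown concave via $z^*+\sigma>0$), are exactly your least concave majorant $\hat g_c$. Both arguments use the same first-order condition $\phi(z^*)/\Phi(z^*)=\sigma$, the same Jensen step, and the same extremal $e$-values (the two-point law at the tangent point, and $E\equiv 1$); you simply package both regimes into the single identity $\sup\mathbb{E}[g_c(E)]=\hat g_c(1)$.

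The only flaw is in your closing aside on $c^*<1/\alpha$, which is not part of the theorem. There, the value at $z=-\infty$ says nothing about how large the maximum is. The correct argument is Markov's inequality: $\Phi(z)=\mathbb{P}(e^{-\sigma Z}\ge e^{-\sigma z})<e^{\sigma z+\sigma^2/2}$ for every $z$, which gives $c^*<1/\alpha$ in both branches at once.
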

	
	The term ``globally sharp'' indicates that $c^*$ is the tightest possible constant threshold. Specifically, for any $c < c^*$, there exists a valid $e$-value $E$ such that $\mathbb{P}_{H_0}(E e^{-\xi} \geq c) > \alpha$. This calibrated threshold safely expands the rejection region, providing a direct improvement in power over the standard private Markov-threshold rule without incurring additional privacy costs or compromising error control.

	%Since $c^* < 1/\alpha$, the calibrated threshold expands the rejection region and thereby improves power relative to the standard private threshold.
	 Next, we quantify the power change through two related comparisons. The first comparison is internal to the private procedure: for the same privatized $e$-value $E e^{-\xi}$, replacing the standard Markov threshold $1/\alpha$ by the calibrated threshold $c^*$ adds the rejection region $\{c^* \leq E e^{-\xi} < 1/\alpha\}$. We refer to the probability of this event under $H_1$ as the \textit{calibration benefit}. Define the power improvement function
\[
G(x) \triangleq \mathbb{P}_{H_1}\bigl(c^* \leq x e^{-\xi} < 1/\alpha\bigr),
\]
which represents the additional rejection probability contributed by the expanded interval $[c^*, 1/\alpha)$ when the non-private $e$-value equals $x$. By the law of total expectation, the overall calibration benefit equals $\mathbb{E}_{H_1}[G(E)]$. The function $G(x)$ vanishes as $x \to 0$ and $x \to \infty$, indicating that the threshold calibration primarily affects signals near the decision boundary. The following result identifies the maximum possible value of $G(x)$ and the specific signal strength at which it occurs.
	
	\begin{proposition}\label{prop:g_max}
		For any significance level $\alpha \in (0, 1)$, the power improvement function $G(x)$ achieves its maximum at $x_{\mathrm{opt}} = \exp\left\{\Delta^2/(2\mu^2) + [\log(1/\alpha) + \log c^*]/2\right\}$. The maximum power improvement is given by
\[
		G_{\max} \triangleq G(x_{\mathrm{opt}}) = 2\Phi\left(\frac{\mu(\log(1/\alpha) - \log c^*)}{2\Delta}\right) - 1.
		\]
	\end{proposition}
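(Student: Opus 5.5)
The plan is to rewrite $G(x)$ as a difference of two normal distribution functions and maximize it by a one-variable calculus argument. Write $\sigma=\Delta/\mu$ and $\tau=\Delta^2/(2\mu^2)$, so that $\xi=\tau+\sigma Z$ with $Z\sim\mathcal{N}(0,1)$ independent of $E$. The noise distribution does not depend on the hypothesis, so conditioning on $E=x$ gives
\[
G(x)=\mathbb{P}\bigl(\log c^*\le \log x-\tau-\sigma Z<\log(1/\alpha)\bigr).
\]
This event is equivalent to
\[
\frac{\log x-\tau-\log(1/\alpha)}{\sigma}<Z\le\frac{\log x-\tau-\log c^*}{\sigma}.
\]
Set $u=\log x-\tau$, $a=\log c^*$ and $b=\log(1/\alpha)$. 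Then
\[
G(x)=\Phi\Bigl(\frac{u-a}{\sigma}\Bigr)-\Phi\Bigl(\frac{u-b}{\sigma}\Bigr),
\]
and $a<b$ because Theorem~\ref{thm:noise-calibrated_threshold} gives $c^*<1/\alpha$. The limits $G(x)\to0$ as $x\to0$ or $x\to\infty$ are then immediate.

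Next, maximize $g(u)=\Phi((u-a)/\sigma)-\Phi((u-b)/\sigma)$ over $u\in\mathbb{R}$. The derivative is
\[
g'(u)=\sigma^{-1}\bigl[\phi((u-a)/\sigma)-\phi((u-b)/\sigma)\bigr].
\]
Since $\phi$ is even and strictly decreasing in $|t|$, $g'(u)=0$ exactly when $|u-a|=|u-b|$. With $a\ne b$, this forces $u=(a+b)/2$. Moreover $g'(u)>0$ for $u<(a+b)/2$ and $g'(u)<0$ for $u>(a+b)/2$, because there $|u-a|<|u-b|$ (respectively $>$). Hence $u^*=(a+b)/2$ is the unique global maximizer. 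Translating back gives
\[
x_{\mathrm{opt}}=\exp\{\tau+(\log(1/\alpha)+\log c^*)/2\},
\]
which is the claimed point.

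Substituting $u^*$, we get $(u^*-a)/\sigma=(b-a)/(2\sigma)$ and $(u^*-b)/\sigma=-(b-a)/(2\sigma)$. Therefore
\[
G_{\max}=\Phi(d)-\Phi(-d)=2\Phi(d)-1,\qquad d=\frac{\mu(\log(1/\alpha)-\log c^*)}{2\Delta}.
\]

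There is no real obstacle here. The only points needing care are the direction of the inequalities when solving for $Z$ (the sign flip from $-\sigma Z$), and the fact that $c^*<1/\alpha$, which guarantees $d>0$ and a nondegenerate maximum.
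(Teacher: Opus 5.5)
Your proof is correct and follows the paper's argument step for step. Both write $G$ as a difference of two normal CDFs in $\log x$, use the evenness of $\phi$ and its monotonicity in $|t|$ to place the unique critical point at the midpoint of $\log c^*$ and $\log(1/\alpha)$ (shifted by $\Delta^2/(2\mu^2)$), and substitute back to obtain $\Phi(d)-\Phi(-d)=2\Phi(d)-1$. Your sign check of $g'$ on each side of the midpoint, together with $c^*<1/\alpha$, confirms that this point is the global maximizer; the paper's proof only imposes the first-order condition.
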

	
	The number $G_{\max}$ characterizes the largest possible gain attainable among all alternative distributions, in the sense that $\mathbb{P}_{H_1}(c^* \le E e^{-\xi} < 1/\alpha) = \mathbb{E}_{H_1}[G(E)] \le G_{\max}$. As the sensitivity $\Delta$ diminishes, this bound approaches 1, demonstrating the asymptotic efficiency of the calibrated threshold.
	While $G_{\max}$ establishes a uniform ceiling, an asymptotic characterization clarifies the precise rate at which the reclaimed power decays under privacy constraints. The following theorem derives this rate for a general alternative distribution.
	
	\begin{theorem}	\label{thm:power_improvement_asymptotic}
		Let $f_E(\cdot)$ denote the probability density function of the $e$-value $E$ under the alternative hypothesis $H_1$. For any fixed privacy parameter $\mu > 0$, as $\Delta \to 0^+$, the calibration benefit satisfies
\[
		\mathbb{P}_{H_1}(c^* \leq E e^{-\xi} < 1/\alpha) \sim  \frac{f_E(1/\alpha)}{\alpha \mu} \Delta\sqrt{-2\log\Delta}.
		\]
	\end{theorem}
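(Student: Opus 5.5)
The plan is to reduce the calibration benefit to the probability that a slightly smoothed version of $\log E$ lands in a shrinking interval, and then to find the length of that interval from the asymptotics of $c^*$. Throughout, write $s=\Delta/\mu$, so that $s\to0^+$ as $\Delta\to0^+$ with $\mu$ fixed. I assume that $f_E$ is continuous and positive at $1/\alpha$ (the statement implicitly needs this). Under these assumptions, the density $g(y)=e^{y}f_E(e^{y})$ of $Y=\log E$ is continuous at $y_0=\log(1/\alpha)$, with $g(y_0)=f_E(1/\alpha)/\alpha$.

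\emph{Step 1: asymptotics of $c^*$.} The map $z\mapsto\phi(z)/\Phi(z)$ decreases from $\infty$ to $0$. So as $s\to0$ the root $z^*$ of $\phi(z)/\Phi(z)=s$ tends to $+\infty$, and $\Phi(z^*)\to1>\alpha$. Hence for small $\Delta$ we are in the first case of Theorem~\ref{thm:noise-calibrated_threshold}, and
\[
\log(1/\alpha)-\log c^* = s z^* + \tfrac{s^2}{2} - \log\Phi(z^*).
\]
Since $\Phi(z^*)\to1$, the defining equation gives $\phi(z^*)\sim s$, hence $z^{*2}/2=\log(1/s)+O(\log z^*)$ and $z^*\sim\sqrt{-2\log s}\sim\sqrt{-2\log\Delta}$. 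For the last term, $-\log\Phi(z^*)\sim 1-\Phi(z^*)\sim\phi(z^*)/z^*\sim s/z^*=o(sz^*)$. Together with $s^2=o(sz^*)$, the interval length satisfies
\[
\delta \triangleq \log(1/\alpha)-\log c^* \sim s\sqrt{-2\log\Delta}=\frac{\Delta}{\mu}\sqrt{-2\log\Delta}.
\]

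\emph{Step 2: reduction to a convolved density.} Let $\xi\sim\mathcal N(s^2/2,s^2)$ be independent of $Y$. Then
\[
\mathbb P_{H_1}\bigl(c^*\le Ee^{-\xi}<1/\alpha\bigr)=\mathbb P\bigl(Y-\xi\in[\log c^*,\,y_0)\bigr)=\int_{\log c^*}^{y_0} g_s(t)\,dt,
\]
where $g_s=g*\varphi_s$ is the density of $Y-\xi$ and $\varphi_s$ is the $\mathcal N(-s^2/2,s^2)$ density. It then suffices to show $\sup_{|t-y_0|\le\delta}|g_s(t)-g(y_0)|\to0$. Given that, the integral equals $\delta\,(g(y_0)+o(1))$, and Step 1 yields the claimed rate $\frac{f_E(1/\alpha)}{\alpha\mu}\Delta\sqrt{-2\log\Delta}$.

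\emph{Step 3: uniform convergence of the smoothed density (main obstacle).} This is where care is needed, because $g$ is only assumed continuous near $y_0$ and not globally bounded. I would split the convolution $g_s(t)=\int g(t-u)\varphi_s(u)\,du$ into two regions.
\begin{itemize}
\item On $|u|\le\eta$, continuity of $g$ on a neighborhood of $y_0$ makes $g(t-u)$ within $\epsilon$ of $g(y_0)$ once $\eta$ and $\delta$ are small.
\item On $|u|>\eta$, I would bound $\varphi_s(u)\le\sup_{|u|>\eta}\varphi_s(u)\le C s^{-1}e^{-\eta^2/(4s^2)}$ for small $s$. Since $\int g=1$, this region contributes at most that bound, which tends to $0$.
\end{itemize}
The mass of $\varphi_s$ on $|u|\le\eta$ tends to $1$, so the first region contributes $g(y_0)+O(\epsilon)$, which gives the needed uniform convergence. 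Thus the essential points are the tail bound $-\log\Phi(z^*)=o(sz^*)$ in Step 1 and this localization argument. The replacement of $\log(\mu/\Delta)$ by $\log(1/\Delta)$ is harmless because $\mu$ is fixed.
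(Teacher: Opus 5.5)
Your proof is correct, but it follows a different route from the paper's. The paper never computes the log-gap $\log(1/\alpha)-\log c^*$ explicitly. It integrates by parts and changes variables to write the benefit as $\int_{\mathbb{R}}[F_E(x_1(t))-F_E(x_2(t))]\phi(t)\,dt$; this is the CDF form $\mathbb{E}[G(y_0+\xi)-G(\log c^*+\xi)]$, with $G$ the distribution function of $\log E$. It then substitutes $\sigma=r_{z^*}$ with $r_z=\phi(z)/\Phi(z)$ and linearizes $F_E$ at $1/\alpha$ uniformly on $|t|\le z$. The region $|t|>z$ is handled by Mills' ratio and an exponential-tilting bound, and the rate comes out of the Gaussian MGF identity $e^{r_z^2}-\Phi(z)e^{-zr_z+r_z^2/2}=zr_z+o(zr_z)$.

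You instead make the gap $\delta\sim sz^*$ the explicit source of the rate and show that the density of $\log E-\xi$ is asymptotically flat across it, so the benefit is ``density at $y_0$ times $\delta$''. This is shorter and needs only one crude tail bound. It also makes clear that the privacy noise acts only through $c^*$, because its scale $s$ is $o(\delta)$.

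The cost is regularity. Your uniform convergence $g_s\to g(y_0)$ needs a version of $f_E$ that is continuous at $1/\alpha$. The paper's linearization uses only differentiability of $F_E$ at $1/\alpha$, plus, as in your proof, $f_E(1/\alpha)>0$ so that $\sim$ is meaningful. You can recover that generality inside your framework by working with the CDF instead of the density. On $\{|\xi|\le\eta\}$, differentiability gives $G(y_0+\xi)-G(y_0-\delta+\xi)=g(y_0)\delta+o(\delta+|\xi|)$. Since $\mathbb{E}|\xi|=O(s)=o(\delta)$ and $\mathbb{P}(|\xi|>\eta)$ is exponentially small in $s^{-2}$, the conclusion follows.

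Two small points: in Step 3, continuity at the single point $y_0$ already suffices, so you do not need it on a neighborhood. In Step 1, the $O(\log z^*)$ term is in fact $O(1)$.
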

%	Theorem~\ref{thm:power_improvement_asymptotic} quantifies the gain at a rate of $\mathcal{O}\big(\Delta \sqrt{-\log \Delta}\big)$ from replacing the standard private threshold by the calibrated threshold. This is an improvement within the private procedure, not yet a comparison with the non-private test.

%	The second comparison is between the calibrated private test and the original non-private $e$-value test. Because the injected noise can both create and remove rejections relative to the non-private rule, this comparison requires a separate decomposition.
Theorem~\ref{thm:power_improvement_asymptotic} characterizes the power improvement obtained by calibrating the rejection threshold for the already privatized $e$-value. It does not compare the resulting private test with the original non-private test. 

Now, we are in a position to give the second comparison, between the calibrated private test and the test based on the original $e$-value.
Compared with the non-private rule $E \geq 1/\alpha$, the calibrated private rule $Ee^{-\xi}\geq c^*$ can change decisions in two directions: it may reject when the non-private rule does not, and it may fail to reject when the non-private rule rejects. These two types of decision changes yield the following decomposition of the net power difference before and after adding privacy noise:
\[
\mathbb{P}_{H_1}(E e^{-\xi} \geq c^*) - \mathbb{P}_{H_1}(E \geq 1/\alpha) = \mathbb{P}_{H_1}(E e^{-\xi} \geq c^*, E < 1/\alpha) - \mathbb{P}_{H_1}(E \geq 1/\alpha, E e^{-\xi} < c^*).
\]
The first term is the \emph{noise-induced discovery},
corresponding to alternatives that do not cross the non-private threshold but are rejected after privatization. 
The second term is the \emph{noise-induced cost},
corresponding to alternatives that would be rejected by the non-private $e$-value test but are missed after noise injection. 
The following two propositions analyze the two sides of this decomposition.
		\begin{proposition} \label{prop:gain}
    Let $f_E(\cdot)$ denote the probability density function of the $e$-value $E$ under the alternative hypothesis $H_1$. For any fixed privacy parameter $\mu>0$, as the sensitivity $\Delta\to0^+$, the noise-induced discovery satisfies
\[
    \mathbb{P}_{H_1}\left(E e^{-\xi} \ge c^* \text{ and } E < \frac{1}{\alpha}\right) \sim \frac{f_E(1/\alpha)}{\alpha \mu} \Delta\sqrt{-2\log\Delta}.
    \]
\end{proposition}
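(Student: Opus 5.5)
We start from the explicit form of $c^*$ in Theorem~\ref{thm:noise-calibrated_threshold}. Write $s = \Delta/\mu$, so that $\xi = s^2/2 + sZ$ with $Z\sim\mathcal{N}(0,1)$ independent of $E$. For fixed $\alpha$ and $s\to 0^+$, the solution $z^*$ of $\phi(z)/\Phi(z) = s$ satisfies $z^*\to+\infty$. In this regime $\Phi(z^*)\to 1 > \alpha$, so the second branch applies:
\[
\log c^* = -s^2/2 - s\Phi^{-1}(\alpha).
\]
The two branches are equivalent near $z^*\to-\infty$, so this identification of the active branch needs care. In the first branch,
\[
\log(1/\alpha) - \log c^* \approx s z^* + \log(\alpha/\Phi(z^*)),
\]
and $z^*$ solves $\phi(z)\approx s$, giving $z^* \sim \sqrt{-2\log s}$. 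Rather than committing prematurely, the first step is to determine which branch holds as $s\to0$ by checking the condition $\alpha\le\Phi(z^*)$. Since the target rate $\Delta\sqrt{-2\log\Delta}$ involves $\sqrt{-2\log\Delta}$, we expect the gap to be
\[
\delta \coloneqq \log(1/\alpha) - \log c^* \sim s\sqrt{-2\log s} \sim \tfrac{\Delta}{\mu}\sqrt{-2\log\Delta}.
\]
This is the first-branch behaviour with $z^*\sim\sqrt{-2\log s}$, and we verify it directly from the definition of $z^*$. The key preliminary lemma is therefore $\delta \sim s z^*$ with $z^* \sim\sqrt{-2\log s}$. Since $\log s = \log\Delta - \log\mu$ and $\mu$ is fixed, $\sqrt{-2\log s}\sim\sqrt{-2\log\Delta}$.

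Next, write $L=\log E$ and $\ell_0=\log(1/\alpha)$. The event of interest is
\[
\{L - \xi \ge \ell_0 - \delta,\ L<\ell_0\} = \{\ell_0 - \delta + \xi \le L < \ell_0\}.
\]
Conditioning on $\xi$ gives
\[
\mathbb{P} = \mathbb{E}_\xi\Bigl[ F_L(\ell_0) - F_L\bigl(\min(\ell_0,\ \ell_0-\delta+\xi)\bigr)\Bigr],
\]
where $F_L$ is the CDF of $\log E$, with density $f_L(\ell) = e^{\ell} f_E(e^{\ell})$. At $\ell_0$ this gives $f_L(\ell_0)=f_E(1/\alpha)/\alpha$, which matches the prefactor. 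Assuming $f_E$ is continuous at $1/\alpha$, a first-order expansion gives
\[
\mathbb{P} \approx f_L(\ell_0)\,\mathbb{E}\bigl[(\delta-\xi)_+\bigr].
\]

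It remains to show $\mathbb{E}[(\delta-\xi)_+]\sim\delta$. We have $\xi = s^2/2+sZ$, so $\xi=O_p(s)$, while $\delta/s\sim z^*\to\infty$. Hence $\mathbb{E}[(\delta-\xi)_+] = \delta - \mathbb{E}[\xi] + \mathbb{E}[(\xi-\delta)_+]$. Here $\mathbb{E}\xi = s^2/2 = o(\delta)$, and the term $\mathbb{E}(\xi-\delta)_+ = s\,\mathbb{E}(Z - \delta/s + s/2)_+$ is super-polynomially small relative to $s$. Therefore
\[
\mathbb{P}\sim \frac{f_E(1/\alpha)}{\alpha}\,\delta \sim \frac{f_E(1/\alpha)}{\alpha\mu}\Delta\sqrt{-2\log\Delta}.
\]
The error from linearizing $F_L$ is $o(\delta)$ because the relevant window $[\ell_0-\delta+\xi,\ell_0]$ shrinks: split on $|\xi|\le\delta$, use continuity of $f_L$ on that set, and bound the complement by Gaussian tails.

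The main obstacle is the first step: pinning down the exact asymptotics of $\delta = \log(1/\alpha)-\log c^*$, including which branch of $c^*$ is active and the expansion $z^*\sim\sqrt{-2\log s}$ from $\phi(z^*)/\Phi(z^*)=s$. Once $\delta$ is known to leading order, the rest is a routine dominated-convergence argument. Alternatively, one can reuse the computation already done for Theorem~\ref{thm:power_improvement_asymptotic}. There, the set $\{c^*\le Ee^{-\xi}<1/\alpha\}$ has the same leading order, and the symmetric-difference terms involving $\{E\ge 1/\alpha\}$ versus $\{Ee^{-\xi}<1/\alpha\}$ are $O(s)=o(\delta)$. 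This gives a short second route via that theorem.
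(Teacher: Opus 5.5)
Your main reduction is correct and is the paper's argument written in log coordinates. Conditioning on $\xi$ gives exactly the paper's integral $\int_{-\infty}^{t_0}[F_E(1/\alpha)-F_E(x(t))]\phi(t)\,dt$; the paper reaches it by integrating by parts in $x$ and substituting the standardized noise $t$. Your linearization of $F_L$ at $\ell_0$, together with $\mathbb{E}[(\delta-\xi)_+]=\delta-s^2/2+\mathbb{E}[(\xi-\delta)_+]$, replaces the paper's Taylor expansion of $F_E$ at $1/\alpha$ and its moment-generating-function step $1-\Phi(z)e^{-zr_z+r_z^2/2}=zr_z+o(zr_z)$.

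The gap is the step you call the main obstacle, the asymptotics of $\delta=\log(1/\alpha)-\log c^*$. You never derive $\delta\sim s\sqrt{-2\log s}$; you infer it from the target rate. Moreover, the two concrete claims you make about $\delta$ are wrong.
\begin{itemize}
\item Since $\Phi(z^*)\to1>\alpha$, the condition $\alpha\le\Phi(z^*)$ holds, so the \emph{first} branch of Theorem~\ref{thm:noise-calibrated_threshold} is active, not the second. This matters decisively: under the second branch, $\log c^*=-s^2/2-s\Phi^{-1}(\alpha)\to0$, so $\delta\to\log(1/\alpha)$. The probability would then tend to the positive constant $F_E(1/\alpha)-F_E(1)$ instead of vanishing.
\item Your first-branch formula $sz^*+\log(\alpha/\Phi(z^*))$ carries a spurious $\log\alpha$ and tends to $\log\alpha\neq0$. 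The correct expression is $\delta=sz^*+s^2/2-\log\Phi(z^*)$.
\end{itemize}

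The repair is short. The defining equation gives $\phi(z^*)=s\Phi(z^*)\le s$, so Mills' ratio yields $1-\Phi(z^*)\le\phi(z^*)/z^*\le s/z^*$. Hence $-\log\Phi(z^*)\le(1-\Phi(z^*))/\Phi(z^*)\le 2s/z^*=o(sz^*)$. Since also $s^2=o(sz^*)$, you get $\delta\sim sz^*$. Taking logarithms in $\phi(z^*)=s\Phi(z^*)$ gives $(z^*)^2/2=-\log s-\log\sqrt{2\pi}-\log\Phi(z^*)$, so $z^*\sim\sqrt{-2\log s}\sim\sqrt{-2\log\Delta}$. Here $z^*\to+\infty$ because $\phi/\Phi$ decreases to $0$. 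This is what the paper establishes at the start of the proof of Theorem~\ref{thm:power_improvement_asymptotic}.

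One minor correction in your second half: $\mathbb{E}[(\xi-\delta)_+]$ is not super-polynomially small relative to $s$. Write $\mathbb{E}[(\xi-\delta)_+]=s\,\mathbb{E}[(Z-a)_+]$ with $a=\delta/s-s/2=z^*-\log\Phi(z^*)/s\approx z^*+1/z^*$. Since $\phi(z^*)\approx s$, this term is of order $s^2/(z^*)^2\asymp s^2/(-\log s)$. In fact $f_L(\ell_0)\mathbb{E}[(\xi-\delta)_+]$ is the linearized noise-induced cost of Proposition~\ref{prop:power_loss}. It is still $o(\delta)$, so your conclusion survives.

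Your alternative route through Theorem~\ref{thm:power_improvement_asymptotic} is also valid. The events $\{E<1/\alpha\le Ee^{-\xi}\}$ and $\{Ee^{-\xi}<1/\alpha\le E\}$ each have probability $O(s)=o(sz^*)$ by the same linearization. That theorem already contains the first-branch computation of $\delta$.
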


\begin{proposition}	\label{prop:power_loss}
	Under the conditions of Proposition~\ref{prop:gain}, the noise-induced cost satisfies
\[
	\mathbb{P}_{H_1}\left(E \geq \frac{1}{\alpha} \text{ and } E e^{-\xi} < c^*\right) \sim \frac{f_E(1/\alpha)}{2 e \alpha \mu^2} \cdot \frac{\Delta^2}{-\log \Delta}.
	\]
\end{proposition}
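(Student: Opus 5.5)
The plan is to reduce everything to a one-dimensional Gaussian tail integral in the log scale and then expand the threshold $c^*$ of Theorem~\ref{thm:noise-calibrated_threshold} carefully as $\Delta\to0^+$. Write $\sigma=\Delta/\mu$ and $\xi=\sigma^2/2+\sigma Z$ with $Z\sim\mathcal N(0,1)$ independent of $E$. Let $L=\log E$, with density $f_L(\ell)=e^{\ell}f_E(e^{\ell})$, so that $f_L(\log(1/\alpha))=f_E(1/\alpha)/\alpha$. Set $\delta=\log(1/\alpha)-\log c^*>0$. The event $\{E\ge 1/\alpha,\ Ee^{-\xi}<c^*\}$ is $\{t\ge0,\ \sigma Z>t+\delta-\sigma^2/2\}$ with $t=L-\log(1/\alpha)$. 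Conditioning on $E$ and substituting $t=\sigma s$ gives
\[
\mathbb P_{H_1}(\cdot)=\sigma\int_0^\infty f_L\bigl(\log(1/\alpha)+\sigma s\bigr)\,\bar\Phi(s+a)\,ds,\qquad a=\frac{\delta}{\sigma}-\frac{\sigma}{2},
\]
where $\bar\Phi=1-\Phi$.

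\textbf{Step 1: leading-order expansion of the integral.} Assuming $f_E$ is continuous and positive at $1/\alpha$ (the same regularity implicitly used in Proposition~\ref{prop:gain}), I would show that, provided $a\to\infty$,
\[
\mathbb P_{H_1}(\cdot)\sim \sigma\,\frac{f_E(1/\alpha)}{\alpha}\int_0^\infty\bar\Phi(s+a)\,ds .
\]
The justification is that the Gaussian tail $\bar\Phi(s+a)$ concentrates the integrand on $s=O(1/a)$, so only $f_L$ in a shrinking neighbourhood of $\log(1/\alpha)$ matters. I would split the $s$-range at some $s_0\to\infty$ with $\sigma s_0\to0$. On $[0,s_0]$ I use continuity of $f_L$. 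On $[s_0,\infty)$ I bound $\bar\Phi(s+a)\le\phi(s+a)/(s+a)$, which is negligible relative to the main term if, say, $f_L$ is bounded near $\log(1/\alpha)$ (local boundedness plus a crude global bound via $\int f_L=1$ should suffice).

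\textbf{Step 2: evaluate the tail integral.} We have
\[
\int_0^\infty\bar\Phi(s+a)\,ds=\phi(a)-a\bar\Phi(a)\sim \frac{\phi(a)}{a^2}\qquad (a\to\infty),
\]
by the Mills-ratio expansion $\bar\Phi(a)=\phi(a)\bigl(a^{-1}-a^{-3}+O(a^{-5})\bigr)$. Hence the target becomes
\[
\mathbb P_{H_1}(\cdot)\sim\frac{f_E(1/\alpha)}{\alpha}\,\frac{\sigma\,\phi(a)}{a^2}.
\]

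\textbf{Step 3: expand $a$ through $z^*$ (main obstacle).} The root $z^*$ of $\phi(z)/\Phi(z)=\sigma$ tends to $+\infty$ as $\sigma\to0$, so $\Phi(z^*)\to1>\alpha$ and the first branch of Theorem~\ref{thm:noise-calibrated_threshold} applies. Then $\log c^*=\log(1/\alpha)+\log\Phi(z^*)-\sigma^2/2-\sigma z^*$, which gives
\[
a=z^*-\frac{\log\Phi(z^*)}{\sigma}.
\]
From $\phi(z^*)=\sigma\Phi(z^*)$ I get $z^{*2}=-2\log\sigma-\log(2\pi)+o(1)$, hence $a^2\sim z^{*2}\sim-2\log\Delta$ for fixed $\mu$. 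The delicate point, where the constant $e$ originates, is the correction term. Since $-\log\Phi(z^*)=\bar\Phi(z^*)(1+o(1))$ and $\bar\Phi(z^*)\sim\phi(z^*)/z^*\sim\sigma/z^*$, we get
\[
a=z^*+\eta,\qquad \eta=\frac{1}{z^*}\bigl(1+o(1)\bigr).
\]
Therefore $\phi(a)=\phi(z^*)\exp\bigl(-z^*\eta-\eta^2/2\bigr)=\phi(z^*)\,e^{-1}\bigl(1+o(1)\bigr)$. Because the $o(1)$ in $z^*\eta$ sits in an exponent, I must verify that it genuinely tends to zero. This uses the relative errors of the Mills ratio, $O(1/z^{*2})$, and of $\log(1-x)\approx -x$, $O(\sigma)$, both of which vanish. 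Combining with $\phi(z^*)=\sigma\Phi(z^*)\sim\sigma$ gives $\sigma\phi(a)\sim\sigma^2/e$.

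\textbf{Step 4: conclude.} Substituting into Step 2,
\[
\mathbb P_{H_1}(\cdot)\sim\frac{f_E(1/\alpha)}{\alpha}\cdot\frac{\sigma^2}{e\,(-2\log\Delta)}=\frac{f_E(1/\alpha)}{2e\alpha\mu^2}\cdot\frac{\Delta^2}{-\log\Delta},
\]
which is the claimed rate.
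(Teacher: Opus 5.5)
Your proposal is correct and follows essentially the paper's route. Your $a$ is the paper's lower endpoint $L_z=z^*-\log\Phi(z^*)/\sigma$, and your $\int_0^\infty\bar\Phi(s+a)\,ds$ equals the paper's $\int_0^\infty s\phi(L_z+s)\,ds=\phi(L_z)-L_z\{1-\Phi(L_z)\}$ after one integration by parts, followed by the same expansion $L_z=z^*+1/z^*+O((z^*)^{-3})$ that produces the factor $e^{-1}$. The one real difference is in regularity: the paper integrates by parts in $x$ first and linearizes $F_E$, so it needs only differentiability of $F_E$ at $1/\alpha$, whereas your density form linearizes $f_E$ and so needs continuity (hence local boundedness) of $f_E$ at $1/\alpha$, a slightly stronger assumption.
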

Having quantified both the noise-induced discovery and the noise-induced cost, we can now evaluate the net change in overall statistical power. Comparing the asymptotic rates in Propositions \ref{prop:gain} and \ref{prop:power_loss} establishes that the private test strictly outperforms the non-private baseline when the sensitivity is low. Specifically, as $\Delta \to 0^+$, the noise-induced cost decays at a rate of $\mathcal{O}\big(\Delta^2 / (-\log \Delta)\big)$, which is asymptotically dominated by the noise-induced discovery rate of $\mathcal{O}\big(\Delta \sqrt{-\log \Delta}\big)$. 
Hence, for sufficiently small $\Delta$, the overall power of the noise-calibrated private test strictly exceeds that of the standard non-private test, which means $\mathbb{P}_{H_1}\left(E e^{-\xi} \ge c^*\right) > \mathbb{P}_{H_1}\left(E \ge \frac{1}{\alpha}\right)$.

This rate comparison also has a direct boundary-shift interpretation that explains the apparent paradox. The private rejection condition
$E e^{-\xi} \geq c^*$ is equivalent to
$
\log E - \log(1/\alpha) \geq \zeta,
$
where $\zeta \sim \mathcal{N}\bigl(\log \Phi(z^*) - \frac{\Delta}{\mu} z^*, \frac{\Delta^2}{\mu^2}\bigr)$ and $z^*$ is as defined in Theorem~\ref{thm:noise-calibrated_threshold}. 
Note that the non-private baseline rejects $H_0$ when $\log E-\log {1/\alpha}>0$. Hence, the calibrated private procedure introduces a random shift $\zeta$ to the decision boundary. The probability that this shift lowers the rejection threshold is given by
\[
\mathbb{P}(\zeta < 0) = \Phi\left( z^* - \frac{\mu}{\Delta}\log \Phi(z^*) \right).
\]
Because $\Phi(z^*) \in (0, 1)$, the logarithmic term is strictly negative, which guarantees $\mathbb{P}(\zeta < 0) > \Phi(z^*)$. As the sensitivity $\Delta \to 0^+$, the calibration condition $\phi(z^*)/\Phi(z^*) = \Delta/\mu$ forces $z^* \to \infty$, implying $\Phi(z^*) \to 1$. Consequently, $\mathbb{P}(\zeta < 0) \to 1$. Thus, for sufficiently small $\Delta$, the random shift $\zeta$ is negative with high probability, which explains why the private test can have higher power than the non-private baseline.

\subsection{Aggregating $\mu$-GDP $e$-values}\label{sec:aggregation}

The preceding analysis focuses on a single private $e$-value. We now turn to the aggregation of multiple private $e$-values. Before establishing the general aggregation framework, it is useful to address a key design choice: why inject noise into individual $e$-values \textit{before} aggregation, rather than aggregating first and applying a privacy mechanism to the combined result? This strategy is primarily motivated by decentralized or federated settings \citep{Kaissis2020secure, kairouz2021advances}. In such environments, a fully trusted central curator who can view all raw $e$-values typically does not exist. Ensuring a formal privacy guarantee for each statistic prior to aggregation maintains robust confidentiality even when data originates from decentralized sources.

Building on this practical motivation, an inherent advantage of $e$-values is their flexibility when combining statistical evidence \citep{vovk2021values, vovk2024merging}. We first formalize the general aggregation principles. Let $\mathcal{A}$ denote any valid aggregation procedure, meaning that for any valid $e$-values $E_1, \dots, E_K$ for null hypotheses $H_0^{(1)}, \dots, H_0^{(K)}$, the output $\mathcal{A}(E_1, \dots, E_K)$ is a valid $e$-value for the intersection null $H_0 = \bigcap_{k=1}^K H_0^{(k)}$. We first state a general result for aggregation. 
\begin{proposition} \label{prop:general_composition}
	For any $\mu_k > 0$ ($k = 1, \dots, K$), if each $E_k^{\mathrm{DP}}$ is a $\mu_k$-GDP $e$-value, then the aggregated output $\mathcal{A}(E_1^{\mathrm{DP}}, \dots, E_K^{\mathrm{DP}})$ is a valid $e$-value and satisfies $\sqrt{\sum_{k=1}^K \mu_k^2}$-GDP. Specifically, if $\mu_k = \mu$ for all $k$, the aggregated output satisfies $\sqrt{K}\mu$-GDP.
\end{proposition}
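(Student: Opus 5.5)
The proposition has two parts, and I would treat them separately. The first is statistical validity of the aggregated output. The second is the privacy guarantee.

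For validity, the plan is to first show that each privatized $E_k^{\mathrm{DP}}$ is itself a valid $e$-value for $H_0^{(k)}$. In the canonical construction, $E_k^{\mathrm{DP}} = E_k e^{-\xi_k}$ with $\xi_k$ independent of the data. I would use this in two steps:
\begin{itemize}
\item Nonnegativity is immediate, since $E_k \geq 0$ and $e^{-\xi_k} > 0$.
\item For the expectation, independence gives $\mathbb{E}_{H_0^{(k)}}[E_k e^{-\xi_k}] = \mathbb{E}_{H_0^{(k)}}[E_k]\,\mathbb{E}[e^{-\xi_k}]$. The first factor is at most $1$. For the second, the Gaussian moment generating function gives $\mathbb{E}[e^{-\xi_k}] = \exp(-\tau + \sigma^2/2) \leq 1$ whenever $\tau \geq \sigma^2/2$; with the canonical choice $\tau = \Delta_k^2/(2\mu_k^2)$ of Theorem~\ref{thm:optimal_expectation}, it equals exactly $1$.
\end{itemize}
More generally, the phrase ``$\mu_k$-GDP $e$-value'' already includes validity, so I would only note this verification. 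Since $\mathcal{A}$ maps any valid $e$-values for the $H_0^{(k)}$ to a valid $e$-value for the intersection null, applying it to $(E_1^{\mathrm{DP}},\dots,E_K^{\mathrm{DP}})$ gives validity under $H_0=\bigcap_k H_0^{(k)}$ by definition.

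For privacy, I would view the vector release $(E_1^{\mathrm{DP}},\dots,E_K^{\mathrm{DP}})$ as the composition of $K$ mechanisms, where the $k$-th satisfies $\mu_k$-GDP. The noises $\xi_k$ are drawn independently, so this is a (non-adaptive, hence a special case of adaptive) composition. The general GDP composition theorem \citep[Corollary~3.3]{dong2022gdp} then gives that the joint release is $\sqrt{\sum_k \mu_k^2}$-GDP. Concretely, the trade-off function of the product distributions is bounded below by $G_{\mu_1}\otimes\cdots\otimes G_{\mu_K} = G_{\sqrt{\sum\mu_k^2}}$.

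Next I would invoke the post-processing property of $f$-DP. The aggregate is a fixed measurable map $\mathcal{A}$ of the joint release (possibly with additional independent randomness), so it cannot decrease the trade-off function. This follows from the data-processing inequality: $T(\mathcal{A}(P),\mathcal{A}(Q)) \geq T(P,Q)$, because any test on $\mathcal{A}$'s output is a test on the original output. Setting $\mu_k=\mu$ yields $\sqrt{K}\mu$-GDP.

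The only delicate step is making sure the setting really is a composition: all $E_k^{\mathrm{DP}}$ are computed on the same dataset $\mathcal{D}$ and each is $\mu_k$-GDP with respect to the same neighboring relation. If the $E_k$ come from disjoint decentralized data sources, the bound still holds, since it is the conservative worst case. I would also note that $\mathcal{A}$ may use only the private outputs and not the raw data, so that post-processing applies. This is the step I would state carefully; everything else is routine.
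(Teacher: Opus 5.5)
Your proposal is correct and follows the paper's proof: GDP composition gives $\sqrt{\sum_k \mu_k^2}$-GDP for the joint release $(E_1^{\mathrm{DP}},\dots,E_K^{\mathrm{DP}})$, post-processing through $\mathcal{A}$ preserves it, and validity follows from the definition of a valid aggregation procedure. Your check that the canonical construction gives valid $e$-values is a harmless extra, and your remark that the privatizing noises must be independent (so that composition applies) states an assumption the paper leaves implicit.
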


A direct consequence of Proposition~\ref{prop:general_composition} is the behavior of weighted averaging under arbitrary dependence.

\begin{corollary} \label{cor:dependent_averaging}
	Let $E_1^{\mathrm{DP}}, \dots, E_K^{\mathrm{DP}}$ be valid differentially private $e$-values, all evaluated on the same dataset $\mathcal{D}$. If each $E_k^{\mathrm{DP}}$ satisfies $\mu$-GDP, then for any non-negative weights $\lambda_1, \dots, \lambda_K$ such that $\sum_{k=1}^K \lambda_k = 1$, the weighted average
\[
		E_{\mathrm{avg}}^{\mathrm{DP}}(\mathcal{D}) = \sum_{k=1}^K \lambda_k E_k^{\mathrm{DP}}(\mathcal{D})
	\]
	is a valid $e$-value and satisfies $\sqrt{K}\mu$-GDP.
\end{corollary}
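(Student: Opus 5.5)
The corollary follows by specializing Proposition~\ref{prop:general_composition}. Two things need checking: that weighted averaging is a valid aggregation procedure $\mathcal{A}$ in the sense defined above, and that the privacy accounting in that proposition applies when all $E_k^{\mathrm{DP}}$ are computed on the same dataset $\mathcal{D}$.

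\emph{Validity.} I will take $\mathcal{A}(e_1,\dots,e_K)=\sum_k \lambda_k e_k$. Nonnegativity is immediate, since the weights and the $E_k^{\mathrm{DP}}$ are nonnegative. For the expectation bound, linearity of expectation gives, under the intersection null (here all nulls coincide, as every statistic concerns the same $H_0$ on $\mathcal{D}$),
\[
\mathbb{E}_{H_0}\Bigl[\sum_k \lambda_k E_k^{\mathrm{DP}}\Bigr]=\sum_k \lambda_k\,\mathbb{E}_{H_0}[E_k^{\mathrm{DP}}]\le \sum_k\lambda_k=1 .
\]
This step uses no independence, so it holds under arbitrary dependence among the $E_k^{\mathrm{DP}}$. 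This is the point of the corollary.

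\emph{Privacy.} I will view the release of $(E_1^{\mathrm{DP}}(\mathcal{D}),\dots,E_K^{\mathrm{DP}}(\mathcal{D}))$ as a $K$-fold, possibly adaptive, composition of $\mu$-GDP mechanisms applied to one dataset. By the general GDP composition theorem \citep[Corollary~3.3]{dong2022gdp}, the vector is $\sqrt{K}\mu$-GDP. The average is a deterministic function of this vector, so post-processing invariance of $f$-DP preserves the $\sqrt{K}\mu$-GDP guarantee. This is exactly the content of Proposition~\ref{prop:general_composition} with $\mu_k=\mu$, so it can be cited directly.

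\emph{Main obstacle.} The delicate point is that composition requires each mechanism's privacy guarantee to hold conditionally on the earlier outputs. If the $E_k$ share data, each $E_k^{\mathrm{DP}}$ must still be $\mu$-GDP as a mechanism of $\mathcal{D}$ with its own privacy noise independent of the other mechanisms' noise. Dependence among the underlying non-private $E_k(\mathcal{D})$ is harmless, because composition is stated per neighboring pair of datasets and allows arbitrary dependence through $\mathcal{D}$. I would state this independence of the injected noises explicitly; it holds under the canonical construction, where each $\xi_k$ is drawn independently. With that in place, the conclusion is the specialization of Proposition~\ref{prop:general_composition}.
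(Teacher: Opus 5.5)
Your proposal is correct and follows the paper's route: the corollary is read off from Proposition~\ref{prop:general_composition} with $\mu_k=\mu$, with privacy from GDP composition followed by post-processing, and validity from linearity of expectation under the intersection null (you need not assume the nulls coincide). Your explicit remark that the injected noises must be mutually independent for composition to apply is a condition the paper leaves implicit, and it is worth keeping.
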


As established above, combining evidence under arbitrary dependence triggers a privacy penalty that scales with $\sqrt{K}$. However, when test statistics are derived from mutually independent datasets, this standard bound is overly conservative. By aggregating through product, we derive a significantly tighter privacy guarantee that bypasses the composition barrier.

\begin{theorem} \label{thm:independent_product}
	Let $E_1, \dots, E_K$ be valid $e$-values derived from mutually independent datasets, with respective sensitivities $\Delta_1, \dots, \Delta_K$. If each private $e$-value $E_k^{\mathrm{DP}}$ is constructed via the canonical $\mu$-GDP noise mechanism described in Section~\ref{sec:3.1}, then their product $E_{\mathrm{prod}}^{\mathrm{DP}} = \prod_{k=1}^K E_k^{\mathrm{DP}}$ is a valid $e$-value and satisfies $\mu_{\mathrm{prod}}$-GDP, where
\[
		\mu_{\mathrm{prod}} = \mu \cdot \frac{\max_{1 \le k \le K} \Delta_k}{\sqrt{\sum_{k=1}^K \Delta_k^2}} \le \mu.
	\]
\end{theorem}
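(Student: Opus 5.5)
Validity and privacy will be handled separately. Validity is the easy half. The datasets are mutually independent and each noise $\xi_k$ is drawn independently of everything else, so the factors $E_k^{\mathrm{DP}} = E_k e^{-\xi_k}$ are mutually independent. Hence under the intersection null
\[
\mathbb{E}_{H_0}\Bigl[\prod_k E_k^{\mathrm{DP}}\Bigr]=\prod_k \mathbb{E}_{H_0}[E_k]\,\mathbb{E}[e^{-\xi_k}]\le \prod_k 1\cdot 1 = 1,
\]
where $\mathbb{E}[e^{-\xi_k}] = \exp(-\tau_k+\sigma_k^2/2)=1$ for the canonical choice $\tau_k=\Delta_k^2/(2\mu^2)$, $\sigma_k^2=\Delta_k^2/\mu^2$ from Theorems~\ref{thm:noise_shape} and \ref{thm:optimal_expectation}. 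Nonnegativity is immediate.

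For privacy, I would work on the log scale, since the trade-off function is invariant under the bijection $x\mapsto\log x$:
\[
\log E_{\mathrm{prod}}^{\mathrm{DP}}=\sum_k \log E_k(\mathcal{D}_k) - \sum_k \xi_k,
\]
which is a deterministic quantity $S(\mathcal{D})=\sum_k\log E_k(\mathcal{D}_k)$ minus independent Gaussian noise
\[
\sum_k\xi_k\sim\mathcal{N}\Bigl(\textstyle\sum_k\tau_k,\ \sum_k\Delta_k^2/\mu^2\Bigr).
\]
The total dataset is $\mathcal{D}=(\mathcal{D}_1,\dots,\mathcal{D}_K)$. The key observation is that neighbors $\mathcal{D}\sim\mathcal{D}'$ differ in a single record, so they differ in exactly one component $\mathcal{D}_j$. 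Therefore $|S(\mathcal{D})-S(\mathcal{D}')|\le\Delta_j\le\max_k\Delta_k$. This is the step that beats composition: only one summand moves, while the noise from all $K$ coordinates is pooled.

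With this in hand, the output is a Gaussian location mechanism whose shift is at most $\max_k\Delta_k$ and whose standard deviation is $\sigma=\sqrt{\sum_k\Delta_k^2}/\mu$. The standard fact from \citet{dong2022gdp} (Gaussian mechanism: $T(\mathcal{N}(a,\sigma^2),\mathcal{N}(b,\sigma^2))=G_{|a-b|/\sigma}$, and $G_{\mu}$ is decreasing in $\mu$) gives $\mu_{\mathrm{prod}}=\max_k\Delta_k/\sigma$, which matches the stated expression. The bound $\mu_{\mathrm{prod}}\le\mu$ follows from $\max_k\Delta_k^2\le\sum_k\Delta_k^2$.

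The main obstacle is justifying the neighboring structure. I need to state precisely that the product mechanism is viewed as a mechanism on the concatenated dataset, under the neighbor relation of a single record change, which touches only one $\mathcal{D}_j$. I also need the sensitivity definition $\Delta_j$ to apply componentwise. Once that modeling point is fixed, the rest is the one-dimensional Gaussian trade-off computation.
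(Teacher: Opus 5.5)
Your proposal is correct and follows essentially the same route as the paper: validity by factoring the null expectation with $\mathbb{E}[e^{-\xi_k}]=1$, and privacy by treating $\log E_{\mathrm{prod}}^{\mathrm{DP}}$ as a Gaussian location mechanism with shift at most $\max_k \Delta_k$ and noise variance $\sum_k \Delta_k^2/\mu^2$, which gives $G_{\mu_{\mathrm{prod}}}$ by monotonicity. The paper makes the same neighboring-structure assumption you flag (neighbors differ in exactly one $\mathcal{D}_j$); it also proves that $\mu_{\mathrm{prod}}$ is sharp, but that claim is not part of the stated theorem.
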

The privacy bound $\mu_{\mathrm{prod}}$ in Theorem~\ref{thm:independent_product} is sharp; the product does not satisfy $\mu'$-GDP for any $\mu' < \mu_{\mathrm{prod}}$. This establishes a structural advantage over general composition, which requires a privacy budget scaling as $\sqrt{K}\mu$. This sharp accounting prevents the cumulative privacy degradation that typically obscures weak signals in private aggregation.
%The privacy bound $\mu_{\mathrm{prod}}$ in Proposition~\ref{thm:independent_product} is sharp, there exists no $\mu' < \mu_{\mathrm{prod}}$ for which the product satisfies $\mu'$-GDP. This result establishes a structural advantage specific to the independent product aggregation. While the general composition rule dictates a privacy budget that inflates at a rate of $\sqrt{K}\mu$, the independent product maintains a privacy parameter strictly bounded by $\mu$. This sharp accounting prevents the cumulative privacy degradation that typically obscures weak signals in large-scale private aggregation.

	%%%%%%%

	%%%%%%%%

	\section{A Private Peeling Algorithm for $e$-values}\label{sec:depth_framework}
	Multiple testing problems arising in modern large-scale inference often involve thousands or even millions of candidate hypotheses, many of which correspond to weak and sparse signals. In such settings, preserving statistical power while ensuring rigorous privacy protection becomes particularly challenging. $e$-values are especially well suited to this regime because of their compositional properties and robustness under dependence, making them natural building blocks for private multiple testing procedures.
	
		However, applying a global $\mu$-GDP guarantee to all $m$ candidate $e$-values simultaneously incurs a severe penalty in statistical efficiency. By the general GDP composition theorem, releasing all $m$ privatized $e$-values under a fixed target privacy level $\mu$ requires assigning each coordinate a privacy budget of order $\mu/\sqrt{m}$; under the canonical Gaussian mechanism, this makes both the expectation and variance of the injected log-noise scale proportionally to $m$. This inflation rapidly obscures weak signals and degrades statistical power in sparse regimes. Adaptive peeling algorithms circumvent this limitation by iteratively selecting and perturbing only a subset of the most promising hypotheses, thereby concentrating the privacy budget.

	Data-dependent index extraction, however, introduces two distinct technical challenges. First, the sequential selection process induces dependence among the released statistics. This is where $e$-values are particularly useful: unlike classical $p$-value procedures, which typically require independence or positive dependence assumptions to maintain valid error control, $e$-value-based methods preserve their validity regardless of the dependence structure among the test statistics \citep{wang2022false,xu2025bringing}. Second, establishing precise $\mu$-GDP accounting across adaptive iterations is non-trivial. We resolve this privacy-accounting challenge by developing a decoupled peeling framework that employs Gumbel noise for stable index extraction and our canonical Gaussian mechanism for the subsequent $e$-value perturbation.

	\subsection{The Report Noisy Max Extractor}
	
	The basic building block of the peeling algorithm is a selection mechanism that isolates the most prominent $e$-value, which corresponds to the hypothesis most likely to result in a rejection. Let $\mathcal{S}$ denote the index set of $e$-values that have not yet been selected, and let $\{E_i\}_{i \in \mathcal{S}}$ be the corresponding statistics with sensitivity bounded by $\Delta$. Rather than perturbing and releasing all $e$-values, the mechanism first uses Gumbel noise to select an index. After the index is selected, it releases the corresponding $e$-value using the canonical Gaussian noise mechanism developed in Section~\ref{sec:3.1}. This approach concentrates the privacy cost on a single hypothesis per iteration. The procedure is formalized in Algorithm~\ref{alg:e_max_extractor}.
	
	\begin{algorithm}[H]
	\caption{The Report Noisy Max Extractor}
	\label{alg:e_max_extractor}
	\begin{algorithmic}[1]
		\REQUIRE $e$-values $\{E_i\}_{i \in \mathcal{S}}$ over an active candidate set $\mathcal{S}$, where the sensitivity is at most $\Delta$; privacy parameter $\mu$.
		\ENSURE Selected index $j^*$ and its differentially private $e$-value $\tilde{E}_{j^*}$.
		
		\STATE Set $\epsilon = \log \big( \Phi(\mu/(2\sqrt{2})) / \Phi(-\mu/(2\sqrt{2})) \big)$.
		\FOR{each $i \in \mathcal{S}$}
		\STATE Compute $L_i = \log E_i + g_i$, where $g_i$ is drawn independently from $\text{Gumbel}(0, 2\Delta/\epsilon)$.
		\ENDFOR
		\STATE Select the index $j^* = \arg\max_{i \in \mathcal{S}} L_i$.
		\STATE Return $j^*$ and the private $e$-value $\tilde{E}_{j^*} = E_{j^*}  e^{-\xi}$, where $\xi$ is an independent sample drawn from $\mathcal{N}(\Delta^2/\mu^2, 2\Delta^2/\mu^2)$.
	\end{algorithmic}
\end{algorithm}
	The next lemma establishes the privacy guarantee of Algorithm~\ref{alg:e_max_extractor}.
		\begin{lemma} \label{lem:e_max_extractor}
		Algorithm~\ref{alg:e_max_extractor} satisfies $\mu$-GDP. 
	\end{lemma}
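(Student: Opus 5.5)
The plan is to split Algorithm~\ref{alg:e_max_extractor} into two mechanisms run in sequence, the Gumbel selection of $j^*$ and the Gaussian release of $\tilde E_{j^*}$. I will show that each one satisfies $(\mu/\sqrt{2})$-GDP, and then the general GDP composition theorem \citep[Corollary~3.3]{dong2022gdp} gives $\sqrt{\mu^2/2+\mu^2/2}=\mu$-GDP. Adaptive composition is allowed there, which matters because the second step is a function of the selected index $j^*$.

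\emph{Release step.} Fix the index $j$ output by the first step. The noise $\xi\sim\mathcal{N}(\Delta^2/\mu^2,2\Delta^2/\mu^2)$ is exactly the canonical mechanism of Section~\ref{sec:3.1} run at privacy level $\mu_1:=\mu/\sqrt{2}$, since $\Delta^2/\mu_1^2=2\Delta^2/\mu^2$ and $\Delta^2/(2\mu_1^2)=\Delta^2/\mu^2$. On the log scale the output is $\log E_j(\mathcal{D})-\xi$, with $|\log E_j(\mathcal{D})-\log E_j(\mathcal{D}')|\le\Delta$. The trade-off function between two Gaussians with common variance $2\Delta^2/\mu^2$ and means at most $\Delta$ apart is therefore at least $G_{\Delta/(\sqrt{2}\Delta/\mu)}=G_{\mu_1}$. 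The output $\tilde E_{j^*}$ is a monotone transform of this, so by post-processing the release step is $\mu_1$-GDP for each fixed $j$, as adaptive composition requires.

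\emph{Selection step.} Taking the argmax of $\log E_i+g_i$ with i.i.d.\ $\mathrm{Gumbel}(0,2\Delta/\epsilon)$ noise is the Gumbel-max form of the exponential mechanism. By the Gumbel-max trick,
\[
\Pr(j^*=j)\propto \exp\bigl(\epsilon\log E_j/(2\Delta)\bigr).
\]
The utility $\log E_i$ has sensitivity $\Delta$, so the standard exponential-mechanism argument bounds each factor in the ratio $\Pr_{\mathcal{D}}(j^*=j)/\Pr_{\mathcal{D}'}(j^*=j)$: the numerator changes by a factor at most $e^{\epsilon/2}$ and the normalizer by at most $e^{\epsilon/2}$. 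Hence the selection is pure $\epsilon$-DP, i.e.\ $f_{\epsilon,0}$-DP.

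\emph{Main obstacle: converting $\epsilon$-DP to $\mu_1$-GDP.} We need $f_{\epsilon,0}\ge G_{\mu_1}$ pointwise when $\epsilon=\log\bigl(\Phi(\mu_1/2)/\Phi(-\mu_1/2)\bigr)$, which is exactly the constant set in the algorithm since $\mu_1/2=\mu/(2\sqrt{2})$. The function $f_{\epsilon,0}$ is piecewise linear. It joins $(0,1)$ to $(1,0)$ through a single kink at $\alpha^\star=1/(1+e^\epsilon)$, where $f_{\epsilon,0}(\alpha^\star)=\alpha^\star$. With the chosen $\epsilon$ this gives $\alpha^\star=\Phi(-\mu_1/2)$, because $\Phi(-\mu_1/2)+\Phi(\mu_1/2)=1$. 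At the same point
\[
G_{\mu_1}(\Phi(-\mu_1/2))=\Phi(\mu_1/2-\mu_1)=\Phi(-\mu_1/2),
\]
so the kink lies exactly on the curve of $G_{\mu_1}$. Now $G_{\mu_1}$ is convex with $G_{\mu_1}(0)=1$ and $G_{\mu_1}(1)=0$. On $[0,\alpha^\star]$ it therefore lies below the chord from $(0,1)$ to the kink, and on $[\alpha^\star,1]$ it lies below the chord from the kink to $(1,0)$. These two chords are precisely the two linear pieces of $f_{\epsilon,0}$, which gives $G_{\mu_1}\le f_{\epsilon,0}$.

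It remains to check the composition carefully, in particular that the GDP bound for the release step is uniform in $j$, as shown above. Composition then yields that Algorithm~\ref{alg:e_max_extractor} is $\mu$-GDP.
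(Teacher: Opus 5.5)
Your proposal is correct and follows the paper's proof essentially step for step. The paper likewise splits the algorithm into a Gumbel selection and a Gaussian release, each at level $\mu/\sqrt{2}$. It shows the selection is pure $\epsilon$-DP via the Gumbel-max softmax formula, then converts $f_{\epsilon,0}$ to $G_{\mu/\sqrt{2}}$ by placing the kink $\alpha^\star=\Phi(-\mu/(2\sqrt{2}))$ on the Gaussian curve and invoking convexity, and finally composes to get $\mu$-GDP. Your explicit note that the composition is adaptive, with the release bound holding uniformly in the selected index, is a detail the paper leaves implicit.
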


\begin{remark} \label{rmk:gaussian_selection}
	The selection mechanism in \citet{xia2023adaptive} employs Gaussian noise to identify the most promising statistic. A valid privacy guarantee requires the injected noise to maintain sufficient dispersion so that the selected index does not reveal too much information about the underlying non-private statistics. As the candidate pool grows, the maximum of independent Gaussian noises becomes increasingly concentrated around its extreme-value location, which creates a privacy-accounting barrier for the claimed $\mu$-GDP guarantee. More details are provided in Appendix~\ref{app:general_selection}. 
		To address this issue, Algorithm~\ref{alg:e_max_extractor} uses Gumbel noise for index extraction. Gumbel
	perturbations are closely related to the exponential mechanism through the Gumbel-max
	trick and have been used in private selection of top-$k$ problems
	\citep{durfee2019practical}. Our use of
	Gumbel noise is different: it is not used to release a standard top-$k$ query under DP, but to decouple adaptive index selection from Gaussian value privatization under
	$\mu$-GDP. The key property is max-stability: the maximum of independent Gumbel
	random variables remains Gumbel with the same scale parameter, up to a location shift.
	Consequently, the effective scale of the selection noise does not collapse as the number
	of candidates increases, allowing the privacy loss of the selection step to remain stable
	over adaptive peeling iterations.
	
%	To address this scaling behavior, we employ Gumbel noise for index extraction. The Gumbel distribution is max-stable, meaning the distribution of the maximum retains a fixed scale parameter regardless of the candidate pool size. This property ensures that the privacy loss during selection remains invariant to the cardinality of the candidate set. 
\end{remark}

	\subsection{Recursive Peeling}
	
	Building upon the single-step extractor, we design a recursive peeling strategy to select a subset of $s$ hypotheses. In each iteration, the procedure extracts the maximum signal, injects noise, and removes the selected index from the active pool. This sequential approach concentrates the privacy budget on the $s$ most promising candidates. Consequently, it avoids the $\sqrt{m}$ noise inflation associated with evaluating all hypotheses simultaneously, preserving statistical power. The peeling algorithm is presented in Algorithm~\ref{alg:e_peeling}.
	
	\begin{algorithm}[H]
	\caption{The GDP $e$-Peeling Algorithm}
	\label{alg:e_peeling}
	\begin{algorithmic}[1]
		\REQUIRE 
		$e$-values $\{E_i\}_{i=1}^m$, 
			target peeling size $s$, 
		sensitivity $\Delta$, 
		privacy parameter $\mu$.
		\ENSURE 
		A global private $e$-value vector $\widehat{\mathcal{E}} = (\hat{E}_1, \dots, \hat{E}_m)$.
		
		\STATE Initialize active set $\mathcal{S} \gets \{1, 2, \dots, m\}$
		\STATE Initialize output vector $\widehat{\mathcal{E}} \gets (0, 0, \dots, 0)$ of length $m$
		
		\FOR{$t = 1$ to $s$}
		\STATE $(j_t^*, \tilde{E}_{j_t^*}) \gets$ \textsc{Report Noisy Max}($\mathcal{S}, \{E_i\}_{i \in \mathcal{S}}, \Delta, \mu/\sqrt{s}$)
		\STATE $\hat{E}_{j_t^*} \gets \tilde{E}_{j_t^*}$ \COMMENT{Assign the noisy value to the selected hypothesis}
		\STATE $\mathcal{S} \gets \mathcal{S} \setminus \{j_t^*\}$ \COMMENT{Remove the winner from the candidate pool}
		\ENDFOR
		\RETURN $\widehat{\mathcal{E}}$
	\end{algorithmic}
\end{algorithm}
	
		Algorithm~\ref{alg:e_peeling} allocates a privacy budget of $\mu/\sqrt{s}$ to each of the $s$ peeling iterations. By the general GDP composition theorem, the cumulative privacy cost is $\mu$.  
	%The unselected $e$-values are set to 0 rather than one. Assigning a unit value to unselected hypotheses would induce selection bias, as the conditioning on non-selection alters the null expectation. Setting them to zero preserves validity by construction.
	
		\begin{proposition}\label{prop:peeling_privacy}
			Algorithm~\ref{alg:e_peeling} satisfies $\mu$-GDP. Furthermore, each entry in $\widehat{\mathcal{E}}$ is a valid $e$-value. 
		\end{proposition}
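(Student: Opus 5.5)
The proof splits into a privacy part and a validity part.

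\textbf{Privacy.} Algorithm~\ref{alg:e_peeling} is an adaptive composition of $s$ calls to Algorithm~\ref{alg:e_max_extractor}. The $t$-th call takes as input the active set $\mathcal{S}_t$, which is determined by the previously released indices $j_1^*,\dots,j_{t-1}^*$, and is run with privacy parameter $\mu/\sqrt{s}$.

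By Lemma~\ref{lem:e_max_extractor}, for every fixed active set the $t$-th call, viewed as a mechanism of $\mathcal{D}$, is $(\mu/\sqrt{s})$-GDP jointly in its output pair $(j_t^*,\tilde{E}_{j_t^*})$. The lemma applies to any candidate set because its guarantee does not depend on $|\mathcal{S}|$; this is exactly the max-stability point of Remark~\ref{rmk:gaussian_selection}.

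Hence the sequence of pairs $(j_t^*,\tilde E_{j_t^*})_{t=1}^s$ is an adaptive composition of $s$ mechanisms. Each is $(\mu/\sqrt{s})$-GDP conditionally on the past outputs. The general GDP composition theorem \citep[Corollary~3.3]{dong2022gdp} then gives $\sqrt{s\cdot\mu^2/s}=\mu$-GDP for the whole transcript.

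The output vector $\widehat{\mathcal{E}}$ is a deterministic function of this transcript: place $\tilde E_{j_t^*}$ in coordinate $j_t^*$ and zeros elsewhere. By post-processing, $\widehat{\mathcal{E}}$ is therefore $\mu$-GDP.

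The one point needing care is that composition requires each step to be GDP for every realization of the previous outputs, uniformly over neighbouring datasets. I would state explicitly that the input $\mathcal{S}_t$ depends on $\mathcal{D}$ only through released outputs, so the adaptive-composition hypothesis is met.

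\textbf{Validity.} Fix $i$ with $H_0^{(i)}$ true. Then
\[
\hat E_i=\sum_{t=1}^s \mathbf{1}\{j_t^*=i\}\,E_i e^{-\xi_t},
\]
since index $i$ can be selected at most once. Here $\xi_t\sim\mathcal{N}(\Delta^2/\mu_t^2,2\Delta^2/\mu_t^2)$ with $\mu_t=\mu/\sqrt{s}$. Crucially, $\xi_t$ is drawn independently of the data, of the Gumbel noises, and of the selection event.

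Because $\mathbb{E}[e^{-\xi}]=\exp(-\Delta^2/\mu_t^2+\Delta^2/\mu_t^2)=1$, conditioning on everything except $\xi_t$ gives
\[
\mathbb{E}\bigl[\mathbf{1}\{j_t^*=i\}E_ie^{-\xi_t}\bigr]=\mathbb{E}\bigl[\mathbf{1}\{j_t^*=i\}E_i\bigr].
\]
Summing over $t$ and using disjointness of the events $\{j_t^*=i\}$,
\[
\mathbb{E}[\hat E_i]=\mathbb{E}\bigl[E_i\mathbf{1}\{i\text{ selected}\}\bigr]\le\mathbb{E}[E_i]\le 1,
\]
where the first inequality uses $E_i\ge 0$. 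Nonnegativity of $\hat E_i$ is immediate. So each coordinate is a valid $e$-value regardless of the selection dependence.

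The main obstacle is the privacy step, namely justifying that the per-iteration guarantee holds uniformly in the adaptively chosen $\mathcal{S}_t$. Given Lemma~\ref{lem:e_max_extractor} as stated for arbitrary $\mathcal{S}$, this reduces to invoking adaptive composition.
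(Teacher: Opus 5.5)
Your proof is correct and follows the paper's argument. For privacy, both use adaptive GDP composition of $s$ calls to Algorithm~\ref{alg:e_max_extractor}, each at budget $\mu/\sqrt{s}$, followed by post-processing. For validity, both rely on the value noise being drawn independently of the selection, with $\mathbb{E}[e^{-\xi}]=1$. The only difference is cosmetic: the paper states validity as the domination $0\le \hat E_i\le \tilde E_i$, which implicitly attaches a noise draw to every hypothesis, while your computation $\mathbb{E}[\hat E_i]=\mathbb{E}[E_i\mathbf{1}\{i\text{ selected}\}]\le \mathbb{E}[E_i]$ makes the same step explicit without that coupling.
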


Although the sequential, data-dependent selection mechanism induces complex dependence among the components of $\widehat{\mathcal{E}}$, this does not compromise downstream inference. Because $e$-value-based multiple testing procedures maintain validity under arbitrary dependence \citep{wang2022false, xu2025bringing}, $\widehat{\mathcal{E}}$ can be integrated directly into testing frameworks without the need for conservative post-selection corrections. 
In particular, applying the standard e-BH procedure to the private $e$-value vector $\widehat{\mathcal{E}}$ yields finite-sample FDR control.
	\begin{corollary}\label{cor:peeling_ebh_fdr}
	Let $\widehat{\mathcal{E}}$ denote the output of Algorithm~\ref{alg:e_peeling}. For any
	target level $\alpha \in (0,1)$, applying the e-BH procedure to
	$\widehat{\mathcal{E}}$ controls the FDR at level $\alpha$. Moreover, the resulting
	rejection set satisfies $\mu$-GDP.
	%	Let $\widehat{\mathcal{E}}$ be the output of Algorithm~\ref{alg:e_peeling}. For any target level $\alpha \in (0,1)$, applying the e-BH procedure to $\widehat{\mathcal{E}}$ controls the FDR at level $\alpha$. The resulting rejection set also satisfies $\mu$-GDP.
	\end{corollary}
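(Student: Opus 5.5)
The corollary has two parts, FDR control and privacy, and I will prove them separately using Proposition~\ref{prop:peeling_privacy} together with the known dependence-robustness of e-BH.

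\textbf{FDR control.} By Proposition~\ref{prop:peeling_privacy}, each entry $\hat{E}_i$ of $\widehat{\mathcal{E}}$ is a valid $e$-value. In particular, $\mathbb{E}[\hat{E}_i] \le 1$ for every true null $i \in \mathcal{H}_0$, and this holds marginally, with the randomness of the peeling selection already absorbed. The entries are dependent through the adaptive selection, but I will not need to model that dependence.

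I will then recall the e-BH procedure. With $\hat{E}_{(1)} \ge \dots \ge \hat{E}_{(m)}$, it rejects the $k^*$ hypotheses with the largest values, where
\[
k^* = \max\{k : \hat{E}_{(k)} \ge m/(\alpha k)\}.
\]
The rejection set $\mathcal{R}$ has the self-consistency property: every $i \in \mathcal{R}$ satisfies $\hat{E}_i \ge m/(\alpha|\mathcal{R}|)$.

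From this, for each null $i$ I get the pointwise bound
\[
\frac{\mathbf{1}\{i \in \mathcal{R}\}}{|\mathcal{R}| \vee 1} \le \frac{\alpha \hat{E}_i}{m}.
\]
Summing over $i \in \mathcal{H}_0$ and taking expectations gives
\[
\mathrm{FDR} \le \frac{\alpha}{m}\sum_{i\in\mathcal{H}_0}\mathbb{E}[\hat{E}_i] \le \frac{\alpha m_0}{m} \le \alpha.
\]
No dependence assumption enters at any point, which is the result of \citet{wang2022false}. I would cite that result and include the two-line argument above for completeness. Entries set to $0$ for unselected hypotheses are trivially valid and are never rejected.

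\textbf{Privacy.} The e-BH rejection set is a deterministic function of $\widehat{\mathcal{E}}$ alone; it uses only $\alpha$ and $m$, which are public, and never touches the raw data. Since $\widehat{\mathcal{E}}$ is $\mu$-GDP by Proposition~\ref{prop:peeling_privacy}, the post-processing property of $f$-DP \citep{dong2022gdp} applies: for any map $h$, $T(h(P),h(Q)) \ge T(P,Q)$. Hence the rejection set also satisfies $\mu$-GDP.

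\textbf{Main obstacle.} The only subtle point is the validity of $\hat{E}_i$ marginally under the data-dependent selection. That is exactly what Proposition~\ref{prop:peeling_privacy} supplies, so the corollary itself has no real obstacle. I would only check that the proposition's validity statement is unconditional rather than conditional on selection, since the e-BH argument needs only the marginal bound $\mathbb{E}[\hat{E}_i] \le 1$.
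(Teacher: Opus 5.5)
Your proposal is correct and follows the paper's proof. Validity of each $\hat{E}_i$ comes from Proposition~\ref{prop:peeling_privacy}, FDR control from the dependence-free e-BH guarantee of \citet{wang2022false}, and privacy of the rejection set from post-processing. The only difference is that you write out the standard self-consistency bound $\mathbf{1}\{i\in\mathcal{R}\}/(|\mathcal{R}|\vee 1)\le \alpha\hat{E}_i/m$, whereas the paper simply cites it.
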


	\subsection{Adaptive Choice of the Peeling Size}
	
		Algorithm~\ref{alg:e_peeling} requires a target peeling size $s$. This tuning parameter controls a basic trade-off. If $s$ is too large, the privacy budget allocated to each peeling iteration is $\mu/\sqrt{s}$, so the final value perturbation becomes unnecessarily large and weak signals can be washed out. If $s$ is too small, the output vector contains at most $s$ nonzero $e$-values, and the procedure can miss hypotheses that would have been rejected by a non-private e-BH procedure. Thus, the peeling size $s$ should be large enough to capture most of the true signals, yet small enough to avoid the noise inflation of simultaneous privatization.
		
		We choose $s$ by using a small part of the privacy budget to approximate the e-BH rejection count. Let $E_{(1)}\geq E_{(2)}\geq \cdots \geq E_{(m)}$ denote the ordered $e$-values. After sorting, e-BH rejects $k^*=\max\{k:E_{(k)}\geq m/(\alpha k)\}$ hypotheses. Equivalently, if $L_i=\log E_i$ and $Q_k=L_{(k)}-\log\{m/(\alpha k)\}$, then $k^*=\max\{k:Q_k\geq0\}$. Thus, a relevant quantity for choosing the peeling size is the largest $k$ for which $Q_k$ remains nonnegative. However, using this non-private cutoff directly to determine the peeling size would leak information about the data. Instead, we spend a small amount of privacy budget to release noisy versions of $Q_k$ on a grid.
		
		Define the candidate grid $\mathcal{K} = \{s_{\min}, 2s_{\min}, 4s_{\min}, \ldots\} \cap \{1, \ldots, m\}$, where $s_{\min}$ is a minimum peeling size. 
		Note that the procedure is typically insensitive to the choice of $s_{\min}$ as the grid grows geometrically; in practice, we recommend starting from a relatively small value. Since $\max_{1 \leq i \leq m}|L_i(\mathcal{D}) - L_i(\mathcal{D}')| \leq \Delta$, each order statistic $L_{(k)}$ has sensitivity at most $\Delta$. We allocate privacy budget $\mu_0$ to release
		$$\widetilde{Q}_k = Q_k + Z_k, \qquad Z_k \stackrel{\mathrm{i.i.d.}}{\sim} \mathcal{N}\left(0, \frac{|\mathcal{K}|\Delta^2}{\mu_0^2}\right), \quad k \in \mathcal{K},$$
		and reserve $\mu_{\mathrm{peel}} = \sqrt{\mu^2 - \mu_0^2}$ for the subsequent peeling step.
		If all $\widetilde{Q}_k < 0$, set $\hat{s} = s_{\min}$. Otherwise, let $\hat{k} = \max\{k \in \mathcal{K}: \widetilde{Q}_k \geq 0\}$ and let $\hat{k}^+$ be the next grid point above $\hat{k}$ (or $\hat{k}$ if it is the largest). We set
		$\hat{s}=\min\{m,\hat{k}^{+}\}$.
		The complete adaptive procedure is summarized in Algorithm~\ref{alg:adaptive_peeling_size}.
	
	\begin{algorithm}[H]
	\caption{Adaptive GDP $e$-Peeling}
	\label{alg:adaptive_peeling_size}
	\begin{algorithmic}[1]
		\REQUIRE $e$-values $\{E_1,\dots,E_m\}$, FDR level $\alpha$, minimum grid size $s_{\min}$, sensitivity $\Delta$, total privacy parameter $\mu$, budget $\mu_0$ for choosing $s$.
		\ENSURE A global private $e$-value vector $\widehat{\mathcal{E}}$.
		\STATE Set $\mathcal{K}=\{s_{\min},2s_{\min},4s_{\min},\ldots\}\cap\{1,\ldots,m\}$.
		\STATE Compute $L_i=\log E_i$ and the margins $Q_k=L_{(k)}-\log\{m/(\alpha k)\}$ for $k\in\mathcal{K}$.
		\STATE Draw independent $Z_k\sim \mathcal{N}(0,|\mathcal{K}|\Delta^2/\mu_0^2)$ and set $\widetilde Q_k=Q_k+Z_k$.
		\IF{$\max_{k\in\mathcal{K}}\widetilde Q_k<0$}
		\STATE Set $\hat{s}=s_{\min}$.
		\ELSE
		\STATE Let $\hat{k}=\max\{k\in\mathcal{K}:\widetilde Q_k\geq 0\}$.
		\STATE Let $\hat{k}^{+}$ be the next grid point above $\hat{k}$, or $\hat{k}$ if $\hat{k}$ is the largest element of $\mathcal{K}$.
		\STATE Set $\hat{s}=\min\{m,\hat{k}^{+}\}$.
		\ENDIF
		\STATE Set $\mu_{\mathrm{peel}}=\sqrt{\mu^2-\mu_0^2}$.
		\STATE Return Algorithm~\ref{alg:e_peeling} with target peeling size $\hat{s}$ and privacy parameter $\mu_{\mathrm{peel}}$.
	\end{algorithmic}
\end{algorithm}

\begin{proposition}\label{prop:adaptive_peeling}
	Algorithm~\ref{alg:adaptive_peeling_size} satisfies $\mu$-GDP. Furthermore, each entry of its output vector is a valid $e$-value.
\end{proposition}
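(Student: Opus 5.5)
The proof treats Algorithm~\ref{alg:adaptive_peeling_size} as an adaptive composition of two mechanisms. The first releases $\widetilde{Q}=(\widetilde Q_k)_{k\in\mathcal{K}}$, from which $\hat s$ is computed by post-processing. The second runs Algorithm~\ref{alg:e_peeling} with parameters $(\hat s,\mu_{\mathrm{peel}})$ chosen from the first output.

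\textbf{Privacy of the grid release.} The offsets $\log\{m/(\alpha k)\}$ do not depend on the data. Since $|L_i(\mathcal{D})-L_i(\mathcal{D}')|\le\Delta$ for every $i$, each order statistic changes by at most $\Delta$: the sup-norm shift of a vector bounds the shift of every order statistic. Hence the $\ell_2$ sensitivity of $(Q_k)_{k\in\mathcal K}$ is at most $\sqrt{|\mathcal K|}\,\Delta$. Adding i.i.d.\ $\mathcal N(0,|\mathcal K|\Delta^2/\mu_0^2)$ noise is then the Gaussian mechanism with sensitivity-to-scale ratio $\sqrt{|\mathcal K|}\Delta/(\sqrt{|\mathcal K|}\Delta/\mu_0)=\mu_0$. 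This gives $\mu_0$-GDP by \citet{dong2022gdp}, and $\hat s$ is $\mu_0$-GDP by post-processing.

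\textbf{Privacy of the peeling step and composition.} For each fixed value $s$ of $\hat s$, Proposition~\ref{prop:peeling_privacy} says that Algorithm~\ref{alg:e_peeling} run with $(s,\mu_{\mathrm{peel}})$ is $\mu_{\mathrm{peel}}$-GDP. This holds uniformly in $s$ because each round spends $\mu_{\mathrm{peel}}/\sqrt s$. The second mechanism is therefore $\mu_{\mathrm{peel}}$-GDP conditional on any first-stage output. The adaptive composition theorem \citep[Corollary~3.3]{dong2022gdp} then gives $\sqrt{\mu_0^2+\mu_{\mathrm{peel}}^2}=\mu$-GDP for the pair $(\widetilde Q,\widehat{\mathcal E})$, and releasing only $\widehat{\mathcal E}$ is post-processing. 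The point needing care, and the main obstacle, is to state precisely that adaptive composition allows the second mechanism to be chosen from a family indexed by the first output, provided every member of the family meets the same bound $\mu_{\mathrm{peel}}$. I would write the joint trade-off function as $T\bigl((\widetilde Q,\widehat{\mathcal E})(\mathcal D),(\widetilde Q,\widehat{\mathcal E})(\mathcal D')\bigr)\ge G_{\mu_0}\otimes G_{\mu_{\mathrm{peel}}}=G_\mu$ and cite Dong et al.'s adaptive composition result directly.

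\textbf{Validity.} Fix a null index $i$. Conditional on $\widetilde Q$ (equivalently on $\hat s$), the second stage is Algorithm~\ref{alg:e_peeling} applied to the same $e$-values, but with fresh randomness independent of $\widetilde Q$. Here I have to be careful: $\widetilde Q$ depends on the data, so conditioning on it may change the null distribution of $E_i$. The route I would try avoids this by conditioning on the data $\mathcal D$ and on $\widetilde Q$ and using that $\hat E_i=E_i e^{-\xi}\mathbf 1\{i\text{ selected}\}\le E_i e^{-\xi}$. The noise $\xi$ is independent of $\mathcal D$, $\widetilde Q$ and the selection, and has mean $\Delta^2/\mu'^2$ and variance $2\Delta^2/\mu'^2$, where $\mu'=\mu_{\mathrm{peel}}/\sqrt{\hat s}$. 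Hence $\mathbb E[e^{-\xi}\mid\mathcal D,\widetilde Q]=\exp(-\Delta^2/\mu'^2+\Delta^2/\mu'^2)=1$. Therefore $\mathbb E_{H_0}[\hat E_i]\le\mathbb E_{H_0}\bigl[E_i\,\mathbb E[e^{-\xi}\mid\mathcal D,\widetilde Q]\bigr]=\mathbb E_{H_0}[E_i]\le1$. This is the same argument that underlies Proposition~\ref{prop:peeling_privacy}, with the additional observation that $\hat s$ affects only the variance and mean of $\xi$ jointly, in a way that always keeps the log-normal mean equal to one.
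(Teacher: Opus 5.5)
Your proposal is correct and follows essentially the same route as the paper. The steps match: a $\mu_0$-GDP noisy-margin release, post-processing to $\hat s$, adaptive composition with the $\mu_{\mathrm{peel}}$-GDP peeling stage from Proposition~\ref{prop:peeling_privacy}, and validity from $\mathbb{E}[e^{-\xi}\mid \hat s]=1$ for every realized $\hat s$. The one variation is that you use the joint $\ell_2$-sensitivity Gaussian mechanism where the paper composes $|\mathcal{K}|$ coordinatewise $(\mu_0/\sqrt{|\mathcal{K}|})$-GDP releases, which is equivalent.

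Two wording points are worth tightening, and the paper's proof glosses over both as well. First, $\xi$ is not independent of $\widetilde Q$, since its law depends on $\hat s$; it only has a fixed law conditionally on $\widetilde Q$. Second, for an unselected $i$ no noise is drawn, so the bound $\hat E_i\le E_i e^{-\xi}$ needs a phantom draw to be meaningful.
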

		%%%%%%%%

	\section{Numerical Experiments}\label{sec:experiments}
	We evaluate the proposed methodology through numerical experiments. The experiments address two questions: how much the noise-calibrated threshold improves power and whether adaptive peeling mitigates the power loss from simultaneous privatization of all $e$-values in large-scale settings. 
	%We first isolate the effect of the noise-calibrated threshold in a single hypothesis test. We then assess the full multiple testing procedure under varying privacy and signal configurations, comparing fixed and adaptive peeling against the non-private e-BH procedure and a baseline that privatizes all test statistics simultaneously.
	
	%	We evaluate the proposed methodology through numerical experiments. The experiments address two questions: whether the noise-calibrated threshold recovers the power predicted by the single-test theory, and whether peeling avoids the power loss caused by privatizing all $e$-values simultaneously in large-scale multiple testing. We first isolate the effect of the noise-calibrated threshold in a single hypothesis testing setting. We then assess the performance of the full multiple testing procedure under varying privacy and signal configurations, comparing both fixed and adaptive peeling against the non-private e-BH procedure and an All-Noisy GDP baseline that privatizes all test statistics simultaneously.
	
	\subsection{Power Improvement via Noise-Calibration}\label{subsec:single_sim}
	
	We consider testing $H_0: Z \sim \mathcal{N}(0, 1)$ against $H_1: Z \sim \mathcal{N}(\lambda, 1)$ using the $e$-value $E = \exp(\lambda Z - \lambda^2/2)$ with $\lambda = \sqrt{\log(1/\alpha)}$. 
	%This choice of $\lambda$ maximizes the Chernoff bound on $\mathbb{P}_{H_0}(E \geq 1/\alpha)$ and is therefore optimal for the standard Markov threshold. 
	We fix the privacy parameter at $\mu = 0.25$ and the significance level at $\alpha = 0.05$. The sensitivity varies over $\log_{10}(\Delta) \in \{-3, -2.75, \dots, 1\}$. For each configuration, we estimate empirical Type I error and power using $10^5$ Monte Carlo trials.
	
	Figure~\ref{fig:calibration_log_delta} displays the results. Both the standard Markov threshold $1/\alpha$ and the calibrated threshold $c^*$ maintain Type I error control at the nominal level. The calibrated threshold consistently achieves higher power than the standard private threshold across all sensitivity values. Over the low-to-moderate sensitivity range, approximately $-3 \leq \log_{10}(\Delta) \leq -0.5$, the calibrated procedure also exceeds the non-private baseline. The gray line marks the non-private baseline, whose power is exactly 0.5 in this simulation.
	 This strict improvement aligns with the asymptotic analysis in Section~\ref{sec:3.2}.
	
	As sensitivity increases, the power advantage of the calibrated threshold becomes more pronounced. For $\log_{10}(\Delta) > 0.25$, the power under the standard $1/\alpha$ threshold drops to near zero, while the calibrated threshold retains positive power. At $\log_{10}(\Delta) = 0$, the calibrated threshold yields approximately five times the power of the standard method. The power ratio plot in Figure~\ref{fig:calibration_log_delta} (right panel) highlights this gain; the apparent divergence at higher sensitivities reflects the vanishing denominator as the standard procedure loses all power. These results confirm that exploiting the known noise distribution to calibrate the rejection threshold effectively recovers statistical power and, under low sensitivity, yields a net gain relative to the non-private procedure.

	\begin{figure}[ht]
		\centering
		\includegraphics[width=0.95\textwidth]{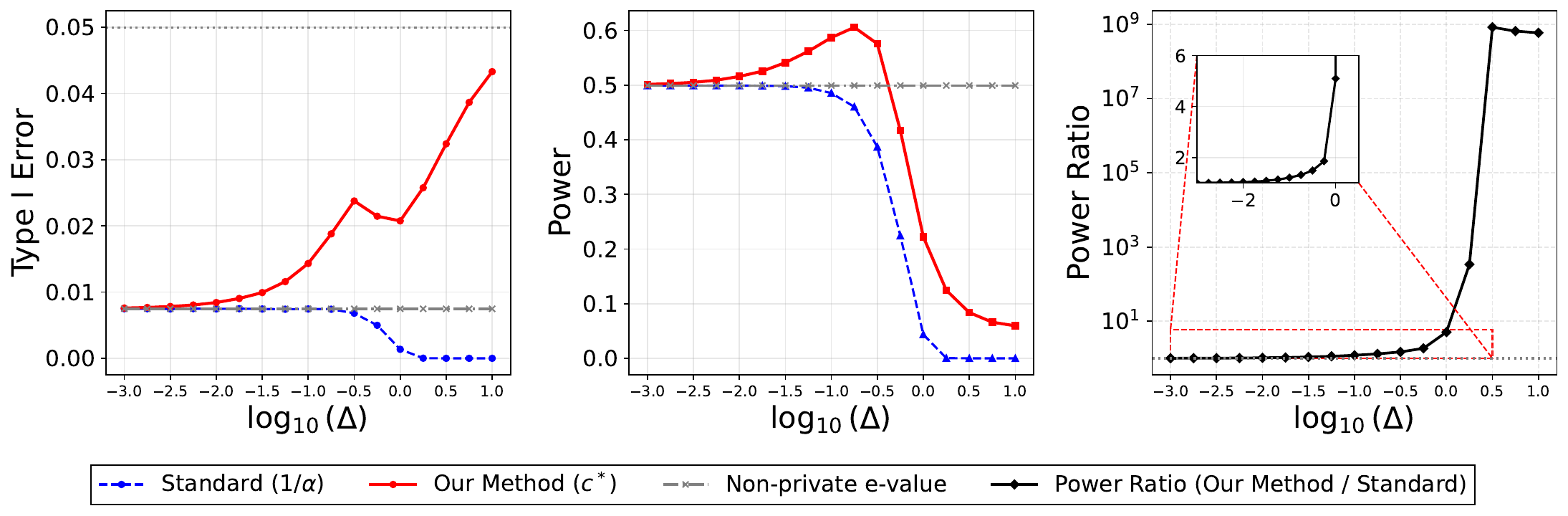}
		\caption{Left: Type I error across varying sensitivity $\Delta$. Middle: Power across varying sensitivity $\Delta$. Right: Power ratio demonstrating the improvement of our method over the standard threshold.}
		\label{fig:calibration_log_delta}
	\end{figure}
	
	\subsection{Multiple Testing for Independent $e$-values}\label{sec:5.2}
	
We evaluate the peeling algorithms from Section~\ref{sec:depth_framework} in a multiple testing setting with FDR control.
We consider $m$ simultaneous hypothesis tests, with $m_1$ true alternatives and $m-m_1$ true nulls. For each hypothesis $i \in \{1, \dots, m\}$, we generate an independent observation
$
X_i \sim \mathcal{N}(\eta_i, 1),
$
where the mean $\eta_i$ is set to $\eta_{\mathrm{alt}} > 0$ for the first $m_1$ hypotheses (signals) and to $0$ for the remaining $m-m_1$ hypotheses (nulls). The non-private $e$-values are constructed as
\[
E_i = \exp\!\left(\lambda X_i - \frac{\lambda^2}{2}\right), \qquad \lambda = \sqrt{\log(m/\alpha)}.
\]
%This choice of $\lambda$ maximizes the Chernoff bound on the tail probability $\mathbb{P}_{H_0}(E_i \geq 1/\alpha)$, making it a standard benchmark for $e$-value construction.
	We compare the following four procedures:
	\begin{itemize}[itemsep=0pt, topsep=2pt, parsep=0pt, partopsep=0pt]
		\item \textbf{Non-private e-BH.} The standard e-BH procedure \citep{wang2022false} applied to the non-private $e$-values $\{E_i\}_{i=1}^m$. This serves as a baseline without injected noise.
		
		\item \textbf{All-Noisy GDP.} A baseline differentially private method that first constructs $E_i^{\text{DP}}=E_i e^{-\xi_i}$ where $\xi_i$ are i.i.d. random variables from $\mathcal{N}\big(\frac{m\Delta^2}{2\mu^2}, \frac{m\Delta^2}{\mu^2}\big)$. The standard e-BH procedure is subsequently applied to $\{E^{\text{DP}}_i\}_{i=1}^m$. It follows from the general GDP composition theorem that this procedure satisfies $\mu$-GDP.

		\item \textbf{Fixed peeling.} We first apply Algorithm~\ref{alg:e_peeling} with a fixed peeling size $s$ to obtain private $e$-values $\hat{\cal{E}}$, and then apply the e-BH procedure to $\hat{\cal{E}}$.
		
		\item \textbf{Adaptive peeling.} We apply Algorithm~\ref{alg:adaptive_peeling_size} to select the peeling size privately and then run the peeling-based e-BH procedure with the remaining privacy budget.
		
	%	The algorithm from Section~\ref{sec:depth_framework}, which first extracts a subset of $m$ promising candidates using the Gumbel-based Report Noisy Max mechanism, privatizes the selected $e$-values with the canonical Gaussian noise, and then applies the e-BH procedure to the output vector.
	\end{itemize}
				Unless otherwise stated, we fix the following default configuration. The total number of hypotheses is set to $m = 10^5$, with $m_1 = 100$ true signals of strength $\eta_{\mathrm{alt}} = 4.0$. The target FDR level is $\alpha = 0.05$, and the fixed peeling size is $s = 500$. For adaptive peeling, we allocate $\mu_0=0.1\mu_{\mathrm{gdp}}$ to the noisy margin step used to choose the peeling size, and use $\mu_{\mathrm{peel}}=\sqrt{\mu_{\mathrm{gdp}}^2-\mu_0^2}$ for the subsequent peeling procedure. Since GDP composes through squared privacy parameters, this leaves $\mu_{\mathrm{peel}}=\sqrt{0.99}\mu_{\mathrm{gdp}}\approx0.995\mu_{\mathrm{gdp}}$ for peeling. We take the candidate set $\mathcal{K}$ to be the dyadic grid starting at $s_{\min}=50$. The final choice of $\hat{s}$ follows Algorithm~\ref{alg:adaptive_peeling_size}. Privacy parameters are set to $\mu_{\mathrm{gdp}} = 4\epsilon/\sqrt{10\log(1/\delta_{\mathrm{dp}})}$  with $\epsilon = 0.5$ and $\delta_{\mathrm{dp}} = 10^{-3}$, as in \cite{xia2023adaptive}.
	The sensitivity of the $e$-values is set to $\Delta = 5 \times 10^{-3}$.

To study the impact of each parameter, we vary one parameter at a time while holding the others at their default values. The sensitivity grid is $\log_{10}(\Delta) \in \{-4, -3.75, \dots, -1\}$, the sparsity grid is $m_1 \in \{50, 100, 150, 200, 250, 300\}$, the signal strength grid is $\eta_{\mathrm{alt}} \in \{2, 3, 4, 5, 6, 7\}$, and the privacy budget grid is $\epsilon \in \{0.1, 0.2, \dots, 1.5\}$. For each configuration, we report the empirical FDR and average power (AP), which is defined as the expected proportion of true discoveries among the non-null hypotheses:
\[
\mathrm{AP} = \mathbb{E}\left[ \frac{\sum_{i \in \mathcal{S}_1} \delta_i}{m_1} \right],
\]
where $\mathcal{S}_1$ denotes the index set of false null hypotheses, $m_1 = |\mathcal{S}_1|$, and $\delta_i = \mathbb{I}\{H_i \text{ is rejected}\}$. In simulations, this quantity is estimated by the empirical average of the true positive rate across replications.
The results are averaged over $100$ independent Monte Carlo trials.

%%%%%%%%%

	\begin{figure}[htbp]
		\centering
		\includegraphics[width=0.85\textwidth]{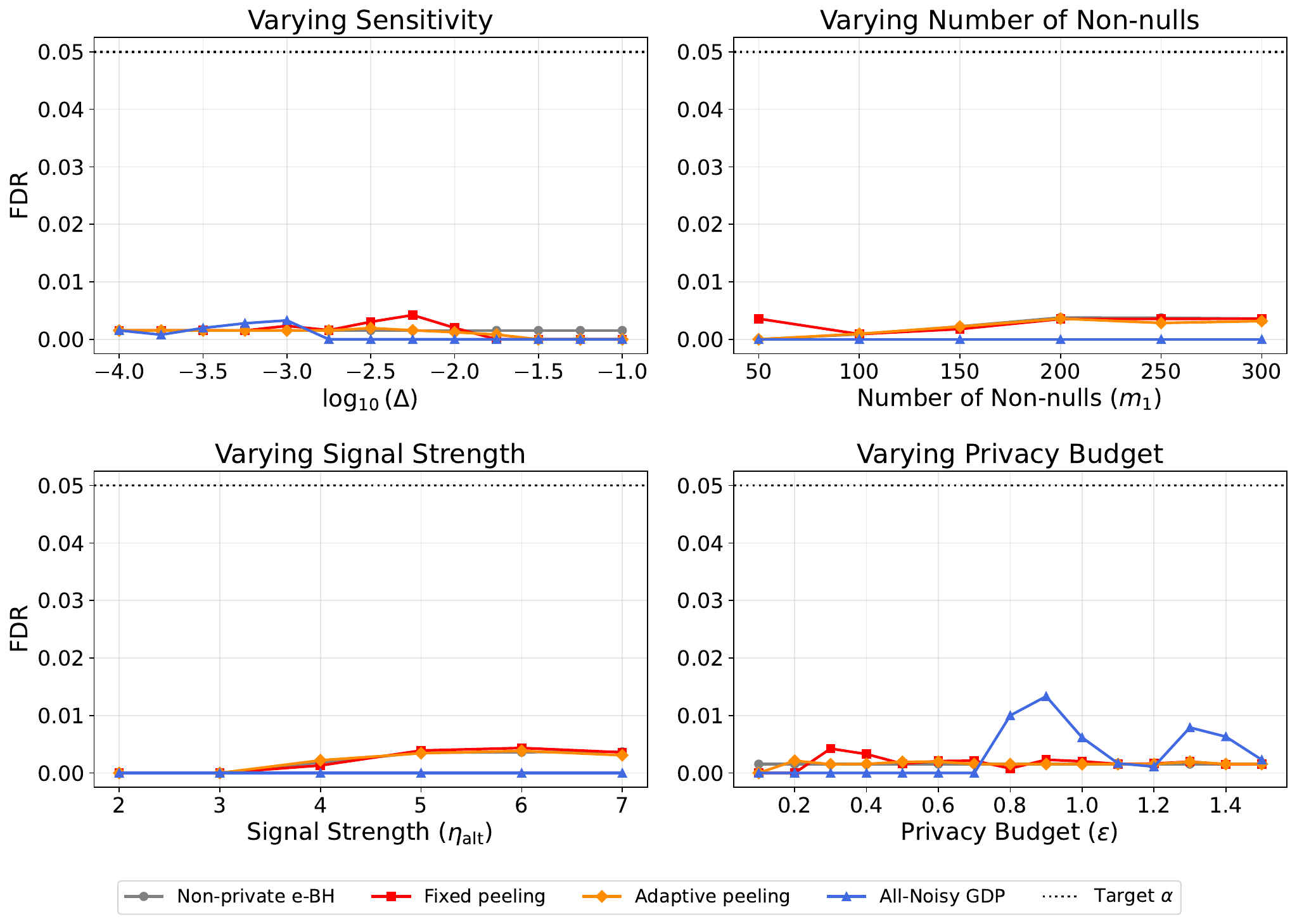}
		\caption{Independent case; empirical FDR for varying $\Delta$, $m_1$, $\eta_{\mathrm{alt}}$, and $\epsilon$, averaged over repeated simulations. The horizontal line is the target level $\alpha=0.05$.}
		\label{fig:inde_fdr}
	\end{figure}
	
	\begin{figure}[htbp]
		\centering
		\includegraphics[width=0.85\textwidth]{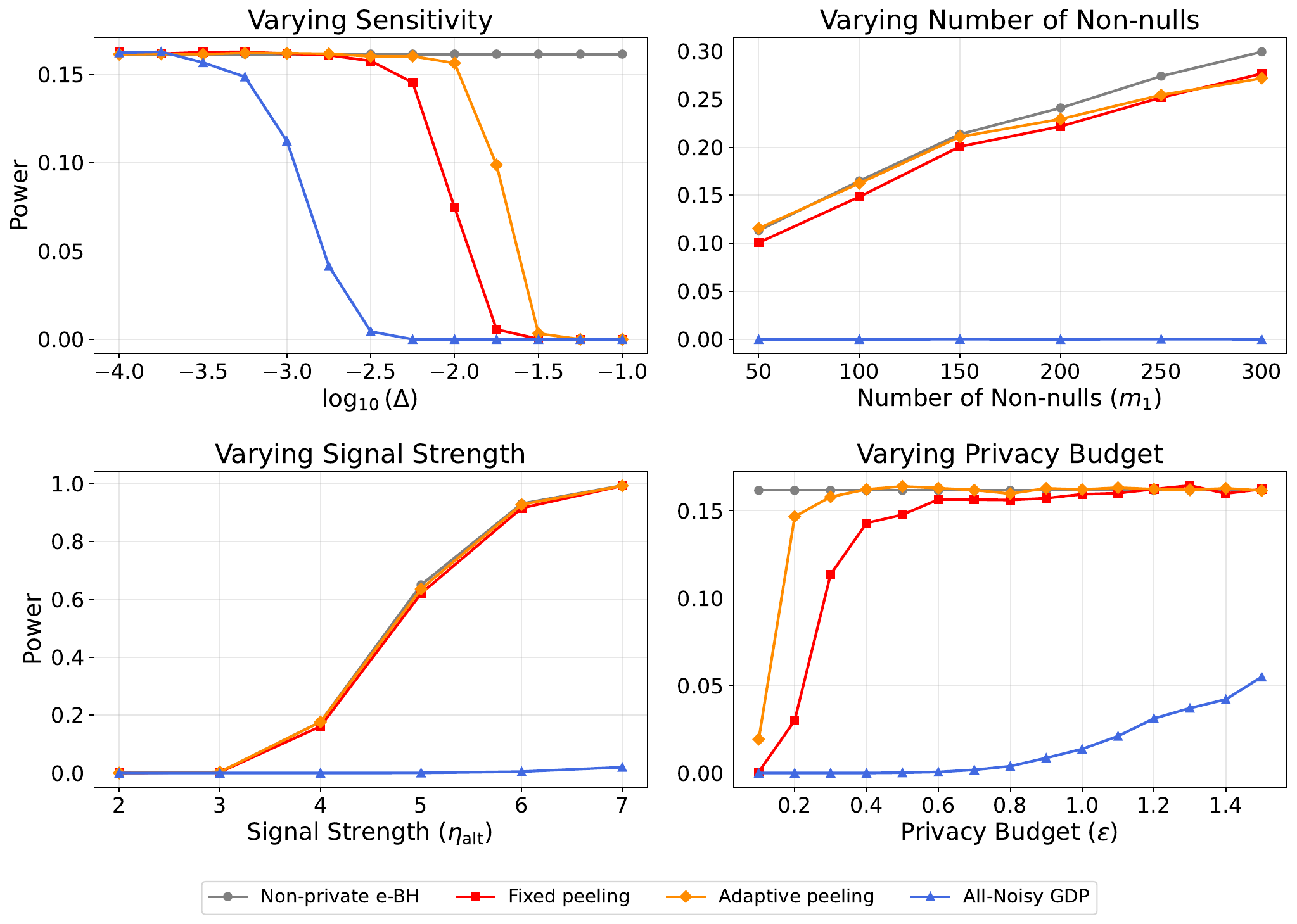}
		\caption{Independent case; empirical power for varying $\Delta$, $m_1$, $\eta_{\mathrm{alt}}$, and $\epsilon$, averaged over repeated simulations.}
		\label{fig:inde_power}
	\end{figure}
	
Figures \ref{fig:inde_fdr} and \ref{fig:inde_power} summarize the empirical performance across all parameter grids. All methods control the FDR at or below the target level $\alpha = 0.05$, confirming the theoretical guarantees. The All-Noisy GDP baseline has little power in most cases, reflecting the cost of protecting all $m$ $e$-values simultaneously. Both peeling variants are substantially more powerful, with adaptive peeling outperforming fixed peeling.

%with adaptive peeling oFixed peeling performs well when the pre-specified value $s=500$ is appropriate for the rejection frontier, while adaptive peeling achieves comparable power without requiring this value to be chosen in advance.

	%%%%%%%%%%%%%
%	Figures \ref{fig:inde_fdr} and \ref{fig:inde_power} show three main patterns.
%	First, all methods control FDR below the target level across all grids.
%	Second, our method consistently dominates the All-Noisy baseline in power, with the largest gains under high sensitivity (large $\Delta$) and strong privacy (small $\epsilon$).
%	Third, as expected, power increases with denser signals ($n_1$ larger) and stronger effects ($\eta_{\mathrm{alt}}$ larger), and our method remains close to non-private e-BH in most regimes.
	
	\subsection{Correlated $e$-values}
	Now we consider the case where the $e$-values are correlated. In practice, test statistics frequently exhibit dependence due to shared latent factors or structured noise. To evaluate the robustness of our framework under such conditions, we replace the independent data generation model with a one-factor dependence structure:
\[
X_i = \eta_i + \sqrt{\rho}\, W + \sqrt{1-\rho}\, Z_i,
\]
	where $W \sim \mathcal{N}(0, 1)$ is a common latent factor and $Z_i \overset{\text{i.i.d.}}{\sim} \mathcal{N}(0, 1)$ are idiosyncratic errors. We set the correlation parameter to $\rho = 0.3$. Under this specification, the resulting $e$-values inherit positive dependence through the shared component $W$. All other experimental configurations remain identical to the independent setting, including the mean configuration $\eta_i$, the $e$-value construction, the parameter grids, and the baseline comparison methods. 
 The empirical results are summarized in Figures \ref{fig:depen_fdr} and \ref{fig:depen_power}.

	\begin{figure}[htbp]
		\centering
		\includegraphics[width=0.85\textwidth]{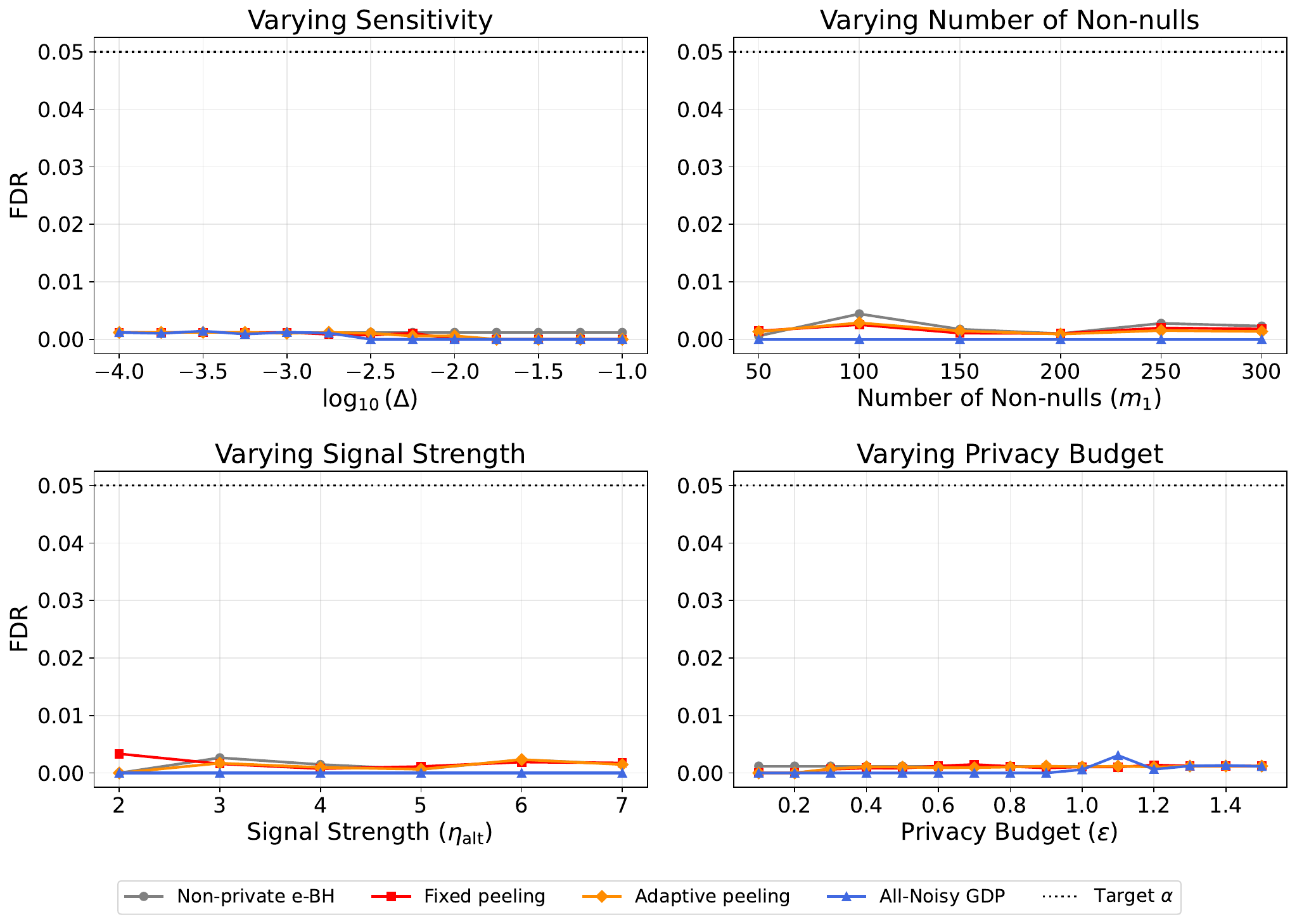}
		\caption{Correlated case; empirical FDR for varying $\Delta$, $m_1$, $\eta_{\mathrm{alt}}$, and $\epsilon$, averaged over repeated simulations. The horizontal line is the target level $\alpha=0.05$.}
		\label{fig:depen_fdr}
	\end{figure}
	
	\begin{figure}[htbp]
		\centering
		\includegraphics[width=0.85\textwidth]{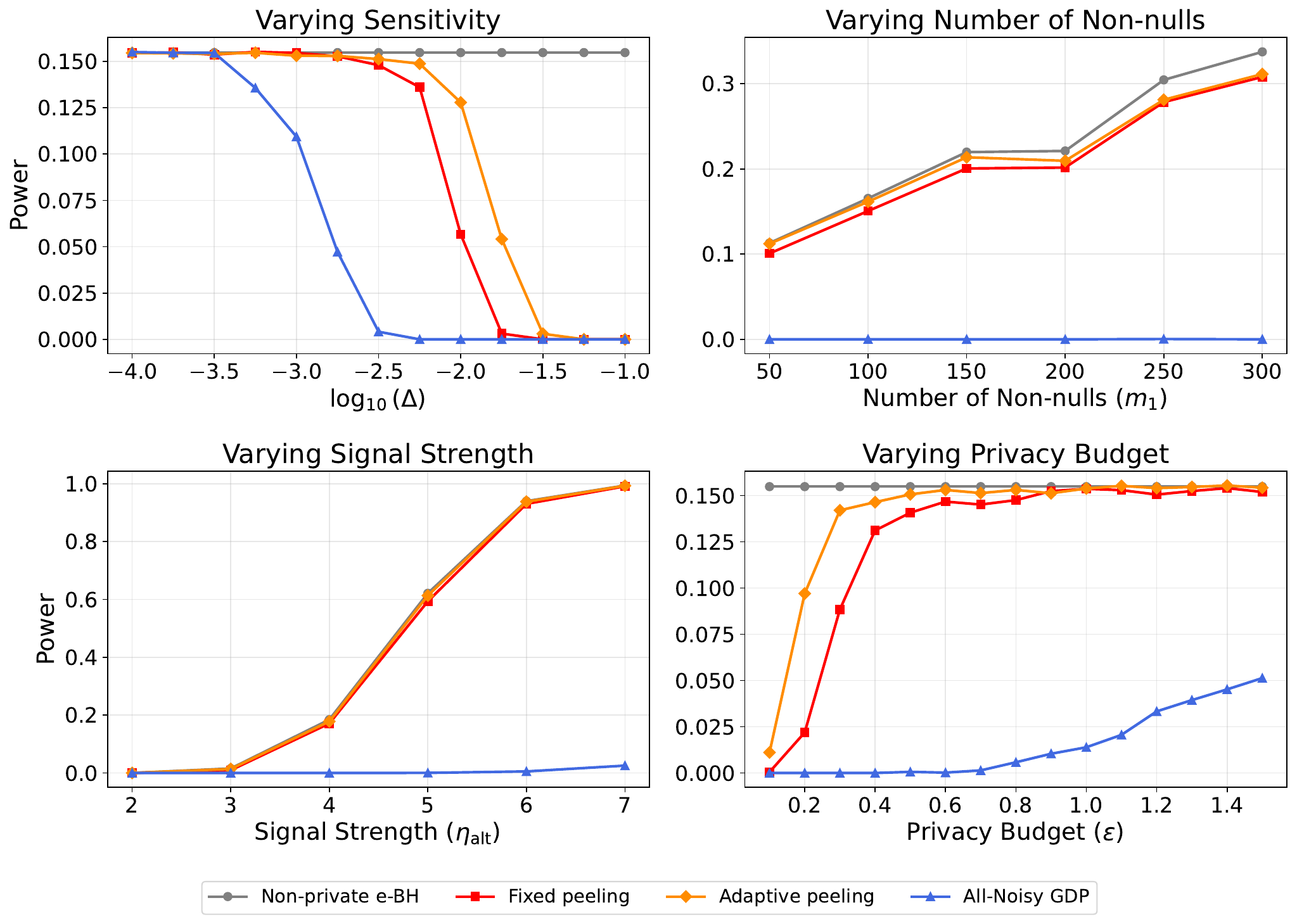}
		\caption{Correlated case; empirical power for varying $\Delta$, $m_1$, $\eta_{\mathrm{alt}}$, and $\epsilon$, averaged over repeated simulations.}
		\label{fig:depen_power}
	\end{figure}
	
The figures display patterns consistent with the independent setting. All methods maintain FDR control at the target level, confirming that our procedure remains valid under both data-dependent peeling and correlation-induced dependence. In terms of power, the peeling methods uniformly dominate the All-Noisy GDP baseline, with adaptive peeling outperforming fixed peeling.
%Adaptive peeling closely tracks fixed peeling across most regimes and can be more favorable when the fixed choice of $s$ leads to excessive noise. Overall, across both independent and correlated settings, the results show rigorous FDR control with materially improved power relative to the naive All-Noisy GDP approach.

	\section{Real Data Analysis}\label{sec:realdata}
	We illustrate the practical utility of the proposed methodology using large-scale genomic data. Summary statistics for Systemic Lupus Erythematosus (SLE) were obtained from the OpenGWAS database (Study ID: ukb-b-3454), \url{https://opengwas.io/datasets/ukb-b-3454}. The dataset comprises $m = 6,196,160$ single nucleotide polymorphisms (SNPs).
	
	\citet{homer2008resolving} demonstrated that GWAS summary statistics do not provide perfect privacy protection: an individual's participation in a study can be inferred with high confidence from such aggregate data. To address this vulnerability, we apply the proposed GDP peeling framework, including the fixed-size peeling algorithm (Algorithm~\ref{alg:e_peeling}) and its adaptive peeling-size version (Algorithm~\ref{alg:adaptive_peeling_size}), to ensure privacy guarantees while maintaining statistical utility.
	
	For each hypothesis, we construct the $e$-value as $E_i = \exp(\lambda Z_i - \lambda^2/2)$, where $Z_i$ is the reported $z$-score and $\lambda = \sqrt{\log(m/\alpha)}$. We apply both the fixed and adaptive GDP $e$-Peeling procedures to identify significant SNPs across a range of pre-specified FDR levels $\alpha \in \{0.01, 0.015, \dots, 0.05\}$.
	
			The privacy parameter is set to $\mu = 0.25$, following \citet{xia2023adaptive}. The sensitivity parameter is $\Delta = 5 \times 10^{-3}$. We compare the four methods mentioned in Section~\ref{sec:5.2}. For fixed peeling, the peeling size is set to 500. For adaptive peeling, we allocate $\mu_0=0.1\mu$ to the noisy margin step and use $\mu_{\mathrm{peel}}=\sqrt{\mu^2-\mu_0^2}\approx0.995\mu$ for the subsequent peeling procedure. The dyadic grid starts at $s_{\min}=50$. The results are displayed in Figure~\ref{fig:realdata}.
	
	\begin{figure}[ht]
		\centering
		\includegraphics[width=0.8\textwidth]{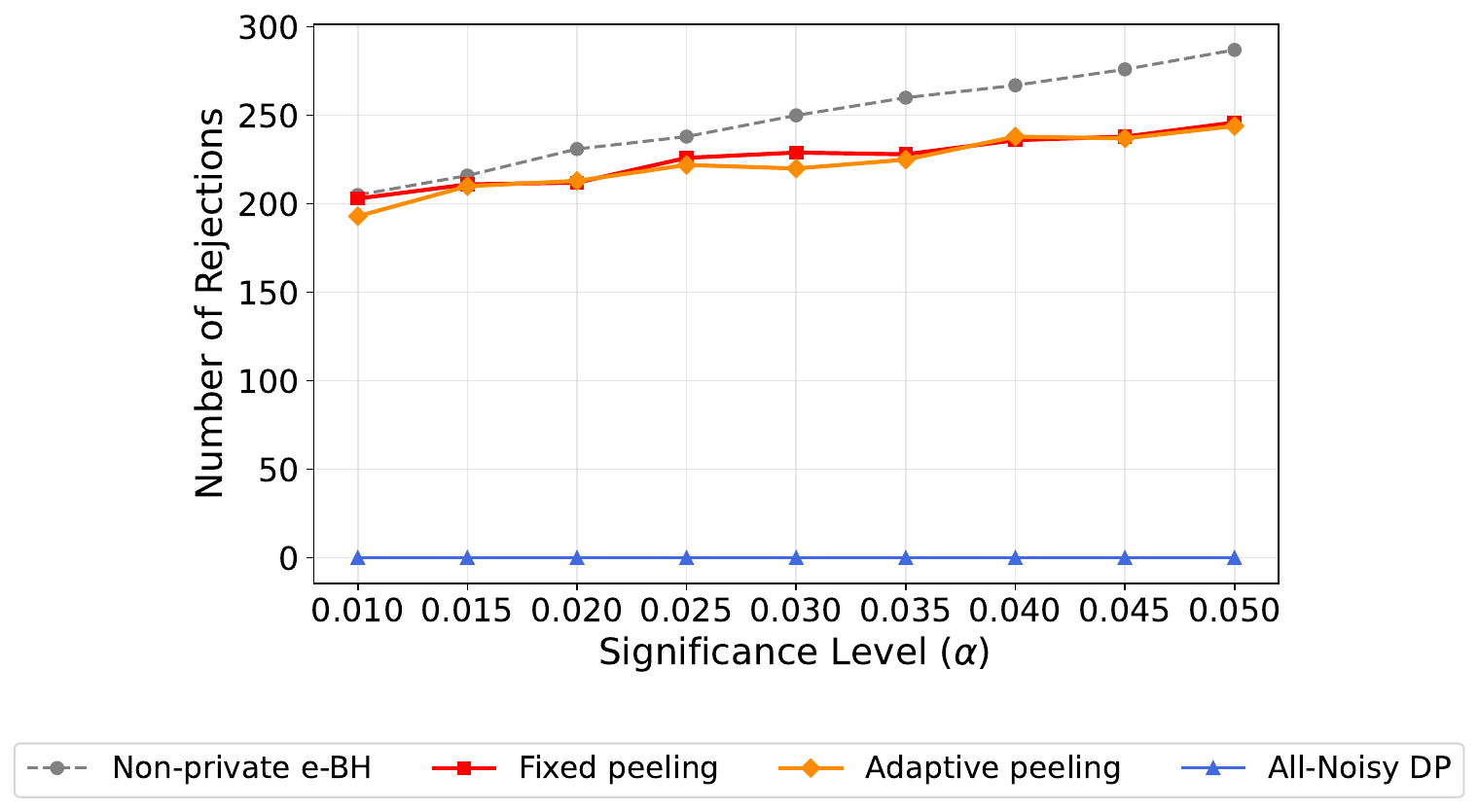}
		\caption{GWAS dataset; number of discoveries for non-private e-BH, All-Noisy GDP, fixed peeling, and adaptive peeling across varying FDR levels.}
		\label{fig:realdata}
	\end{figure}
	
	Figure~\ref{fig:realdata} shows that the All-Noisy GDP baseline yields no discoveries, as the noise magnitude required to protect $m$ hypotheses overwhelms the underlying signals. In contrast, both peeling algorithms recover a large fraction of the non-private e-BH discoveries. Fixed peeling performs well on this dataset, while the adaptive procedure uses Algorithm~\ref{alg:adaptive_peeling_size} to choose the peeling size from the data and gives comparable discovery counts without requiring this tuning parameter to be fixed in advance.

	\section{Discussion}\label{sec:conclusion}

	This paper establishes a comprehensive framework for differentially private inference based on $e$-values. We characterized the canonical $\mu$-GDP $e$-value construction, derived a noise-calibrated threshold that expands the rejection region without compromising Type~I error control, and provided a precise asymptotic analysis of the resulting power gain and unavoidable loss. Furthermore, we developed a decoupled peeling algorithm that leverages the robustness of $e$-values to arbitrary dependence, enabling rigorous multiple testing under strict privacy constraints.
	
	A natural theoretical extension of our work concerns the optimality of the Gaussian mechanism under relaxed privacy constraints. Theorems \ref{thm:noise_shape} and \ref{thm:optimal_expectation} establish the canonical construction under the exact exhaustion condition. One may ask whether this construction remains optimal when the privacy requirement is weakened to a lower bound. Specifically, suppose the noise variable $\xi$ is symmetric with respect to its expectation and possesses a log-concave density. Consider the class of all noise distributions satisfying
\[
	\inf_{\mathcal{D} \sim \mathcal{D}'} T\left(E^{\mathrm{DP}}(\mathcal{D}), E^{\mathrm{DP}}(\mathcal{D}')\right)(\alpha) \geq G_{\mu}(\alpha), \quad \forall \alpha \in (0,1).
	\]
	Does the Gaussian choice $\xi \sim \mathcal{N}(\tau, \Delta^2/\mu^2)$ strictly maximize statistical power among all mechanisms in this class? 
	
	Standard intuition from differential privacy suggests that tighter privacy constraints uniformly degrade statistical utility. Under this view, the boundary solution that exactly achieves $G_{\mu}$ would naturally dominate any strictly stronger mechanism. However, the asymptotic analysis in Section~\ref{sec:3.2} reveals a more complex relationship. Injecting noise can yield a net power gain over the non-private baseline in low-sensitivity regimes. This result demonstrates that a more stringent privacy constraint does not monotonically reduce testing power. Consequently, the monotonicity intuition fails, and establishing whether the Gaussian mechanism remains optimal under the relaxed inequality constraint is non-trivial. We leave this characterization as an open problem.

	\bibliographystyle{apalike}
	\bibliography{reference.bib}

	\newpage

	\begin{center}
	\begin{spacing}{2}
		{\Large Supplementary Material for ``Gaussian Differentially Private $e$-values: Construction, Threshold Calibration, and Multiple Testing''}
	\end{spacing}
\end{center}
This supplement provides technical proofs of the theoretical results in the main paper
(Section~\ref{proof}) and gives a detailed analysis of a privacy-accounting gap in the
selection mechanism of \cite{xia2023adaptive} (Section~\ref{app:general_selection}).
\appendix

\section{Technical Proofs}\label{proof}

\subsection{Proof of Theorem~\ref{thm:noise_shape}}
\begin{proof}
	By the symmetry assumption, there exists a mean $\tau \in \mathbb{R}$ such that the centered noise $\xi_0 \triangleq \xi - \tau \stackrel{d}{=} \tau-\xi$. Let $F(x)$ be the CDF of $\xi_0$. The trade-off function between $E^{\mathrm{DP}}(\mathcal{D}')$ and $E^{\mathrm{DP}}(\mathcal{D})$ is:
\begin{align*}
		T\big(E^{\mathrm{DP}}(\mathcal{D}'), E^{\mathrm{DP}}(\mathcal{D})\big) &= T\big(\log E(\mathcal{D}') - \xi, \log E(\mathcal{D}) - \xi\big) \\
		&= T\big(\xi-\tau-\log E(\mathcal{D}'), \xi-\tau - \log E(\mathcal{D})\big)\\
		&= T\big(\xi-\tau, \xi-\tau + \log E(\mathcal{D}') - \log E(\mathcal{D})\big)\\
		&= T\big(\xi_0, \xi_0 + \log E(\mathcal{D}') - \log E(\mathcal{D})\big).
	\end{align*}
	Let $t \triangleq \log E(\mathcal{D}') - \log E(\mathcal{D})$, so  $|t| \le \Delta$.
	
	Since $\xi$ is log-concave, by Proposition A.3 of \citet{dong2022gdp}, for $t > 0$, the trade-off function is $F(F^{-1}(1-\alpha) - t)$. Because $F$ is increasing, this is minimized over $t \in (0, \Delta]$ at $t = \Delta$. Let $T_F(\alpha) \triangleq F(F^{-1}(1-\alpha) - \Delta)$.
	
	For $t < 0$, let $s = -t \in (0, \Delta]$. The trade-off is $T(\xi_0, \xi_0 - s) = T(-\xi_0, -\xi_0 + s)$. By symmetry $\xi_0 \stackrel{d}{=} -\xi_0$, this is equal to $T(\xi_0, \xi_0 + s)$. Since $s > 0$, we can apply Proposition A.3 of \citet{dong2022gdp} again to obtain {\color{black}$T(\xi_0, \xi_0 + s)=F(F^{-1}(1-\alpha) - s)$. }This is minimized over $s \in (0, \Delta]$ at $s = \Delta$ (which corresponds to $t = -\Delta$). This yields the same minimum curve $T_F(\alpha)$.
		Thus, $\inf_{\mathcal{D} \sim \mathcal{D}'} T = G_\mu$ simplifies to:
	
\begin{equation} \label{eq:global_identity}
		T_F(\alpha) = G_\mu(\alpha), \quad \forall \alpha \in (0, 1).
	\end{equation}
	Before solving \eqref{eq:global_identity}, we verify that $0 < F(x) < 1$ for all $x \in \mathbb{R}$ and $F(x)$ is strictly increasing. This ensures $x = F^{-1}(1-\alpha)$ can take every value in $\mathbb{R}$.
	
	Suppose $F(A) = 0$ for some $A \in \mathbb{R}$. Let $x_* = \sup\{x \in \mathbb{R} : F(x) = 0\}$. As $\alpha \to 1$, $F^{-1}(1-\alpha) \to x_*$. Because $\Delta > 0$, we have $F^{-1}(1-\alpha) - \Delta < x_*$ for $\alpha$ sufficiently close to $1$. By the definition of $x_*$, this yields $T_F(\alpha) = F(F^{-1}(1-\alpha) - \Delta) = 0$, which contradicts $G_\mu(\alpha) > 0$ for $\alpha \in (0, 1)$.
	
	Suppose $F(B) = 1$ for some $B \in \mathbb{R}$. Let $x^* = \inf\{x \in \mathbb{R} : F(x) = 1\}$. Since $F(B)=1$, $x^*$ is finite. As $\alpha \to 0$, $F^{-1}(1-\alpha) \to x^*$, yielding $\lim_{\alpha \to 0} T_F(\alpha) = F(x^* - \Delta)$. Since $x^* - \Delta < x^*$, the definition of $x^*$ implies $F(x^* - \Delta) < 1$. This contradicts $\lim_{\alpha \to 0} G_\mu(\alpha) = 1$.
	
	Since $F(x)$ cannot reach $0$ or $1$ at any $x \in \mathbb{R}$, the support of the log-concave density $p(x)$ for $\xi_0$ must be $\mathbb{R}$. Thus, $p(x) > 0$ for all $x \in \mathbb{R}$, ensuring $F(x)$ is strictly increasing.

	Expanding \eqref{eq:global_identity} yields $F(F^{-1}(1-\alpha) - \Delta) = \Phi(\Phi^{-1}(1-\alpha) - \mu)$. Let $x = F^{-1}(1-\alpha)$. Since $0 < F(x) < 1$ for all $x \in \mathbb{R}$ and $F(x)$ is strictly increasing, $x$ can be any real number. Applying $\Phi^{-1}$ gives $\Phi^{-1}(F(x - \Delta)) = \Phi^{-1}(F(x)) - \mu$. Let $H(x) = \Phi^{-1}(F(x))$, we obtain:
\begin{equation*}
		H(x) - H(x-\Delta) = \mu, \quad \forall x \in \mathbb{R}.
	\end{equation*}
	The general solution is $H(x) = (\mu/\Delta)x + P(x)$, where $P(x)$ is $\Delta$-periodic.

	The density can be expressed as $p(x) = F'(x) = \phi(H(x))H'(x)$. Taking the logarithm yields the log-density:
\begin{equation*}
		V(x) = \log p(x) = -\frac{1}{2}H(x)^2 + \log H'(x) - \frac{1}{2}\log 2\pi.
	\end{equation*}
	Using the periodicities $H(x+n\Delta) = H(x) + n\mu$ and $H'(x+n\Delta) = H'(x)$ for any integer $n \in \mathbb{Z}$, we evaluate $V(x)$ at $x+n\Delta$:
\begin{align*}
		V(x+n\Delta) &= -\frac{1}{2}\big(H(x)+n\mu\big)^2 + \log H'(x) - \frac{1}{2}\log 2\pi \\
		&= V(x) - n\mu H(x) - \frac{1}{2}n^2\mu^2.
	\end{align*}
	Because $\xi$ is log-concave, $V(x)$ is a concave function on $\mathbb{R}$. A property of concave functions is that their forward differences are monotone decreasing. Thus, for any $\delta > 0$ and any $a < b \in \mathbb{R}$, applying this property to the points $a+n\Delta$ and $b+n\Delta$ yields:
\begin{equation*}
		V(a+\delta+n\Delta) - V(a+n\Delta) \ge V(b+\delta+n\Delta) - V(b+n\Delta).
	\end{equation*}
	Substituting the expansion of $V(x+n\Delta)$ into both sides:
\begin{equation*}
		\big(V(a+\delta) - V(a)\big) - n\mu \big(H(a+\delta) - H(a)\big) \ge \big(V(b+\delta) - V(b)\big) - n\mu \big(H(b+\delta) - H(b)\big).
	\end{equation*}
	Rearranging to group the terms multiplied by $n\mu$ gives:
\begin{equation*}
		n\mu \Big( \big(H(b+\delta) - H(b)\big) - \big(H(a+\delta) - H(a)\big) \Big) \ge \big(V(b+\delta) - V(b)\big) - \big(V(a+\delta) - V(a)\big).
	\end{equation*}
	This must hold as $n \to \pm \infty$, meaning $H(b+\delta) - H(b) = H(a+\delta) - H(a)$. By an easy argument we can prove that if we define $A(x) = H(x)-H(0)$, we have $A(x+\delta) - A(x) = A(\delta)$ for all $x \in \mathbb{R}$ and $\delta > 0$. Since $A(x)$ is continuous, we can conclude that $A(x) = cx$ for some $c$. Meanwhile, since $H(x) = (\mu/\Delta)x + P(x)$, we have $A(x) = H(x) - H(0) = (\mu/\Delta)x + P(x) - P(0)$. Thus, $P(x) - P(0) = (c - \frac{\mu}{\Delta})x$. Since $P(x)$ is $\Delta$-periodic, we have $P(\Delta) - P(0) = 0$, which gives $c = \mu/\Delta$.

	Therefore, we have $H(x) = (\mu/\Delta)x + C_0$.
	Since $H(x) = \Phi^{-1}(F(x))$, we obtain that:
\begin{equation*}
		F(x) = \Phi\big((\mu/\Delta)x + C_0\big).
	\end{equation*}
	Since $\xi_0 \stackrel{d}{=} -\xi_0$, we have $F(0) = 0.5$, so we have $C_0 = 0$. Thus $F(x)$ is the CDF of $\mathcal{N}\left(0, \Delta^2/\mu^2\right)$. Hence $\xi = \xi_0 + \tau \sim \mathcal{N}\left(\tau, \Delta^2/\mu^2\right)$.
\end{proof}

\subsection{Proof of Theorem~\ref{thm:optimal_expectation}}
\begin{proof}
	Since the injected noise $\xi \sim \mathcal{N}\left(\tau, \Delta^2/\mu^2\right)$ is independent of the dataset $\mathcal{D}$, we have:
\begin{equation*}
		\mathbb{E}_{H_0}[E^{\mathrm{DP}}(\mathcal{D})] = \mathbb{E}_{H_0}[E(\mathcal{D})] \cdot \mathbb{E}[e^{-\xi}].
	\end{equation*}
	Given that $E(\mathcal{D})$ is a valid $e$-value ($\mathbb{E}_{H_0}[E(\mathcal{D})] \le 1$), $E^{\mathrm{DP}}(\mathcal{D})$ is a valid $e$-value if and only if $\mathbb{E}[e^{-\xi}] \le 1$.
	Using the moment-generating function of $\xi$, this requires:
\begin{equation*}
		\mathbb{E}[e^{-\xi}] = \exp\left(-\tau + \frac{\Delta^2}{2\mu^2}\right) \le 1 \iff \tau \ge \frac{\Delta^2}{2\mu^2}.
	\end{equation*}
	Under the alternative hypothesis $H_1$, the expected log-growth is:
\begin{equation*}
		\mathbb{E}_{H_1}[\log E^{\mathrm{DP}}(\mathcal{D})] = \mathbb{E}_{H_1}[\log E(\mathcal{D})] - \mathbb{E}[\xi] = \mathbb{E}_{H_1}[\log E(\mathcal{D})] - \tau.
	\end{equation*}
	This objective is strictly decreasing in $\tau$.
	The rejection probability at a fixed threshold $c>0$ is:
\begin{align*}
		\mathbb{P}_{H_1}\left(E^{\mathrm{DP}}(\mathcal{D}) \ge c\right) &= \mathbb{P}_{H_1}\left(E(\mathcal{D})e^{-\xi} \ge c\right) \\
		&= \mathbb{E}_{H_1} \left[ \mathbb{P}\left(\xi \le \log E(\mathcal{D})-\log c \mid \mathcal{D}\right) \right] \\
		&= \mathbb{E}_{H_1} \left[ \Phi\left( \frac{\log E(\mathcal{D})-\log c - \tau}{\frac{\Delta}{\mu}} \right) \right].
	\end{align*}
	Since the standard normal CDF $\Phi(\cdot)$ is strictly increasing, the expected rejection probability is also decreasing in $\tau$ for every fixed $c>0$. Since both the expected log-growth and the rejection probability are decreasing in $\tau$, their maxima over the feasible region $\tau \ge \Delta^2/(2\mu^2)$ are achieved at the lower bound $\tau = \Delta^2/(2\mu^2)$.
\end{proof}

\subsection{Proof of Theorem~\ref{thm:noise-calibrated_threshold}}

\begin{proof}
Denote $\sigma=\Delta/\mu$ and $\tau=\sigma^2/2$. For a fixed threshold $c$, define
\[
	g_c(x)=\mathbb{P}(xe^{-\xi}\ge c)
	=\Phi\left(\frac{\log x-\log c-\tau}{\sigma}\right),
	\qquad x>0.
\]
Then
\[
	\mathbb{P}_{H_0}(Ee^{-\xi}\ge c)=\mathbb{E}_{H_0}[g_c(E)].
\]
The goal is to find the smallest threshold $c$ such that $\mathbb{E}_{H_0}[g_c(E)]\le \alpha$ for every valid $e$-value $E$.
Define
\[
	S(c)=\sup_{x>0}\frac{g_c(x)}{x}.
\]
We then have $g_c(x)\le S(c)x$ for all $x>0$. Hence
\[
	\mathbb{E}_{H_0}[g_c(E)]\le S(c)\mathbb{E}_{H_0}[E]\le S(c).
\]
Thus, if $S(c)\le \alpha$, the threshold $c$ is automatically valid for every valid $e$-value.
To compute $S(c)$, put
$z=\frac{\log x-\log c-\tau}{\sigma}$ and $x=c\exp(\tau+\sigma z).$
Then
\[
	\frac{g_c(x)}{x}
	=
	\frac{\Phi(z)}{c\exp(\tau+\sigma z)}.
\]
Differentiating shows that the unique maximizer is the point $z^*$ satisfying
$
	\frac{\phi(z^*)}{\Phi(z^*)}=\sigma.
$
Therefore
\[
	S(c)=\frac{1}{c}\Phi(z^*)\exp(-\tau-\sigma z^*),
	\qquad
	x_c^*=c\exp(\tau+\sigma z^*),
\]
where $x_c^*$ is the value of $x$ at which $g_c(x)/x$ is maximized; equivalently, equality holds in $g_c(x)\le S(c)x$ at $x=x_c^*$.

The solution to 
$S(c)=\alpha$ is
\[
	c_1=\frac{1}{\alpha}\Phi(z^*)\exp(-\tau-\sigma z^*),
\]
and the corresponding maximizing value is
\[
	x_{c_1}^*=\frac{\Phi(z^*)}{\alpha}.
\]
Thus $c_1$ is always a valid threshold. The remaining question is whether it is threshold. This depends on whether the value $x_{c_1}^*$ can be used to construct a valid worst-case $e$-value. A two-point construction with mass $1/x_{c_1}^*$ at $x_{c_1}^*$ and the remaining mass at zero is valid only when $x_{c_1}^*\ge 1$. This is exactly the condition $\alpha\le \Phi(z^*)$. When $x_{c_1}^*<1$, the threshold $c_1$ remains valid but is no longer sharp (we will show this later), and a different argument is needed to identify the sharp threshold. We now discuss the two regimes separately.

\textit{Regime 1: \(\alpha\le \Phi(z^*)\).}
Here $x_{c_1}^*\ge 1$. Validity follows directly from $S(c_1)=\alpha$:
\[
	\mathbb{E}_{H_0}[g_{c_1}(E)]\le \alpha\mathbb{E}_{H_0}[E]\le \alpha.
\]
For sharpness, suppose that a smaller threshold $c<c_1$ were valid for all $e$-values. Define
\[
	\mathbb{P}(E^*=x_{c_1}^*)=\frac{1}{x_{c_1}^*},
	\qquad
	\mathbb{P}(E^*=0)=1-\frac{1}{x_{c_1}^*}.
\]
This is a valid $e$-value because $x_{c_1}^*\ge 1$ and $\mathbb{E}[E^*]=1$. When $E^*=0$, rejection is impossible; when $E^*=x_{c_1}^*$, the rejection probability is $g_c(x_{c_1}^*)$. Therefore, for this $E^*$,
\[
	\mathbb{P}(E^*e^{-\xi}\ge c)
	=
	\frac{g_c(x_{c_1}^*)}{x_{c_1}^*}
	>
	\frac{g_{c_1}(x_{c_1}^*)}{x_{c_1}^*}
	=
	S(c_1)
	=
	\alpha,
\]
where the strict inequality follows from $c<c_1$, since $g_c(x)=\Phi((\log x-\log c-\tau)/\sigma)$ is strictly decreasing as a function of the threshold $c$. Thus no smaller threshold can be valid. This proves the validity and sharpness of the first branch,
\[
	c^*=c_1=\frac{1}{\alpha}\Phi(z^*)\exp(-\tau-\sigma z^*),
	\qquad \alpha\le \Phi(z^*).
\]

\textit{Regime 2: \(\alpha>\Phi(z^*)\).}
Now $x_{c_1}^*<1$. The threshold $c_1$ is still valid, but we can not use the sharpness argument as before. To find the sharp threshold, first note that any threshold valid for all $e$-values must in particular be valid for the deterministic $e$-value $E\equiv 1$. For this $e$-value, the rejection probability is $\mathbb{P}(e^{-\xi}\ge c)=g_c(1)$. Therefore any valid threshold must satisfy $g_c(1)\le \alpha$. Since $g_c(1)=\Phi((-\log c-\tau)/\sigma)$ is strictly decreasing in $c$, the smallest threshold passing this necessary condition is obtained by imposing
\[
	g_c(1)=\alpha.
\]
Writing $y=\Phi^{-1}(\alpha)$ gives the candidate
\[
	c_2=\exp\{-\tau-\sigma y\}
	=\exp\{-\tau-\sigma\Phi^{-1}(\alpha)\}.
\]
We now prove that this candidate threshold $c_2$ is valid for every valid $e$-value.
For this value, the corresponding maximizing value is
\[
	x_{c_2}^*=c_2\exp(\tau+\sigma z^*)=\exp\{\sigma(z^*-y)\}<1.
\]
The last inequality follows because we are in the regime $\alpha>\Phi(z^*)$, so $y=\Phi^{-1}(\alpha)>z^*$ and hence $\sigma(z^*-y)<0$.
Moreover, by the formulas for $S(c)$ and $x_c^*$ derived above,
\[
	S(c_2)
	=
	\frac{1}{c_2}\Phi(z^*)\exp(-\tau-\sigma z^*)
	=
	\frac{\Phi(z^*)}{c_2\exp(\tau+\sigma z^*)}
	=
	\frac{\Phi(z^*)}{x_{c_2}^*}.
\]
Let $S=S(c_2)$ and define
\[
	\widetilde g(x)=
	\begin{cases}
		Sx, & 0\le x\le x_{c_2}^*,\\
		g_{c_2}(x), & x>x_{c_2}^*.
	\end{cases}
\]
This auxiliary function is chosen for two simple reasons. First, it lies above $g_{c_2}$: for $x\le x_{c_2}^*$ this follows from the definition of $S$, and for $x>x_{c_2}^*$ it equals $g_{c_2}$. Second, $\widetilde g$ is increasing and concave. On $[0,x_{c_2}^*]$ this is clear because $\widetilde g(x)=Sx$ with $S>0$. For $x>x_{c_2}^*$, write $z=(\log x-\log c_2-\tau)/\sigma$. Then
\[
	g_{c_2}'(x)=\frac{\phi(z)}{\sigma x}>0,
	\qquad
	g_{c_2}''(x)=-\frac{\phi(z)}{\sigma^2x^2}(z+\sigma).
\]
Since $x=c_2\exp(\tau+\sigma z)$ is increasing in $z$, the condition $x>x_{c_2}^*$ implies $z>z^*$. Also, the defining equation for $z^*$ gives $\sigma=\phi(z^*)/\Phi(z^*)$. To see that $z^*+\sigma>0$, define $q(z)=z\Phi(z)+\phi(z)$. Then $q'(z)=\Phi(z)>0$ and $\lim_{z\to-\infty}q(z)=0$, so $q(z)>0$ for all $z$. Hence
\[
	z+\frac{\phi(z)}{\Phi(z)}
	=
	\frac{q(z)}{\Phi(z)}
	>0,
\]
and in particular $z^*+\sigma=z^*+\phi(z^*)/\Phi(z^*)>0$. Therefore $z+\sigma>0$ whenever $z>z^*$, and the displayed formula for $g_{c_2}''(x)$ gives $g_{c_2}''(x)<0$ for $x>x_{c_2}^*$. Finally,
\[
	g_{c_2}'(x_{c_2}^*)=\frac{\phi(z^*)}{\sigma x_{c_2}^*}
	=\frac{\Phi(z^*)}{x_{c_2}^*}
	=S,
\]
where the last equality uses the preceding display. Thus the left and right derivatives agree at $x_{c_2}^*$. Hence $\widetilde g$ has a nonnegative and nonincreasing derivative, which makes it increasing and concave. Since $x_{c_2}^*<1$, the point $1$ lies in the second branch of $\widetilde g$, so $\widetilde g(1)=g_{c_2}(1)$. Therefore, for every valid $e$-value,
\[
	\mathbb{E}_{H_0}[g_{c_2}(E)]
	\le \mathbb{E}_{H_0}[\widetilde g(E)]
	\le \widetilde g(\mathbb{E}_{H_0}[E])
	\le \widetilde g(1)
	=g_{c_2}(1)
	=\alpha.
\]
Here the second inequality follows from Jensen's inequality, and the third uses both $\mathbb{E}_{H_0}[E]\le 1$ and the monotonicity of $\widetilde g$. Thus $c_2$ is valid.

It remains to prove sharpness. If $c<c_2$, then the deterministic valid $e$-value $E^*\equiv 1$ gives
\[
	\mathbb{P}(E^*e^{-\xi}\ge c)=g_c(1)>g_{c_2}(1)=\alpha.
\]
Therefore no smaller threshold can be valid, and the second branch is
\[
	c^*=c_2=\exp\{-\tau-\sigma\Phi^{-1}(\alpha)\},
	\qquad \alpha>\Phi(z^*).
\]

Substituting $\sigma=\Delta/\mu$ and $\tau=\Delta^2/(2\mu^2)$ gives the two formulas in the theorem.
\end{proof}

\subsection{Proof of Proposition~\ref{prop:g_max}}
\begin{proof}
	Let $\sigma = \Delta/\mu$. By definition, the power improvement function for a given signal $E = x$ is the probability that the noise-injected $e$-value falls into the calibration gap:
\begin{equation*}
		G(x) \triangleq \mathbb{P}_{H_1}\left( c^* \le x e^{-\xi} < \frac{1}{\alpha} \right).
	\end{equation*}
	Taking the logarithm inside the probability, we obtain:
\begin{equation*}
		G(x) = \mathbb{P}_{H_1}\left( \log c^* \le \log x - \xi < \log(1/\alpha) \right) = \mathbb{P}_{H_1}\left( \log x - \log(1/\alpha) < \xi \le \log x - \log c^* \right).
	\end{equation*}
	Since $\xi \sim \mathcal{N}(\sigma^2/2, \sigma^2)$, standardizing $\xi$ yields the difference of two standard normal cumulative distribution functions:
\begin{equation*}
		G(x) = \Phi\left( \frac{\log x - \log c^* - \sigma^2/2}{\sigma} \right) - \Phi\left( \frac{\log x - \log(1/\alpha) - \sigma^2/2}{\sigma} \right).
	\end{equation*}
	To find the maximum, we take the derivative of $G(x)$ with respect to $y = \log x$ and set it to zero:
\begin{equation*}
		\frac{\partial G}{\partial y} = \frac{1}{\sigma} \phi\left( \frac{y - \log c^* - \sigma^2/2}{\sigma} \right) - \frac{1}{\sigma} \phi\left( \frac{y - \log(1/\alpha) - \sigma^2/2}{\sigma} \right) = 0,
	\end{equation*}
	where $\phi(\cdot)$ is the standard normal density. Because $\phi(u)$ is symmetric and strictly decreasing for $u > 0$, $\phi(u) = \phi(v)$ with $u \neq v$ implies $u = -v$. Thus, we have:
\begin{equation*}
		\frac{y - \log c^* - \sigma^2/2}{\sigma} = - \frac{y - \log(1/\alpha) - \sigma^2/2}{\sigma}.
	\end{equation*}
	Solving for $y$ yields the optimal log-score $y_{opt} = \sigma^2/2 + [\log(1/\alpha) + \log c^*]/2$. Exponentiating this gives $x_{opt} = \exp\big( \Delta^2/(2\mu^2) + [\log(1/\alpha) + \log c^*]/2 \big)$. 
	
	Substituting $y_{opt}$ back into the argument of the first $\Phi$ function yields:
\begin{equation*}
		u = \frac{y_{opt} - \log c^* - \sigma^2/2}{\sigma} = \frac{\log(1/\alpha) - \log c^*}{2\sigma} = \frac{\mu\big(\log(1/\alpha) - \log c^*\big)}{2\Delta}.
	\end{equation*}
	By symmetry, the argument of the second $\Phi$ function is $-u$. Since $\Phi(-u) = 1 - \Phi(u)$, the maximum power improvement evaluates to:
\begin{equation*}
		G_{\max} = \Phi(u) - \Phi(-u) = 2\Phi\left( \frac{\mu\big(\log(1/\alpha) - \log c^*\big)}{2\Delta} \right) - 1.
	\end{equation*}
\end{proof}

\subsection{Proof of Theorem~\ref{thm:power_improvement_asymptotic}}

\begin{proof}
		The organizing idea is to convert the small-$\Delta$ problem into a large-$z$ problem: as $\Delta\to0^+$,  $z^*$ tends to $+\infty$ (we will verify this fact below), so after rewriting the relevant quantities in terms of $z^*$, it suffices to prove the corresponding limits for $z^*\to+\infty$. In the following proof, we use the notation $\sigma = \Delta/\mu$. As $\Delta\to0^+$, we have $\sigma\to0^+$. We first analyse the asymptotic behavior of $z^*$, which is defined by $\phi(z^*)/\Phi(z^*)=\sigma$. Let $h(z)=\phi(z)/\Phi(z)$. A direct differentiation gives $h'(z)=-h(z)\{z+h(z)\}$. Here $z+h(z)>0$: this is immediate for $z\ge0$, and for $z<0$ it follows from $\phi(z)=\int_{-\infty}^z(-u)\phi(u)\,du>-z\Phi(z)$. Hence $h$ is decreasing. Since $h(0)=\sqrt{2/\pi}$, whenever $\sigma<\sqrt{2/\pi}$ the solution to $h(z^*)=\sigma$ must satisfy $z^*>0$. Taking logarithms in the defining equation gives
		\[
			-\log\sigma=\frac{(z^*)^2}{2}+\log\sqrt{2\pi}+\log\Phi(z^*).
		\]
		For $z^*>0$, the term $\log\Phi(z^*)$ is bounded between $\log(1/2)$ and $0$. Hence the two terms after $(z^*)^2/2$ are negligible relative to $-\log\sigma$, and therefore $(z^*)^2\sim -2\log\sigma$, or equivalently $z^*\sim\sqrt{-2\log\sigma}$. In particular, $z^*\to+\infty$ as $\sigma\to0^+$. Therefore, for sufficiently small $\sigma$ we are in the first regime $\alpha \le \Phi(z^*)$, so $\log c^* = -\log\alpha + \log \Phi(z^*) - \sigma^2/2 - \sigma z^*$.
	By the law of total expectation, we have:
\begin{align*}
		\mathbb{P}_{H_{1}}\left(c^{*} \le E e^{-\xi} < \frac{1}{\alpha}\right) 
		&= \mathbb{E}\left(\mathbb{I}\left(c^{*} \le E e^{-\xi} < \frac{1}{\alpha}\right)\right) \\
		&= \int_{0}^{+\infty} \mathbb{P}\left(\log c^{*} \le \log x - \xi < \log(1/\alpha)\right) dF_{E}(x).
	\end{align*}
	Given $\xi \sim \mathcal{N}(\sigma^{2}/2, \sigma^{2})$, the probability in the integral can be written as the difference of two standard normal CDFs:
\begin{equation*}
		\mathbb{P}\left(\log c^{*} \le \log x - \xi < \log(1/\alpha)\right) = \Phi\left(\frac{\log x - \log c^{*} - \frac{\sigma^{2}}{2}}{\sigma}\right) - \Phi\left(\frac{\log x + \log \alpha - \frac{\sigma^{2}}{2}}{\sigma}\right).
	\end{equation*}
	Applying integration by parts we obtain:
\begin{align*}
		&\mathbb{P}_{H_{1}}\left(c^{*} \le E e^{-\xi} < \frac{1}{\alpha}\right) \\
		&= \left[ \left( \Phi\left(\frac{\log x - \log c^{*} - \frac{\sigma^{2}}{2}}{\sigma}\right) - \Phi\left(\frac{\log x + \log \alpha - \frac{\sigma^{2}}{2}}{\sigma}\right) \right) F_{E}(x) \right]_{0}^{+\infty} \\
		&\quad - \int_{0}^{+\infty} F_{E}(x) \left[ \phi\left(\frac{\log x - \log c^{*} - \frac{\sigma^{2}}{2}}{\sigma}\right) \frac{1}{\sigma x} - \phi\left(\frac{\log x + \log \alpha - \frac{\sigma^{2}}{2}}{\sigma}\right) \frac{1}{\sigma x} \right] dx \\
		&= \int_{0}^{+\infty} F_{E}(x) \phi\left(\frac{\log x + \log \alpha - \frac{\sigma^{2}}{2}}{\sigma}\right) \frac{1}{\sigma x} dx - \int_{0}^{+\infty} F_{E}(x) \phi\left(\frac{\log x - \log c^{*} - \frac{\sigma^{2}}{2}}{\sigma}\right) \frac{1}{\sigma x} dx.
	\end{align*}
		We apply the two changes of variables separately. For the first integral, set $t=(\log x+\log\alpha-\sigma^2/2)/\sigma$. Then $x=x_1(t)=\exp(\sigma t-\log\alpha+\sigma^2/2)$ and $dx=\sigma x_1(t)\,dt$. As $x$ ranges from $0$ to $+\infty$, $t$ ranges from $-\infty$ to $+\infty$, so
		\[
			\int_0^\infty
			F_E(x)\phi\left(\frac{\log x+\log\alpha-\sigma^2/2}{\sigma}\right)
			\frac{1}{\sigma x}\,dx
			=
			\int_{-\infty}^{+\infty}F_E(x_1(t))\phi(t)\,dt.
		\]
		
		For the second integral, set $t=(\log x-\log c^*-\sigma^2/2)/\sigma$. Then $x=x_2(t)=\exp(\sigma t+\log c^*+\sigma^2/2)$ and $dx=\sigma x_2(t)\,dt$. Hence
		\[
			\int_0^\infty
			F_E(x)\phi\left(\frac{\log x-\log c^*-\sigma^2/2}{\sigma}\right)
			\frac{1}{\sigma x}\,dx
			=
			\int_{-\infty}^{+\infty}F_E(x_2(t))\phi(t)\,dt.
		\]
		
		Combining the two transformed integrals, we obtain
		\[
			\mathbb{P}_{H_{1}}\left(c^{*} \le E e^{-\xi} < \frac{1}{\alpha}\right) 
			=
			\int_{-\infty}^{+\infty} \left[ F_{E}(x_{1}(t)) - F_{E}(x_{2}(t)) \right] \phi(t) dt.
		\]
	
		Using $c^*=\alpha^{-1}\Phi(z^*)\exp(-\sigma^2/2-\sigma z^*)$, the two arguments inside $F_E$ become $x_1(t)=\alpha^{-1}\exp(\sigma t+\sigma^2/2)$ and $x_2(t)=\alpha^{-1}\Phi(z^*)\exp\{\sigma(t-z^*)\}$. Therefore,
\begin{equation*}
		\mathbb{P}_{H_{1}}\left(c^{*} \le E e^{-\xi} < \frac{1}{\alpha}\right) = \int_{-\infty}^{+\infty} \left[ F_E\left( \frac{1}{\alpha} \exp\left(\sigma t + \frac{\sigma^2}{2}\right) \right) - F_E\left( \frac{1}{\alpha} \Phi(z^*) \exp(\sigma(t - z^*)) \right) \right] \phi(t) dt.
	\end{equation*}
	By replacing $\sigma$ with $\phi(z^*) / \Phi(z^*)$, we can express the integral in terms of $z^*$:
\begin{align*}
		&\mathbb{P}_{H_{1}}\left(c^{*} \le E e^{-\xi} < \frac{1}{\alpha}\right) \\
		&= \int_{-\infty}^{+\infty} \left[ F_E\left( \frac{1}{\alpha} \exp\left( t \frac{\phi(z^*)}{\Phi(z^*)} + \frac{1}{2}\left(\frac{\phi(z^*)}{\Phi(z^*)}\right)^2 \right) \right) - F_E\left( \frac{1}{\alpha} \Phi(z^*) \exp\left( (t - z^*) \frac{\phi(z^*)}{\Phi(z^*)} \right) \right) \right] \phi(t) dt.
	\end{align*}
		The preceding display writes the target probability as a function of $z^*$. Since $z^*\to+\infty$ as $\Delta\to0^+$, it remains to prove the corresponding generic limit:
\begin{equation*}
		\lim_{z \to +\infty} \dfrac{\int_{-\infty}^{+\infty} \left[ F_E\left( \frac{1}{\alpha} \exp\left( t \frac{\phi(z)}{\Phi(z)} + \frac{1}{2}\left(\frac{\phi(z)}{\Phi(z)}\right)^2 \right) \right) - F_E\left( \frac{1}{\alpha} \Phi(z) \exp\left( (t - z) \frac{\phi(z)}{\Phi(z)} \right) \right) \right] \phi(t) \,dt}{\frac{f_E(1/\alpha)}{\alpha} z \frac{\phi(z)}{\Phi(z)}} = 1.
	\end{equation*}
	Consider the integral
\begin{equation*}
		M_z = \int_{-\infty}^{+\infty} \left[ F_E\left( \frac{1}{\alpha} \exp\left(tr_z+\frac{1}{2}r_z^2\right) \right) - F_E\left( \frac{1}{\alpha}\Phi(z)\exp\left((t-z)r_z\right) \right) \right]\phi(t)\,dt.
	\end{equation*}
	For the region $|t|> z$, since
\begin{equation*}
		\int_{|t|>z}\phi(t)\,dt = O\left(\frac{\phi(z)}{z}\right) = o(zr_z),
	\end{equation*}
	the contribution to the integral from the region $|t|>z$ is $o(zr_z)$.
	
	For the region $|t|\le z$, we have $tr_z=O(zr_z)=o(1)$, $zr_z=o(1)$, and $r_z^2=o(1)$. Hence, applying the Taylor expansion uniformly for $|t|\le z$, we obtain
\begin{equation*}
		F_E\left( \frac{1}{\alpha} \exp\left(tr_z+\frac{1}{2}r_z^2\right) \right) = F_E\left(\frac{1}{\alpha}\right) + \frac{1}{\alpha}f_E\left(\frac{1}{\alpha}\right) \left[ \exp\left(tr_z+\frac{1}{2}r_z^2\right)-1 \right] + o(zr_z),
	\end{equation*}
	and similarly,
\begin{equation*}
		F_E\left( \frac{1}{\alpha}\Phi(z)\exp\left((t-z)r_z\right) \right) = F_E\left(\frac{1}{\alpha}\right) + \frac{1}{\alpha}f_E\left(\frac{1}{\alpha}\right) \left[ \Phi(z)\exp\left((t-z)r_z\right)-1 \right] + o(zr_z).
	\end{equation*}
	Therefore, the integral $M_z$ can be approximated as
\begin{equation*}
		M_z
=
\frac{1}{\alpha}f_E\left(\frac{1}{\alpha}\right)
\int_{-z}^{z}
\left[
e^{tr_z+r_z^2/2}
-
\Phi(z)e^{(t-z)r_z}
\right]\phi(t)\,dt
+
o(zr_z).
	\end{equation*}
It remains to extend the linearized integral from \([-z,z]\) to \(\mathbb{R}\). The added tail is bounded by a constant multiple of
\[
\int_{|t|>z}\left(e^{tr_z+r_z^2/2}+\Phi(z)e^{(t-z)r_z}\right)\phi(t)\,dt.
\]

We now show that the first term and second term are both $o(zr_z)$, so the added tail is negligible.

For the first term, note that
\(
e^{tr_z}\phi(t)
=
e^{r_z^2/2}\phi(t-r_z)
\).
Therefore, with the change of variables \(u=t-r_z\),
\[
\int_{|t|>z}e^{tr_z}\phi(t)\,dt
=
e^{r_z^2/2}\int_{|u+r_z|>z}\phi(u)\,du.
\]
If \(N\sim N(0,1)\), then the last integral equals
\[
e^{r_z^2/2}\mathbb{P}(|N+r_z|>z).
\]
Moreover,
\(\mathbb{P}(|N+r_z|>z)=\bar\Phi(z-r_z)+\Phi(-z-r_z)\), where
\(\bar\Phi(x)=1-\Phi(x)\). Since \(r_z=o(1)\), Mills' bound gives
\(\bar\Phi(z-r_z)\le \phi(z-r_z)/(z-r_z)=O(\phi(z)/z)\), where we used
\(\phi(z-r_z)=\phi(z)\exp\{zr_z-r_z^2/2\}=O(\phi(z))\). Similarly,
\(\Phi(-z-r_z)=\bar\Phi(z+r_z)\le \bar\Phi(z)=O(\phi(z)/z)\). Hence
\[
\mathbb{P}(|N+r_z|>z)=O\left(\frac{\phi(z)}{z}\right).
\]
So the first term satisfies
\[
\int_{|t|>z} e^{tr_z+r_z^2/2}\phi(t)\,dt
=
O\left(\frac{\phi(z)}{z}\right).
\]
Since \(\phi(z)/z=o(zr_z)\), this term is negligible.

The second term is treated identically, up to the bounded factor \(\Phi(z)e^{-zr_z}\). Since \(\phi(z)/z=o(zr_z)\), the added tail is negligible. Therefore,
\[
M_z
=
\frac{1}{\alpha}f_E\left(\frac{1}{\alpha}\right)
\int_{-\infty}^{\infty}
\left[
e^{tr_z+r_z^2/2}
-
\Phi(z)e^{(t-z)r_z}
\right]\phi(t)\,dt
+
o(zr_z).
\]

	Using the moment generating function of the standard normal distribution, we know that
\begin{equation*}
		\int_{-\infty}^{+\infty}e^{tr_z}\phi(t)\,dt = \exp\left(\frac{1}{2}r_z^2\right).
	\end{equation*}
	Hence,
\begin{equation*}
		\int_{-\infty}^{+\infty} \exp\left(tr_z+\frac{1}{2}r_z^2\right)\phi(t)\,dt = \exp(r_z^2),
	\end{equation*}
	and
\begin{equation*}
		\int_{-\infty}^{+\infty} \Phi(z)\exp\left((t-z)r_z\right)\phi(t)\,dt = \Phi(z)\exp\left(-zr_z+\frac{1}{2}r_z^2\right).
	\end{equation*}
	Hence, $M_z$ simplifies to
\begin{equation*}
		M_z = \frac{1}{\alpha}f_E\left(\frac{1}{\alpha}\right) \left[ \exp(r_z^2) - \Phi(z)\exp\left(-zr_z+\frac{1}{2}r_z^2\right) \right] + o(zr_z).
	\end{equation*}
	Since $1-\Phi(z) = O\left(\frac{\phi(z)}{z}\right) = O\left(\frac{r_z}{z}\right)$, we have
\begin{equation*}
		\Phi(z) = 1 + O\left(\frac{r_z}{z}\right).
	\end{equation*}
	Furthermore, using the Taylor expansions for the exponential functions,
\begin{equation*}
		\exp(r_z^2) = 1 + O(r_z^2),
	\end{equation*}
	and
Since \(1-\Phi(z)=O(\phi(z)/z)=O(r_z/z)\), we have
\(\Phi(z)=1+O(r_z/z)\). On the other hand,
\[
\exp\left(-zr_z+\frac{1}{2}r_z^2\right)
=
1-zr_z+O(z^2r_z^2)+O(r_z^2).
\]
Multiplying these two expansions gives
\[
\Phi(z)\exp\left(-zr_z+\frac{1}{2}r_z^2\right)
=
\{1+O(r_z/z)\}
\{1-zr_z+O(z^2r_z^2)+O(r_z^2)\}.
\]
The cross terms are absorbed into the displayed error terms, since
\(O(r_z/z)\cdot O(zr_z)=O(r_z^2)\). Hence
\[
\Phi(z)\exp\left(-zr_z+\frac{1}{2}r_z^2\right)
=
1-zr_z
+
O\left(\frac{r_z}{z}\right)
+
O(z^2r_z^2)
+
O(r_z^2).
\]
	It directly follows that the difference is given by
\begin{equation*}
		\exp(r_z^2) - \Phi(z)\exp\left(-zr_z+\frac{1}{2}r_z^2\right) = zr_z + o(zr_z).
	\end{equation*}
	Substituting this back, we obtain
\begin{equation*}
		M_z = \frac{1}{\alpha}f_E\left(\frac{1}{\alpha}\right)zr_z + o(zr_z).
	\end{equation*}
	Since $r_z = \frac{\phi(z)}{\Phi(z)}$, this can be rewritten as
\begin{equation*}
		M_z = \frac{f_E(1/\alpha)}{\alpha} z\frac{\phi(z)}{\Phi(z)} + o\left( z\frac{\phi(z)}{\Phi(z)} \right).
	\end{equation*}
	Therefore, we conclude that
\begin{equation*}
		\lim_{z\to+\infty}
		\frac{
			\int_{-\infty}^{+\infty} \left[ F_E\left( \frac{1}{\alpha} \exp\left( t\frac{\phi(z)}{\Phi(z)} + \frac{1}{2} \left(\frac{\phi(z)}{\Phi(z)}\right)^2 \right) \right) - F_E\left( \frac{1}{\alpha} \Phi(z) \exp\left( (t-z)\frac{\phi(z)}{\Phi(z)} \right) \right) \right]\phi(t)\,dt
		}{
			\frac{f_E(1/\alpha)}{\alpha} z\frac{\phi(z)}{\Phi(z)}
		}
		= 1.
	\end{equation*}
	Finally, because $r_{z^*}=\phi(z^*)/\Phi(z^*)=\sigma$ and $z^*\sim \sqrt{-2\log\sigma}$, we have
\[
	z^*r_{z^*}\sim \sigma\sqrt{-2\log\sigma}.
	\]
	Since $\mu$ is fixed and $\sigma=\Delta/\mu$, this is asymptotic to
\[
	\frac{\Delta}{\mu}\sqrt{-2\log\Delta}.
	\]
	Therefore,
\[
	\mathbb{P}_{H_1}\left(c^* \le E e^{-\xi}<\frac{1}{\alpha}\right)
	\sim \frac{f_E(1/\alpha)}{\alpha\mu}\Delta\sqrt{-2\log\Delta}.
	\]
\end{proof}

\subsection{Proof of Proposition~\ref{prop:gain}}
\begin{proof}
		The proof again reduces the low-sensitivity regime to a large-$z$ limit. Let $\sigma=\Delta/\mu$. As $\Delta\to0$, we have $\sigma\to0$ and the corresponding $z^*\to+\infty$ (as verified above), so we will rewrite the probability as a function of $z^*$ and prove the equivalent limit for a generic $z\to+\infty$. For sufficiently small $\sigma$ we use the first branch of $c^*$, so $\log c^*=-\log\alpha+\log\Phi(z^*)-\sigma^2/2-\sigma z^*$.
	
	Conditioning on the non-private $e$-value gives
	\[
		\mathbb{P}_{H_1}\left(Ee^{-\xi}\ge c^*, E<\frac{1}{\alpha}\right)
		=
		\int_0^{1/\alpha}
		\Phi\left(\frac{\log x-\log c^*-\sigma^2/2}{\sigma}\right)dF_E(x).
	\]
	An integration by parts yields
	\[
		\mathbb{P}_{H_1}\left(Ee^{-\xi}\ge c^*, E<\frac{1}{\alpha}\right)
		=
		F_E\left(\frac{1}{\alpha}\right)\Phi(t_0)
		-
		\int_0^{1/\alpha}F_E(x)\phi\left(\frac{\log x-\log c^*-\sigma^2/2}{\sigma}\right)\frac{1}{\sigma x}\,dx,
	\]
	where $t_0=(-\log\alpha-\log c^*-\sigma^2/2)/\sigma$.
	
		Apply the change-of-variables formula to the second integral with $t=(\log x-\log c^*-\sigma^2/2)/\sigma$. Equivalently, $x=x(t)=\exp(\sigma t+\log c^*+\sigma^2/2)$, so $dx=\sigma x(t)\,dt$ and $dx/(\sigma x)=dt$; the integration range becomes $(-\infty,t_0]$. Therefore,
	\[
		\mathbb{P}_{H_1}\left(Ee^{-\xi}\ge c^*, E<\frac{1}{\alpha}\right)
		=
		\int_{-\infty}^{t_0}
		\left[
		F_E\left(\frac{1}{\alpha}\right)-F_E(x(t))
		\right]\phi(t)\,dt.
	\]
		Using the first-branch formula for $c^*$, we have $t_0=z^*-\log\Phi(z^*)/\sigma$ and $x(t)=\alpha^{-1}\Phi(z^*)\exp\{\sigma(t-z^*)\}$. Write $r_z=\phi(z)/\Phi(z)$, so that $\sigma=r_{z^*}$. Then
		\[
			\mathbb{P}_{H_1}\left(Ee^{-\xi}\ge c^*, E<\frac{1}{\alpha}\right)
			=
			\int_{-\infty}^{z^*-\log\Phi(z^*)/r_{z^*}}
			\left[
			F_E\left(\frac{1}{\alpha}\right)
			-
			F_E\left(\frac{1}{\alpha}\Phi(z^*)\exp\{(t-z^*)r_{z^*}\}\right)
			\right]\phi(t)\,dt.
		\]
		This is now a function of $z^*$. To study its asymptotic behavior as $z^*\to+\infty$ (since as $\Delta \to 0$, $z^*\to+\infty$), for a generic $z$ define $t_0(z)=z-\log\Phi(z)/r_z$ and set
		\[
			M_z=
			\int_{-\infty}^{t_0(z)}
			\left[
			F_E\left(\frac{1}{\alpha}\right)
			-
			F_E\left(\frac{1}{\alpha}\Phi(z)\exp\{(t-z)r_z\}\right)
			\right]\phi(t)\,dt.
		\]
		The rest of the proof is the following asymptotic evaluation:
	\[
		M_z\sim \frac{f_E(1/\alpha)}{\alpha}zr_z,
		\qquad z\to+\infty.
	\]
		We now replace the upper endpoint $t_0(z)$ by $+\infty$ and check explicitly that this changes the integral only by $o(zr_z)$. Let
		\[
			A_z(t)=
			F_E\left(\frac{1}{\alpha}\right)
			-
			F_E\left( \frac{1}{\alpha}\Phi(z)\exp\{(t-z)r_z\} \right).
		\]
		Then
		\[
			M_z=\int_{-\infty}^{t_0(z)}A_z(t)\phi(t)\,dt,
			\qquad
			\int_{\mathbb R}A_z(t)\phi(t)\,dt
			=
			M_z+\int_{t_0(z)}^\infty A_z(t)\phi(t)\,dt.
		\]
		Since $F_E$ is a distribution function, $|A_z(t)|\le 1$. Also $t_0(z)>z$, because $\log\Phi(z)<0$ and $r_z>0$. Hence, by Mills' ratio,
		\[
			\left|\int_{t_0(z)}^\infty A_z(t)\phi(t)\,dt\right|
			\le
			\int_{t_0(z)}^\infty\phi(t)\,dt
			\le
			\int_z^\infty\phi(t)\,dt
			=O\left(\frac{\phi(z)}{z}\right)
			=o(zr_z),
		\]
		where the last step uses $r_z=\phi(z)/\Phi(z)$ and $\Phi(z)\to1$. Therefore,
		\begin{equation*}
			M_z = \int_{-\infty}^{+\infty} \left[ F_E\left(\frac{1}{\alpha}\right) - F_E\left( \frac{1}{\alpha}\Phi(z)\exp\left((t-z)r_z\right) \right) \right]\phi(t)\,dt + o(z r_z).
		\end{equation*}
			Write $B_z(t)=\Phi(z)\exp\{(t-z)r_z\}$. First split the full-line integral into the regions $|t|\le z$ and $|t|>z$. On the tail, the original integrand is bounded in absolute value by $1$, so
			\[
				\int_{|t|>z}
				\left|
				F_E\left(\frac{1}{\alpha}\right)
				-
				F_E\left(\frac{1}{\alpha}B_z(t)\right)
				\right|\phi(t)\,dt
				\le
				\int_{|t|>z}\phi(t)\,dt
				=
				o(zr_z).
			\]
			Hence
			\[
				M_z=
				\int_{|t|\le z}
				\left[
				F_E\left(\frac{1}{\alpha}\right)
				-
				F_E\left(\frac{1}{\alpha}B_z(t)\right)
				\right]\phi(t)\,dt
				+o(zr_z).
			\]
			On $|t|\le z$, we have $(t-z)r_z=O(zr_z)=o(1)$ uniformly. Hence $\exp\{(t-z)r_z\}=1+O(zr_z)$ uniformly on this region. Also $1-\Phi(z)=O(r_z/z)$, so $\Phi(z)=1+O(r_z/z)$. Since $r_z/z=O(zr_z)$, we get
			\[
				B_z(t)=\Phi(z)\exp\{(t-z)r_z\}
				=
				\{1+O(r_z/z)\}\{1+O(zr_z)\}
				=
				1+O(zr_z),
			\]
				uniformly for $|t|\le z$. Since $F_E$ is differentiable at $1/\alpha$ and
				\[
					\sup_{|t|\le z}|B_z(t)-1|=O(zr_z)\to0,
				\]
				there exists a remainder $R_z(t)$ satisfying $\sup_{|t|\le z}|R_z(t)|=o(zr_z)$ such that, uniformly for $|t|\le z$,
				\[
					F_E\left(\frac{1}{\alpha} B_z(t)\right)
					=
					F_E\left(\frac{1}{\alpha}\right)
					+
					\frac{1}{\alpha} f_E\left(\frac{1}{\alpha}\right)\{B_z(t)-1\}
					+
					R_z(t).
				\]
				After integration, the remainder contributes $o(zr_z)$, since $\int_{|t|\le z}\phi(t)\,dt\le 1$. Thus
			\[
				M_z=
				\frac{1}{\alpha}f_E\left(\frac{1}{\alpha}\right)
				\int_{|t|\le z}\{1-B_z(t)\}\phi(t)\,dt
				+
				o(zr_z).
			\]
			It remains to replace $|t|\le z$ by $\mathbb{R}$ in this last integral. The added tail is negligible because
			\[
				\int_{|t|>z}|1-B_z(t)|\phi(t)\,dt
				\le
				\int_{|t|>z}\phi(t)\,dt
				+
				\Phi(z)e^{-zr_z}\int_{|t|>z}e^{tr_z}\phi(t)\,dt.
			\]
		The first term is $O(\phi(z)/z)=o(zr_z)$. For the second term (actually we have proven this above), we spell out the exponential-tilting bound. Completing the square gives
		\[
			e^{tr_z}\phi(t)
			=
			\frac{1}{\sqrt{2\pi}}\exp\left(tr_z-\frac{t^2}{2}\right)
			=
			e^{r_z^2/2}\phi(t-r_z).
		\]
		Therefore, with the change of variables $u=t-r_z$,
		\[
			\int_{|t|>z}e^{tr_z}\phi(t)\,dt
			=
			e^{r_z^2/2}\int_{|u+r_z|>z}\phi(u)\,du
			=
			e^{r_z^2/2}\mathbb{P}(|N+r_z|>z),
			\qquad N\sim N(0,1).
		\]
		Moreover,
		\[
			\mathbb{P}(|N+r_z|>z)
			=
			\bar\Phi(z-r_z)+\Phi(-z-r_z),
		\]
		where $\bar\Phi(z) = 1-\Phi(z)$. Since $r_z\to0$ and $zr_z\to0$, Mills' ratio gives
		\[
			\bar\Phi(z-r_z)
			=
			O\left(\frac{\phi(z-r_z)}{z-r_z}\right)
			=
			O\left(\frac{\phi(z)}{z}\right),
			\qquad
			\Phi(-z-r_z)
			=
			O\left(\frac{\phi(z+r_z)}{z+r_z}\right)
			=
			O\left(\frac{\phi(z)}{z}\right).
		\]
		Here, for example, $\phi(z-r_z)=\phi(z)\exp\{zr_z-r_z^2/2\}=O\{\phi(z)\}$ and $z-r_z\asymp z$; the other tail is handled in the same way.
		Thus $\int_{|t|>z}e^{tr_z}\phi(t)\,dt=O\{\phi(z)/z\}$. Multiplying by the bounded factor $\Phi(z)e^{-zr_z}$ shows that the second term is also $o(zr_z)$. Therefore,
\begin{equation*}
		M_z = \frac{1}{\alpha}f_E\left(\frac{1}{\alpha}\right) \int_{-\infty}^{+\infty} \left[ 1 - \Phi(z)\exp\left((t-z)r_z\right) \right]\phi(t)\,dt + o(zr_z).
	\end{equation*}
	Using the moment-generating function of the standard normal distribution, the second term evaluates exactly as:
\begin{equation*}
		\int_{-\infty}^{+\infty} \Phi(z)\exp\left((t-z)r_z\right)\phi(t)\,dt = \Phi(z)\exp\left(-zr_z+\frac{1}{2}r_z^2\right).
	\end{equation*}
	Hence, $M_z$ simplifies to:
\begin{equation*}
		M_z = \frac{1}{\alpha}f_E\left(\frac{1}{\alpha}\right) \left[ 1 - \Phi(z)\exp\left(-zr_z+\frac{1}{2}r_z^2\right) \right] + o(zr_z).
	\end{equation*}
	Since $1-\Phi(z) = O(\phi(z)/z) = O(r_z/z)$, we know $\Phi(z) = 1 + O(r_z/z)$. Furthermore, applying the Taylor expansion for the exponential functions yields:
\begin{equation*}
		\exp\left(-zr_z+\frac{1}{2}r_z^2\right) = 1 - zr_z + O(z^2r_z^2) + O(r_z^2).
	\end{equation*}
	Multiplying these two components, the inner bracket expands to:
\begin{equation*}
		\Phi(z)\exp\left(-zr_z+\frac{1}{2}r_z^2\right) = 1 - zr_z + O\left(\frac{r_z}{z}\right) + O(z^2r_z^2) + O(r_z^2).
	\end{equation*}
	It directly follows that the difference is given by the leading term $zr_z$:
\begin{equation*}
		1 - \Phi(z)\exp\left(-zr_z+\frac{1}{2}r_z^2\right) = zr_z + o(zr_z).
	\end{equation*}
	Substituting this back into our expression for $M_z$, we obtain:
\begin{equation*}
		M_z = \frac{1}{\alpha}f_E\left(\frac{1}{\alpha}\right)zr_z + o(zr_z) = \frac{f_E(1/\alpha)}{\alpha} z\frac{\phi(z)}{\Phi(z)} + o\left( z\frac{\phi(z)}{\Phi(z)} \right).
	\end{equation*}
	Dividing by the denominator $\frac{f_E(1/\alpha)}{\alpha} z \frac{\phi(z)}{\Phi(z)}$, we conclude that:
\begin{equation*}
		\lim_{z\to+\infty}
		\frac{
			\int_{-\infty}^{t_0(z)} \left[ F_E\left(\frac{1}{\alpha}\right) - F_E\left( \frac{1}{\alpha} \Phi(z) \exp\left( (t-z)\frac{\phi(z)}{\Phi(z)} \right) \right) \right]\phi(t)\,dt
		}{
			\frac{f_E(1/\alpha)}{\alpha} z\frac{\phi(z)}{\Phi(z)}
		}
		= 1.
	\end{equation*}
	Since $z^*\phi(z^*)/\Phi(z^*) = z^*\sigma \sim \sigma\sqrt{-2\log\sigma}$ and $\sigma=\Delta/\mu$, the desired asymptotic expression follows:
\[
	\mathbb{P}_{H_1}\left(E e^{-\xi}\ge c^*, E<\frac{1}{\alpha}\right)
	\sim \frac{f_E(1/\alpha)}{\alpha\mu}\Delta\sqrt{-2\log\Delta}.
	\]
\end{proof}

\subsection{Proof of Proposition~\ref{prop:power_loss}}

\begin{proof}
		The proof follows the same reduction to a large-$z$ limit. Let $\sigma=\Delta/\mu$. As $\Delta\to0$, we have $\sigma\to0$ and the corresponding $z^*\to+\infty$ (as verified above), so we will rewrite the probability as a function of $z^*$ and prove the equivalent limit for a generic $z\to+\infty$. For sufficiently small $\sigma$ we use the first branch of $c^*$, so $\log c^*=-\log\alpha+\log\Phi(z^*)-\sigma^2/2-\sigma z^*$.
	
	Conditioning on $E$ and integrating by parts on $[1/\alpha,\infty)$ gives
	\[
		\mathbb{P}_{H_1}\left(E\ge\frac{1}{\alpha}, Ee^{-\xi}<c^*\right)
		=
		\int_{1/\alpha}^{+\infty}
		\left[
		F_E(x)-F_E\left(\frac{1}{\alpha}\right)
		\right]
		\phi\left(\frac{\log x-\log c^*-\sigma^2/2}{\sigma}\right)
		\frac{1}{\sigma x}\,dx.
	\]
		Apply the change-of-variables formula with $t=(\log x-\log c^*-\sigma^2/2)/\sigma$. Equivalently, $x=x(t)=\exp(\sigma t+\log c^*+\sigma^2/2)$, so $dx=\sigma x(t)\,dt$ and $dx/(\sigma x)=dt$. Using the first-branch formula for $c^*$ gives $x(t)=\alpha^{-1}\Phi(z^*)\exp\{\sigma(t-z^*)\}$, and the lower endpoint $x=1/\alpha$ becomes $t=z^*-\log\Phi(z^*)/\sigma$. Hence
\[
\mathbb{P}_{H_1}\left(E\ge\frac{1}{\alpha}, Ee^{-\xi}<c^*\right)
=
\int_{z^*-\log\Phi(z^*)/\sigma}^{\infty}
\left[
F_E\left(\frac{1}{\alpha}\Phi(z^*)e^{\sigma(t-z^*)}\right)
-
F_E\left(\frac{1}{\alpha}\right)
\right]
\phi(t)\,dt.
\]
	
			Write $r_z=\phi(z)/\Phi(z)$, so that $\sigma=r_{z^*}$. To study the preceding integral as a function of $z^*\to+\infty$, for a generic $z$ define $L_z=z-\log\Phi(z)/r_z$ and set
			\[
				M_z=
				\int_{L_z}^{+\infty}
				\left[
				F_E\left(\frac{1}{\alpha}\Phi(z)\exp\{(t-z)r_z\}\right)
			-
			F_E\left(\frac{1}{\alpha}\right)
					\right]\phi(t)\,dt.
				\]
				Then the displayed probability is exactly $M_{z^*}$. It remains to show that, as $z\to+\infty$,
			\[
				M_z
				\sim
				\frac{1}{e\alpha}f_E\left(\frac{1}{\alpha}\right)\frac{\phi^2(z)}{z^2}.
			\]
			After this is proved, we will translate the right-hand side back to $\sigma$ by using $r_{z^*}=\sigma$.

			By the definition of $L_z$, $\log\Phi(z)+r_z(L_z-z)=0$. Apply the change-of-variables formula with $t=L_z+s$, so $dt=ds$ and the range $t\in[L_z,\infty)$ becomes $s\in[0,\infty)$. The argument inside $F_E$ then becomes $\alpha^{-1}\exp(r_zs)$. Therefore,
\begin{equation*}
			M_z = \int_0^\infty \left[ F_E\left(\frac{1}{\alpha}e^{r_z s}\right) - F_E\left(\frac{1}{\alpha}\right) \right] \phi(L_z + s) \, ds.
	\end{equation*}
				Split the integral into $0\le s\le 1/(zr_z)$ and $s>1/(zr_z)$. We first show that the second range contributes only a negligible amount. The bracket in the integrand defining $M_z$ is bounded in absolute value by $1$. Thus, with $a_z=L_z+1/(zr_z)$, it is enough to show
			\[
				\int_{1/(zr_z)}^\infty \phi(L_z+s)\,ds
				=
				1-\Phi(a_z)
				=
				o\left(\frac{\phi^2(z)}{z^2}\right).
			\]
				Indeed, $r_z=\phi(z)/\Phi(z)\sim\phi(z)=(2\pi)^{-1/2}e^{-z^2/2}$, so $1/(zr_z)\sim \sqrt{2\pi}e^{z^2/2}/z$. Hence $1/(zr_z)$ is much larger than $z$, and $a_z/z\to+\infty$. For large $z$, we may take $a_z^2\ge 3z^2$, and Mills' ratio gives $1-\Phi(a_z)\le \phi(a_z)/a_z=O(e^{-3z^2/2})$. Since $\phi^2(z)/z^2\asymp e^{-z^2}/z^2$ and $e^{-3z^2/2}/(e^{-z^2}/z^2)=z^2e^{-z^2/2}\to0$, the displayed bound follows. Therefore,
				\[
					M_z
					=
					\int_0^{1/(zr_z)}
					\left[
					F_E\left(\frac{1}{\alpha}e^{r_z s}\right)
					-
					F_E\left(\frac{1}{\alpha}\right)
					\right]\phi(L_z+s)\,ds
					+
					o\left(\frac{\phi^2(z)}{z^2}\right).
				\]
				Now work on the truncated interval $0\le s\le1/(zr_z)$. On this interval, $0\le r_zs\le1/z$. Since $F_E$ is differentiable at $1/\alpha$ and $e^{r_zs}-1=r_zs\{1+o(1)\}$ uniformly, we have
	\begin{equation*}
					F_E\left(\frac{1}{\alpha}e^{r_z s}\right) - F_E\left(\frac{1}{\alpha}\right) = \frac{1}{\alpha}f_E\left(\frac{1}{\alpha}\right)r_z s + r_zs\,o(1),
				\end{equation*}
				where the $o(1)$ term is uniform for $0\le s\le1/(zr_z)$. Hence
				\[
					M_z
					=
					\frac{1}{\alpha}f_E\left(\frac{1}{\alpha}\right)r_z
					\int_0^{1/(zr_z)}s\phi(L_z+s)\,ds
					+
					o\left(r_z\int_0^{1/(zr_z)}s\phi(L_z+s)\,ds\right)
					+
					o\left(\frac{\phi^2(z)}{z^2}\right).
				\]
				The leading integral is currently over the truncated interval. To replace its upper limit by $\infty$, it remains to check that the added tail is negligible. This is where the variable $u$ enters: it is only the change of variables $u=L_z+s$. For large $z$, $L_z>0$, so
				\[
				\begin{aligned}
					r_z\int_{1/(zr_z)}^\infty s\phi(L_z+s)\,ds
					&=
					r_z\int_{a_z}^\infty (u-L_z)\phi(u)\,du
					\\
					&\le
					r_z\int_{a_z}^\infty u\phi(u)\,du
					\\
					&=
					r_z\phi(a_z)
					=
					o\left(\frac{\phi^2(z)}{z^2}\right).
				\end{aligned}
				\]
				The identity $\int_{a_z}^{\infty}u\phi(u)\,du=\phi(a_z)$ follows from $\phi'(u)=-u\phi(u)$. The last equality follows from the same bound above: $\phi(a_z)=O(e^{-3z^2/2})$, while $r_z\sim \phi(z)=O(e^{-z^2/2})$. Therefore $r_z\phi(a_z)=O(e^{-2z^2})=o\{\phi^2(z)/z^2\}$. Consequently,
	\begin{equation*}
				M_z = \frac{1}{\alpha}f_E\left(\frac{1}{\alpha}\right)r_z \int_0^\infty s\phi(L_z + s) \, ds + o\left(r_z\int_0^\infty s\phi(L_z+s)\,ds\right)+o\left(\frac{\phi^2(z)}{z^2}\right).
		\end{equation*}
	It remains to evaluate $\int_0^\infty s\phi(L_z + s) \, ds$. Set $u=L_z+s$, so that $s=u-L_z$ and the lower endpoint $s=0$ becomes $u=L_z$. Then
\begin{equation*}
	\begin{aligned}
		\int_0^\infty s\phi(L_z + s) \, ds
		&=
		\int_{L_z}^\infty (u-L_z)\phi(u)\,du \\
		&=
		\int_{L_z}^\infty u\phi(u)\,du
		-
		L_z\int_{L_z}^\infty \phi(u)\,du \\
		&=
		\phi(L_z)-L_z\{1-\Phi(L_z)\}.
	\end{aligned}
	\end{equation*}
	The last equality uses $\phi'(u)=-u\phi(u)$, so that $\int_{L_z}^{\infty}u\phi(u)\,du=\phi(L_z)$.
	
	We next analyse \(L_z\). From
\(L_z=z-\log\Phi(z)/r_z\), the expansion
\(\log(1-x)=-x+O(x^2)\), and Mills' expansion
\(1-\Phi(z)=\phi(z)\{z^{-1}-z^{-3}+O(z^{-5})\}\), we obtain
\[
-\log\Phi(z)
=
\phi(z)\left\{
\frac1z-\frac1{z^3}+O(z^{-5})
\right\}
+
O\left(\frac{\phi^2(z)}{z^2}\right).
\]
Since \(r_z=\phi(z)/\Phi(z)\sim\phi(z)\), dividing by \(r_z\) yields
\[
\frac{-\log\Phi(z)}{r_z}
=
\frac1z+O(z^{-3}).
\]
Substituting this back into the definition of \(L_z\), we conclude that
\[
L_z
=
z+\frac1z+O(z^{-3}).
\]

Using  identity
$
\int_0^\infty s\phi(L_z+s)\,ds
=
\phi(L_z)-L_z\{1-\Phi(L_z)\},
$
we expand the second term by Mills' formula:
\(1-\Phi(L_z)=\phi(L_z)\{L_z^{-1}-L_z^{-3}+O(L_z^{-5})\}\). Hence
\(L_z\{1-\Phi(L_z)\}=\phi(L_z)\{1-L_z^{-2}+O(L_z^{-4})\}\). Substituting this into the identity above, the leading \(\phi(L_z)\) terms cancel, and we obtain
\[
\int_0^\infty s\phi(L_z+s)\,ds
=
\frac{\phi(L_z)}{L_z^2}
\{1+O(L_z^{-2})\}.
\]

Moreover, from \(L_z=z+z^{-1}+O(z^{-3})\), we have
\(L_z^2=z^2+2+O(z^{-2})\). Hence
\[
\phi(L_z)
=
(2\pi)^{-1/2}e^{-L_z^2/2}
=
e^{-1}\phi(z)\exp\{O(z^{-2})\}
=
e^{-1}\phi(z)\{1+O(z^{-2})\}.
\]
Also,
\(
L_z^2=z^2+O(1)
=
z^2\{1+O(z^{-2})\}.
\)
Therefore,
$
L_z^{-2}
=
z^{-2}\{1+O(z^{-2})\}^{-1}.
$
Using the expansion \((1+x)^{-1}=1+O(x)\) as \(x\to0\), we obtain
\[
L_z^{-2}
=
z^{-2}\{1+O(z^{-2})\}.
\] Combining this with
\(\int_0^\infty s\phi(L_z+s)\,ds=\phi(L_z)L_z^{-2}\{1+O(L_z^{-2})\}\) and noting that \(L_z^{-2}=O(z^{-2})\), we obtain
\[
\int_0^\infty s\phi(L_z+s)\,ds
=
\frac{e^{-1}\phi(z)}{z^2}
\{1+O(z^{-2})\}.
\]

Combining with the preceding reduction
\[
M_z
=
\frac{1}{\alpha}f_E\left(\frac{1}{\alpha}\right)r_z
\int_0^\infty s\phi(L_z+s)\,ds
+
o\left(r_z\int_0^\infty s\phi(L_z+s)\,ds\right)
+
o\left(\frac{\phi^2(z)}{z^2}\right),
\]
and using \(r_z=\phi(z)/\Phi(z)\sim\phi(z)\), we obtain
\[
M_z
=
\frac{1}{e\alpha}f_E\left(\frac{1}{\alpha}\right)
\frac{\phi^2(z)}{z^2}
+
o\left(\frac{\phi^2(z)}{z^2}\right).
\]
Finally, since \(-\log r_z=\log\Phi(z)-\log\phi(z)=z^2/2+O(1)\), we have
\(r_z^2/(-\log r_z)\sim 2\phi^2(z)/z^2\). Therefore,
\[
M_z
\sim
\frac{f_E(1/\alpha)}{2e\alpha}
\frac{r_z^2}{-\log r_z}.
\]
Taking \(z=z^*\) gives \(r_{z^*}=\sigma\), and therefore
\[
	\mathbb{P}_{H_1}\left(E\ge\frac{1}{\alpha}, Ee^{-\xi}<c^*\right)
	\sim
	\frac{f_E(1/\alpha)}{2e\alpha}
	\frac{\sigma^2}{-\log\sigma}
	\sim
	\frac{f_E(1/\alpha)}{2e\alpha\mu^2}
	\frac{\Delta^2}{-\log\Delta}.
\]
\end{proof}

\subsection{Proof of Proposition~\ref{prop:general_composition}}
\begin{proof}
	By the composition property of Gaussian Differential Privacy, the combined release $(E_1^{\mathrm{DP}}, \dots, E_K^{\mathrm{DP}})$ satisfies $\sqrt{\sum_{k=1}^K \mu_k^2}$-GDP. Since the aggregation procedure $\mathcal{A}$ is applied only to the released private $e$-values, the aggregated output inherits the same privacy guarantee by post-processing. Its validity follows from the definition of a valid aggregation procedure. In particular, if $\mu_k=\mu$ for all $k$, the privacy guarantee becomes $\sqrt{K}\mu$-GDP.
\end{proof}

\subsection{Proof of Theorem~\ref{thm:independent_product}}
\begin{proof}
	By definition, $E_k^{\mathrm{DP}}(\mathcal{D}_k) = E_k(\mathcal{D}_k)\exp(-\xi_k)$, where $\xi_k \sim \mathcal{N}\big(\Delta_k^2/(2\mu^2), \Delta_k^2/\mu^2\big)$ are independent Gaussian noise variables. Since $\mathcal{D}_k$ and $\xi_k$ are independent, the expectation under the joint null hypothesis $H_0$ is:
\begin{equation*}
				\mathbb{E}_{H_0}[E_{\mathrm{prod}}^{\mathrm{DP}}(\mathcal{D})] = \prod_{k=1}^K \mathbb{E}_{H_0^{(k)}}[E_k(\mathcal{D}_k)] \mathbb{E}[e^{-\xi_k}] \le 1.
	\end{equation*}
			For privacy, let $M(\mathcal{D}) = \log E_{\mathrm{prod}}^{\mathrm{DP}}(\mathcal{D}) = \sum_{k=1}^K \log E_k(\mathcal{D}_k) - \sum_{k=1}^K \xi_k$.
	For any neighboring dataset $\mathcal{D} \sim \mathcal{D}'$, the disjoint partition ensures they differ in exactly one subset, say $\mathcal{D}_j \neq \mathcal{D}'_j$, while $\mathcal{D}_k = \mathcal{D}'_k$ for all $k \neq j$. The difference is bounded by:
\begin{equation*}
				\delta(\mathcal{D}, \mathcal{D}') =  \big|\sum_{k=1}^K \log E_k(\mathcal{D}_k) - \sum_{k=1}^K \log E_k(\mathcal{D}'_k) \big| = \big|\log E_j(\mathcal{D}_j) - \log E_j(\mathcal{D}'_j)\big| \le \max_{1 \le k \le K} \Delta_k.
	\end{equation*}
			The total noise $\sum_{k=1}^K \xi_k$ is Gaussian with variance $\sigma^2 = \sum_{k=1}^K \Delta_k^2/\mu^2$. Since $M(\mathcal{D})$ and $M(\mathcal{D}')$ are Gaussian with common variance $\sigma^2$ and mean difference $\delta(\mathcal{D}, \mathcal{D}')$, their trade-off function is:
\begin{equation*}
		T\big(M(\mathcal{D}), M(\mathcal{D}')\big) = G_{ \frac{\delta(\mathcal{D}, \mathcal{D}')}{\sigma} }.
	\end{equation*}
	Since $\delta(\mathcal{D}, \mathcal{D}')/\sigma \le (\max_k \Delta_k)/\sigma \triangleq \mu_{prod}$ and $\gamma \mapsto G_\gamma$ is decreasing, $T \ge G_{\mu_{prod}}$ for all neighboring dataset, satisfying $\mu_{prod}$-GDP.
	
	To prove the sharpness of $\mu_{prod}$, we show that no $\mu_2 < \mu_{prod}$ satisfies $\mu_2$-GDP for $M(\mathcal{D})$. Since the sensitivities $\Delta_k$ are suprema, they may not be exactly globally attainable. For any $\epsilon > 0$, let $j^* = \arg\max_k \Delta_k$, meaning $\Delta_{j^*} = \max_{k} \Delta_k$. There exists a pair of neighboring sub-datasets $\mathcal{D}_{j^*}^{(\epsilon)} \sim {\mathcal{D}'}_{j^*}^{(\epsilon)}$ such that:
\begin{equation*}
		\big| \log E_{j^*}(\mathcal{D}_{j^*}^{(\epsilon)}) - \log E_{j^*}({\mathcal{D}'}_{j^*}^{(\epsilon)}) \big| > \Delta_{j^*} - \epsilon.
	\end{equation*}
	Construct a global neighboring pair $\mathcal{D}^{(\epsilon)} \sim {\mathcal{D}'}^{(\epsilon)}$ matching this difference in the $j^*$-th partition while setting $\mathcal{D}_k = \mathcal{D}'_k$ for all $k \neq j^*$. The deterministic shift is:
\begin{equation*}
		\delta(\mathcal{D}^{(\epsilon)}, {\mathcal{D}'}^{(\epsilon)}) = \big| \log E_{j^*}(\mathcal{D}_{j^*}^{(\epsilon)}) - \log E_{j^*}({\mathcal{D}'}_{j^*}^{(\epsilon)}) \big| > \Delta_{j^*} - \epsilon.
	\end{equation*}
	The trade-off function for this pair is $G_{d(\epsilon)}$, where $d(\epsilon) = \delta(\mathcal{D}^{(\epsilon)}, {\mathcal{D}'}^{(\epsilon)})/\sigma > (\Delta_{j^*} - \epsilon)/\sigma$.
	Assume for contradiction that $M$ satisfies $\mu_2$-GDP for some $\mu_2 < \mu_{prod}$. This requires $G_{d(\epsilon)} \ge G_{\mu_2}$.
	Because $\mu_{prod} = \Delta_{j^*}/\sigma$ and $\mu_2 < \mu_{prod}$, choosing $\epsilon \in \big(0, \sigma(\mu_{prod} - \mu_2)\big)$ ensures $\mu_{prod} - \epsilon/\sigma > \mu_2$, giving $d(\epsilon) > \mu_2$. 
	Since $\gamma \mapsto G_\gamma$ is strictly decreasing, $d(\epsilon) > \mu_2$ implies $G_{d(\epsilon)} < G_{\mu_2}$, contradicting the $\mu_2$-GDP assumption. Thus, $\mu_{prod}$ is sharp.
\end{proof}

\subsection{Proof of Lemma~\ref{lem:e_max_extractor}}
\begin{proof}
	Algorithm~\ref{alg:e_max_extractor} involves two steps: selecting the index $j^*$ and releasing the noisy valid $e$-value. 
	By the general GDP composition theorem, if both steps satisfy $(\mu/\sqrt{2})$-GDP, their composition satisfies $\sqrt{(\mu/\sqrt{2})^2 + (\mu/\sqrt{2})^2}\text{-GDP} = \mu\text{-GDP}$. 
	The second step adds Gaussian noise $\xi \sim \mathcal{N}\big(\Delta^2/\mu^2, 2\Delta^2/\mu^2\big)$ to the logarithm. Since the sensitivity bound is $\Delta$, this satisfies $(\mu/\sqrt{2})$-GDP. We now prove the selection step satisfies $(\mu/\sqrt{2})$-GDP.
	
	Let $\mathcal{D}$ and $\mathcal{D}'$ be neighboring dataset, and $\theta_j(\mathcal{D}) = \log E_j(\mathcal{D})$. The sensitivity is $\sup_{\mathcal{D} \sim \mathcal{D}'} |\theta_j(\mathcal{D}) - \theta_j(\mathcal{D}')| \le \Delta$.
	The selection mechanism is $M(\mathcal{D}) = \arg\max_{i \in \mathcal{S}} (\theta_i(\mathcal{D}) + g_i)$, where $g_i \sim \text{Gumbel}(0, \beta)$ with $\beta = 2\Delta/\epsilon$. 
	Let $X_i(\mathcal{D}) = \theta_i(\mathcal{D}) + g_i$. The probability density and cumulative distribution functions of $g_i$ are $f_i(x) = (1/\beta) \exp\big(-(x - \theta_i(\mathcal{D}))/\beta - e^{-(x - \theta_i(\mathcal{D}))/\beta}\big)$ and $F_i(x) = \exp\big(-e^{-(x - \theta_i(\mathcal{D}))/\beta}\big)$.
	
	The probability of selecting index $i$ is:
\begin{align*}
		\mathbb{P}(M(\mathcal{D}) = i) &= \int_{\mathbb{R}} f_i(x) \prod_{j \neq i} F_j(x) \, dx \\
		&= \int_{\mathbb{R}} \frac{1}{\beta} \exp\left(-\frac{x-\theta_i(\mathcal{D})}{\beta}\right) \exp\left(- \sum_j e^{-\frac{x-\theta_j(\mathcal{D})}{\beta}}\right) \, dx \\
		&= \frac{\exp(\theta_i(\mathcal{D})/\beta)}{\sum_j \exp(\theta_j(\mathcal{D})/\beta)},
	\end{align*}
		To see the last equality, let $a_j=\exp(\theta_j(\mathcal{D})/\beta)$ and set $u=e^{-x/\beta}$. Then $dx=-\beta du/u$, $e^{-(x-\theta_j(\mathcal{D}))/\beta}=a_j u$, and the integral becomes $a_i\int_0^\infty \exp\{-u\sum_j a_j\}\,du=a_i/\sum_j a_j$.
		For neighboring dataset $\mathcal{D}, \mathcal{D}'$, the probability ratio is bounded by:
\begin{align*}
		\frac{\mathbb{P}(M(\mathcal{D}) = i)}{\mathbb{P}(M(\mathcal{D}') = i)} &= \frac{\exp(\theta_i(\mathcal{D})/\beta)}{\exp(\theta_i(\mathcal{D}')/\beta)} \frac{\sum_j \exp(\theta_j(\mathcal{D}')/\beta)}{\sum_j \exp(\theta_j(\mathcal{D})/\beta)} \\
		&\le \exp(\Delta/\beta) \cdot \exp(\Delta/\beta) = \exp\left(\frac{2\Delta}{\beta}\right) = e^{\epsilon}.
	\end{align*}
The inequality follows from the fact that $\exp(\theta_i(\mathcal{D})/\beta)\leq  \exp(\frac{\Delta}{\beta})\exp(\theta_i(\mathcal{D}')/\beta)$ and $\exp(\theta_i(\mathcal{D}')/\beta)\leq  \exp(\frac{\Delta}{\beta})\exp(\theta_i(\mathcal{D})/\beta)$.
	This implies $M$ satisfies $\epsilon$-DP, which is equivalent to $f_{\epsilon, 0}$-DP with trade-off function $f_{\epsilon, 0}(\alpha) = \max\{0, 1 - e^\epsilon \alpha, e^{-\epsilon}(1-\alpha)\}$.
	
	Choosing $\epsilon$ such that $e^\epsilon = \Phi(\mu/(2\sqrt{2}))/\Phi(-\mu/(2\sqrt{2}))$, the intersection of $1 - e^\epsilon \alpha$ and $e^{-\epsilon}(1-\alpha)$ occurs at:
\begin{equation*}
		\alpha^* = (1 - e^{-\epsilon})/(e^\epsilon - e^{-\epsilon}) = 1/(1 + e^\epsilon) = \Phi(-\mu/(2\sqrt{2})).
	\end{equation*}
	Evaluating the trade-off function at $\alpha^*$:
\begin{equation*}
		f_{\epsilon, 0}(\alpha^*) = 1 - e^\epsilon \alpha^* = 1 - \Phi(\mu/(2\sqrt{2})) = \Phi(-\mu/(2\sqrt{2})).
	\end{equation*}
	The GDP trade-off function $G_{\mu/\sqrt{2}}(\alpha) = \Phi\big(\Phi^{-1}(1-\alpha) - \mu/\sqrt{2}\big)$ evaluated at $\alpha^*$ gives:
\begin{equation*}
		G_{\mu/\sqrt{2}}(\alpha^*) = \Phi\big(\Phi^{-1}(\Phi(\mu/(2\sqrt{2}))) - \mu/\sqrt{2}\big) = \Phi(-\mu/(2\sqrt{2})).
	\end{equation*}
	The piecewise linear convex function $f_{\epsilon, 0}(\alpha)$ connects $(0,1)$, $(\alpha^*, \alpha^*)$, and $(1,0)$. The strictly convex curve $G_{\mu/\sqrt{2}}(\alpha)$ passes through these three points. Consequently, $f_{\epsilon, 0}(\alpha) \ge G_{\mu/\sqrt{2}}(\alpha)$ for all $\alpha \in [0,1]$. Thus, $M$ satisfies $(\mu/\sqrt{2})$-GDP.
\end{proof}

\subsection{Proof of Proposition~\ref{prop:peeling_privacy}}
\begin{proof}
			Algorithm~\ref{alg:e_peeling} executes the Report Noisy Max procedure (Algorithm~\ref{alg:e_max_extractor}) for $s$ adaptive iterations, each with privacy budget $\mu/\sqrt{s}$. By Lemma~\ref{lem:e_max_extractor}, each iteration satisfies $(\mu/\sqrt{s})$-GDP. By the general GDP composition theorem, the overall mechanism satisfies $\sqrt{s \times (\mu/\sqrt{s})^2}\text{-GDP} = \mu\text{-GDP}$.
			
		Now we prove the validity. For each $i \in \{1, \dots, m\}$, the output $\hat{E}_i$ is defined as $\tilde{E}_i$ if the hypothesis is selected and $0$ otherwise. Since $\tilde{E}_i \ge 0$ is a valid private $e$-value, the inequality $0 \le \hat{E}_i \le \tilde{E}_i$ holds by construction. Under the null hypothesis $H_i$, it follows that:
\begin{equation*}
		\mathbb{E}[\hat{E}_i \mid H_i] \le \mathbb{E}[\tilde{E}_i \mid H_i] \le 1.
	\end{equation*}
	Thus, $\hat{E}_i$ remains a valid $e$-value for all $i$, completing the proof.
	
	\end{proof}

\subsection{Proof of Corollary~\ref{cor:peeling_ebh_fdr}}
\begin{proof}
	By Proposition~\ref{prop:peeling_privacy}, each coordinate of $\widehat{\mathcal{E}}$ is a valid $e$-value. The e-BH procedure controls FDR under arbitrary dependence for any collection of valid $e$-values \citep{wang2022false,xu2025bringing}. Therefore, applying e-BH to $\widehat{\mathcal{E}}$ controls the FDR at level $\alpha$. The privacy statement follows from Proposition~\ref{prop:peeling_privacy} and the post-processing property of GDP.
\end{proof}

\subsection{Proof of Proposition~\ref{prop:adaptive_peeling}}
\begin{proof}
	We first establish the privacy cost of choosing $s$. If two log-$e$-value vectors differ by at most $\Delta$ in sup-norm, then each order statistic also differs by at most $\Delta$. Hence, for every $k\in\mathcal{K}$, the margin
\[
	Q_k=L_{(k)}-\log\{m/(\alpha k)\}
	\]
	has sensitivity at most $\Delta$. Adding Gaussian noise with variance $|\mathcal{K}|\Delta^2/\mu_0^2$ therefore makes each noisy margin $(\mu_0/\sqrt{|\mathcal{K}|})$-GDP. By the composition theorem for GDP, releasing the full vector $\{\widetilde Q_k:k\in\mathcal{K}\}$ satisfies $\mu_0$-GDP. The selected value $\hat{s}$ is a post-processing of these noisy margins.
	
	Conditional on the selected value $\hat{s}$, the final recursive peeling stage is Algorithm~\ref{alg:e_peeling} run with privacy budget $\mu_{\mathrm{peel}}=\sqrt{\mu^2-\mu_0^2}$. By Proposition~\ref{prop:peeling_privacy}, this stage satisfies $\mu_{\mathrm{peel}}$-GDP. Adaptive composition then gives
\[
	\sqrt{\mu_0^2+\mu_{\mathrm{peel}}^2}=\mu,
	\]
	so the full adaptive procedure satisfies $\mu$-GDP.
	
	It remains to verify e-validity. For any coordinate $i$, the output has the form
\[
	\hat E_i=I_i E_i e^{-\xi_i},
	\]
	where $I_i$ is the indicator that hypothesis $i$ is selected by the final peeling stage; if it is not selected, $I_i=0$. The value noise $\xi_i$ is drawn after selection and satisfies $\mathbb{E}[\exp(-\xi_i)\mid \hat{s}]=1$ for every realized value of $\hat{s}$. Therefore, under the null hypothesis for coordinate $i$,
\[
	\mathbb{E}[\hat E_i]
	= \mathbb{E}\!\left[I_i E_i\,\mathbb{E}\{\exp(-\xi_i)\mid \hat{s}, I_i, E_i\}\right]
	= \mathbb{E}[I_iE_i]
	\leq \mathbb{E}[E_i]\leq 1.
	\]
	Thus each coordinate of the output vector is a valid $e$-value.
\end{proof}

\section{Detailed Discussion on the Selection Mechanism: Gaussian vs. Gumbel }
\label{app:general_selection}

As briefly discussed in Remark~\ref{rmk:gaussian_selection}, the privacy bottleneck in adaptive peeling is the index-selection step. This issue is directly related to the proof of Lemma 7 in \citet{xia2023adaptive}, where the authors analyze a Gaussian report noisy minimum mechanism. In their notation, after transforming the $p$-values to scores $f_i(\mathcal{D})=G^{-1}(p_i(\mathcal{D}))$ for $i=1,\ldots,N$, the algorithm draws independent noises
\[
	Z_i\sim \mathcal{N}\left(0,\frac{8\Delta^2}{\mu^2}\right)
\]
and returns
\[
	j^*=\arg\min_i\{f_i(\mathcal{D})+Z_i\}.
\]
A claim central to the privacy guarantee of \cite{xia2023adaptive} is as follows.
\begin{claim}[Gaussian selection guarantee claimed in Lemma 7 of \citet{xia2023adaptive}]
	\label{claim:gaussian_selection_xia}
	The selection mechanism $\mathcal{D}\mapsto j^*$ above satisfies $(\mu/\sqrt{2})$-GDP.
\end{claim}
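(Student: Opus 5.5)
The natural route to the claim is the one used for Lemma~\ref{lem:e_max_extractor}. First, write the selection probability of index $i$ as
\[
\mathbb{P}(j^*=i)=\int_{\mathbb{R}}\frac{1}{\sigma}\phi\Bigl(\frac{x-f_i(\mathcal{D})}{\sigma}\Bigr)\prod_{k\neq i}\bar\Phi\Bigl(\frac{x-f_k(\mathcal{D})}{\sigma}\Bigr)\,dx,
\qquad \sigma^2=\frac{8\Delta^2}{\mu^2}.
\]
Second, bound the trade-off function $T(j^*(\mathcal{D}),j^*(\mathcal{D}'))$ from below by $G_{\mu/\sqrt{2}}$. For a discrete output the trade-off function is determined by the likelihood ratios $\mathbb{P}_{\mathcal{D}}(j^*=i)/\mathbb{P}_{\mathcal{D}'}(j^*=i)$ and the masses of the corresponding level sets. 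Unlike the Gumbel case, no closed softmax form is available. So the only generic tool is to view $j^*$ as a post-processing of the full vector $(f_i+Z_i)_i$, which yields merely $\sqrt{N}$-scaled GDP. Any $N$-free bound must therefore exploit the structure of the argmin directly. I would attempt this by coupling: shift the common integration variable $x$ by $\pm\Delta$, and compare the integrands pointwise as Gaussian location shifts.

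The main obstacle, and where I expect the argument to fail, is the following. The event $\{j^*=i\}$ is driven by the extreme tail of $Z_i$ once $N$ is large. Conditionally on selection, the minimum of the competing noises concentrates near $-\sigma\sqrt{2\log N}$ with vanishing spread. A Gaussian location shift evaluated at tail level $t$ has likelihood ratio $\exp(\delta t/\sigma^2)$, which grows with $t$. The pointwise coupling therefore gives a ratio of order $\exp\{c\,\mu\sqrt{2\log N}\}$ rather than anything uniform in $N$.

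Accordingly, my plan is to test the claim against a worst-case neighboring pair rather than to push the coupling through. Take $f_1(\mathcal{D}')=f_1(\mathcal{D})-\Delta$ and $f_k(\mathcal{D}')=f_k(\mathcal{D})+\Delta$ for $k\ge2$, with all scores equal under $\mathcal{D}$. This gives a relative shift of $2\Delta$, the largest the sensitivity allows. Use the rejection rule $\{j^*=1\}$. Under $\mathcal{D}$ its size is $\alpha_N=1/N$. Under $\mathcal{D}'$ its power can be computed by Laplace's method in the integral above, giving $1-\beta_N\approx N^{-1}\exp\{(2\Delta/\sigma)\sqrt{2\log N}+\text{lower order}\}$.

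I would then compare this with $1-G_{\mu/\sqrt2}(\alpha_N)=\Phi(\Phi^{-1}(\alpha_N)+\mu/\sqrt2)\approx\alpha_N\exp\{(\mu/\sqrt2)\sqrt{2\log N}-\mu^2/4\}$. Since $2\Delta/\sigma=\mu/\sqrt2$, the leading exponents coincide. The verdict therefore hinges on the second-order terms, namely the $\log\log N$ corrections in the Gaussian extreme-value location versus the $-\mu^2/4$ in the GDP curve. Computing these precisely is the hardest step.

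If the correction favors the mechanism, the claim may survive for this pair. In that case I would then try a richer test, such as rejecting on $\{j^*\in A\}$ for a set $A$ of indices, or using unequal base scores to place $f_1$ away from the competitors. If instead the correction exceeds the GDP allowance for large $N$, this exhibits the failure of Claim~\ref{claim:gaussian_selection_xia} uniformly in $N$, in line with Remark~\ref{rmk:gaussian_selection}.
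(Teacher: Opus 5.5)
The statement is false, and the paper's argument is a counterexample, so your decision to test it on a neighboring pair is the right one. The gap is that the witness you chose, the single-index event $\{j^*=1\}$, can never exhibit a violation. The second-order computation you flag as decisive therefore comes out in the mechanism's favor before you start.

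To see this, condition on the competing noises and set $t=\min_{k\ge2}\{f_k(\mathcal{D})+Z_k\}-f_1(\mathcal{D})$. Then $\mathbb{P}_{\mathcal{D}}(j^*=1\mid t)=\Phi(t/\sigma)=:a$. The sensitivity bound gives $\mathbb{P}_{\mathcal{D}'}(j^*=1\mid t)\le\Phi\{(t+2\Delta)/\sigma\}=\Phi\{\Phi^{-1}(a)+\mu/\sqrt2\}=1-G_{\mu/\sqrt2}(a)$, with equality for your pair. Since $a\mapsto1-G_{\mu/\sqrt2}(a)$ is concave, Jensen's inequality over $t$ yields $\beta_\phi\ge G_{\mu/\sqrt2}(\alpha_\phi)$ for every $N$, every neighboring pair, and every single index.

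So the matching leading exponents you found are no coincidence, and moving $f_1$ away from the competitors cannot help. This per-index bound is exactly what \citet{xia2023adaptive} establish correctly. What fails is the passage from individual indices to events containing many indices (Section~\ref{subsec:gaussian_flaw}).

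The missing idea is to make both sides of the comparison extremes over large groups. Take $2N$ candidates in two groups of $N$ whose scores differ by a fixed gap $\gamma>0$, and swap the groups between $\mathcal{D}$ and $\mathcal{D}'$. The paper uses $\gamma=\Delta/2-\varepsilon$ and realizes the scores through mirror-conservative $p$-values on four mutually neighboring datasets, so the example lies inside the setting of \citet{xia2023adaptive}. Your pair would need the same check to count as a counterexample there.

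In the sign-flipped (argmax) form, the test that rejects when $j^*$ falls in the first group has $\alpha_\phi=\beta_\phi=\mathbb{P}\bigl(\max_{i\le N}Z_i>\gamma+\max_{N<j\le 2N}Z_j\bigr)$. Both group maxima sit near $\sigma\sqrt{2\log N}$ with fluctuations of order $\sigma/\sqrt{2\log N}\to0$; this is your own \emph{vanishing spread} observation, now applied to both groups. Hence this error tends to $0$ by Slutsky's theorem, while $G_{\mu/\sqrt2}(\alpha)\to1$ as $\alpha\to0$. The point $(\alpha_\phi,\beta_\phi)$ therefore falls below the curve for large $N$, with no second-order asymptotics needed.

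Your fallback of rejecting on $\{j^*\in A\}$ points this way. What the proposal lacks are the essential ingredients: $|A|$ growing with $N$, and a fixed inter-group gap, so that Gaussian extreme-value concentration makes the two datasets almost perfectly distinguishable.
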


The distinction between report noisy minimum and report noisy maximum is immaterial for the issue studied here. Since a centered Gaussian distribution is symmetric, the noisy minimum rule in \citet{xia2023adaptive} can be rewritten, after changing signs, as the noisy maximum mechanism
\[
	\mathcal{M}(\mathcal{D})=\arg\max_i\big(v_i(\mathcal{D})+Z_i\big),
\]
where $v_i(\mathcal{D})=-f_i(\mathcal{D})=-G^{-1}\{p_i(\mathcal{D})\}$. We use this equivalent maximum formulation only for notational convenience. The counterexample below is constructed directly from mirror-conservative $p$-values satisfying the sensitivity condition in \citet{xia2023adaptive}. A parallel construction can also be used to show that Gaussian noise is insufficient for selection based on $\log E_i(\mathcal{D})$.

\subsection{A Counterexample to \ref{claim:gaussian_selection_xia}}
The factor $\sqrt{2}$ in Claim~\ref{claim:gaussian_selection_xia} is immaterial for the counterexample. Below, $\mu>0$ denotes the claimed GDP parameter for the selection step itself; replacing this symbol by $\mu/\sqrt{2}$ recovers the specific guarantee stated in Claim~\ref{claim:gaussian_selection_xia}. To show that $\mathcal{M}$ fails to satisfy $\mu$-GDP, it is sufficient to find a pair of neighboring datasets $\mathcal{D}, \mathcal{D}'$ and a decision rule $\phi$ such that the resulting Type II error violates the theoretical lower bound: $\beta_\phi < G_\mu(\alpha_\phi)$. %Hence $T(\alpha_\phi) \leq \beta_\phi < G_\mu(\alpha_\phi)$.

We first construct mirror-conservative $p$-value functions. As in \cite{xia2023adaptive}, we take $G=\Phi$, and let $\Delta>0$ denote the sensitivity. Fix any $0<\varepsilon<\Delta/2$. Let the null data-generating distribution put equal probability on four possible datasets,
\[
	\mathbb{P}(X=\mathcal{D})=\mathbb{P}(X=\mathcal{D}')=\mathbb{P}(X=\mathcal{E})=\mathbb{P}(X=\mathcal{E}')=\frac14,
\]
and regard every pair among these four datasets as neighboring (for instance, each dataset may be a singleton containing one of four possible records). Define $2N$ $p$-value functions $p_i(\cdot)$ by
\begin{align*}
	(p_1(\mathcal{D}),\ldots,p_{2N}(\mathcal{D}))
	&=(\underbrace{\Phi(-\varepsilon),\ldots,\Phi(-\varepsilon)}_{N},
	\underbrace{\Phi(-\frac{\Delta}{2}),\ldots,\Phi(-\frac{\Delta}{2})}_{N}),\\
	(p_1(\mathcal{D}'),\ldots,p_{2N}(\mathcal{D}'))
	&=(\underbrace{\Phi(-\frac{\Delta}{2}),\ldots,\Phi(-\frac{\Delta}{2})}_{N},
	\underbrace{\Phi(-\varepsilon),\ldots,\Phi(-\varepsilon)}_{N}),\\
	(p_1(\mathcal{E}),\ldots,p_{2N}(\mathcal{E}))
	&=(\underbrace{\Phi(\varepsilon),\ldots,\Phi(\varepsilon)}_{N},
	\underbrace{\Phi(\frac{\Delta}{2}),\ldots,\Phi(\frac{\Delta}{2})}_{N}),\\
	(p_1(\mathcal{E}'),\ldots,p_{2N}(\mathcal{E}'))
	&=(\underbrace{\Phi(\frac{\Delta}{2}),\ldots,\Phi(\frac{\Delta}{2})}_{N},
	\underbrace{\Phi(\varepsilon),\ldots,\Phi(\varepsilon)}_{N}).
\end{align*}

We now check the sensitivity assumption. For every coordinate $i$ and every pair $S,S'\in\{\mathcal{D},\mathcal{D}',\mathcal{E},\mathcal{E}'\}$,
\[
	\Phi^{-1}\{p_i(S)\},\Phi^{-1}\{p_i(S')\}\in
	\{-\frac{\Delta}{2},-\varepsilon,\varepsilon,\frac{\Delta}{2}\}.
\]
Hence
\[
	\left|\Phi^{-1}\{p_i(S)\}-\Phi^{-1}\{p_i(S')\}\right|\le \Delta
\]
for all six neighboring pairs. Thus the sensitivity condition is satisfied with the sensitivity $\Delta$.

We also check the mirror-conservative condition. For every coordinate $i$, the marginal distribution of $p_i(X)$ is uniform on
\[
	\{\Phi(-\frac{\Delta}{2}),\Phi(-\varepsilon),\Phi(\varepsilon),\Phi(\frac{\Delta}{2})\}.
\]
Thus, for any $0\le t_1\le t_2\le1/2$,
\[
	\mathbb{P}\{p_i(X)\in[t_1,t_2]\}
	=
	\mathbb{P}\{p_i(X)\in[1-t_2,1-t_1]\}.
\]
Each $p_i$ is therefore mirror-symmetric, and hence mirror-conservative.

Now convert this $p$-value construction to the score form used by the selection mechanism. The noisy-minimum rule of \citet{xia2023adaptive} uses scores $f_i=\Phi^{-1}(p_i)$; equivalently, after changing signs, it is a noisy-maximum rule with scores $v_i=-\Phi^{-1}(p_i)$. For notational convenience, we center these scores by defining
\[
	\widetilde v_i(S)=v_i(S)-\varepsilon
	=-\Phi^{-1}\{p_i(S)\}-\varepsilon.
\]
This centering does not change the argmax mechanism, because $\widetilde v_i(S)+Z_i$ differs from $v_i(S)+Z_i$ by the same constant for all coordinates $i$. Writing $\gamma=\frac{\Delta}{2}-\varepsilon>0$, the centered scores on the two neighboring datasets are
\begin{align*}
	\widetilde v(\mathcal{D})  &=(\underbrace{0,\ldots,0}_{N},\underbrace{\gamma,\ldots,\gamma}_{N}),\\
	\widetilde v(\mathcal{D}') &=(\underbrace{\gamma,\ldots,\gamma}_{N},\underbrace{0,\ldots,0}_{N}).
\end{align*}

An adversary observing the output index $j^* = \mathcal{M}(\cdot)$ applies the following symmetric test $\phi$: reject the null hypothesis ($H_0: $ the underlying dataset is $\mathcal{D}$) if the selected index falls in the first half, $j^* \in \{1,\dots,N\}$.

For this test, the Type I error $\alpha_\phi$ is the probability that the maximum of the $N$ ``loser'' candidates (true score $0$) exceeds the maximum of the $N$ ``winner'' candidates (true score $\gamma$). Due to the symmetry of the noise and the datasets, the Type II error probability $\beta_\phi$ is identical. Denote this probability as $p_{\text{error}}(N)$. We have:
$$ \alpha_\phi = \beta_\phi = p_{\text{error}}(N) = \mathbb{P}\Big( \max_{1 \le i \le N} Z_i > \gamma + \max_{N < j \le 2N} Z_j \Big). $$

For a valid $\mu$-GDP guarantee, the true trade-off function $f(\alpha) = \inf_{\alpha_\phi\leq \alpha} \beta_\phi$ must be bounded below by $G_\mu(\alpha)$ for all $\alpha$. By evaluating our specific test $\phi$, we establish an upper bound on the true trade-off at $\alpha = p_{\text{error}}(N)$:
$$ f\big(p_{\text{error}}(N)\big) \le \beta_\phi = p_{\text{error}}(N). $$
Thus, if we can show that $p_{\text{error}}(N) < G_\mu\big(p_{\text{error}}(N)\big)$ for some $N$, the mechanism is proven to violate $\mu$-GDP.

Assume $Z_i \sim \mathcal{N}(0, \sigma^2)$, as used in \citet{xia2023adaptive}, where $\sigma$ is fixed for the given sensitivity bound and claimed privacy parameter. According to extreme value theory for normal distributions \citep{leadbetter1983extremes}, the standard normal distribution belongs to the Gumbel domain of attraction. For standard normal variables $X_i \sim \mathcal{N}(0, 1)$, there exist normalizing sequences $a_N$ and $b_N$ given by:
\begin{equation*}
	a_N = \sqrt{2\log N} - \frac{\log(4\pi\log N)}{2\sqrt{2\log N}}, \quad b_N = \frac{1}{\sqrt{2\log N}},
\end{equation*}
such that $\frac{\max_{1 \le i \le N} X_i - a_N}{b_N} \xrightarrow{d} \Lambda$, where $\Lambda$ follows the standard Gumbel distribution.

For our scaled variables $Z_i = \sigma X_i$, we define the corresponding location and scale sequences as $\alpha_N = \sigma a_N$ and $\beta_N = \sigma b_N$. Construct the normalized maximums for the two independent groups:
\begin{align*}
	U_N &= \frac{\max_{1 \le i \le N} Z_i - \alpha_N}{\beta_N} \xrightarrow{d} U \sim \text{Gumbel}(0, 1) \\
	V_N &= \frac{\max_{N < j \le 2N} Z_j - \alpha_N}{\beta_N} \xrightarrow{d} V \sim \text{Gumbel}(0, 1)
\end{align*}
where $U$ and $V$ are independent standard Gumbel random variables. 

We can rewrite the error probability in terms of $U_N$ and $V_N$:
\begin{align*}
	p_{\text{error}}^{\text{Gauss}}(N) &= \mathbb{P}\Big( \beta_N U_N + \alpha_N > \beta_N V_N + \alpha_N + \gamma \Big) \\
	&= \mathbb{P}\Big( \beta_N (U_N - V_N) > \gamma \Big).
\end{align*}
By the continuous mapping theorem, the difference $W_N \triangleq U_N - V_N$ converges in distribution to $W \triangleq U - V$, which follows a standard Logistic distribution.

Observe the scale sequence $\beta_N = \frac{\sigma}{\sqrt{2\log N}}$. As $N \to \infty$, this deterministic sequence satisfies $\beta_N \to 0$. Thus, it also converges in probability: $\beta_N \xrightarrow{p} 0$.

Since $W_N \xrightarrow{d} W$ and $\beta_N \xrightarrow{p} 0$, Slutsky's theorem implies that:
\begin{equation*}
	\beta_N W_N \xrightarrow{d} 0 \cdot W = 0.
\end{equation*}
Convergence in distribution to a constant is equivalent to convergence in probability. Hence, for any $\eta>0$,
\begin{equation*}
	\lim_{N \to \infty} \mathbb{P}\Big( |\beta_N W_N| > \eta \Big) = 0.
\end{equation*}
Taking $\eta=\gamma$, where $\gamma$ is the fixed positive score gap, gives $\lim_{N \to \infty}\mathbb{P}\big(\beta_N W_N>\gamma\big)=0$.
Therefore, $\lim_{N \to \infty} p_{\text{error}}^{\text{Gauss}}(N) = 0$. Recall that our goal is to show $p_{\text{error}}(N) < G_\mu\big(p_{\text{error}}(N)\big)$. %{\color{red}This shows for sufficiently large $N$ it is possible to distinguish $\mathcal{D}$ and $\mathcal{D}'$ almost perfectly, violating the privacy guarantee.}
Since $\lim_{x \to 0}G_\mu(x)=1$, the fact that  $\lim_{N \to \infty} p_{\text{error}}^{\text{Gauss}}(N) = 0$ implies $p_{\text{error}}(N) < G_\mu\big(p_{\text{error}}(N)\big)$ for sufficiently large $N$.

%the $G_\mu$ curve decays much slower than the linear $y=x$ trajectory of our symmetric test. Therefore, for a sufficiently large $N$, $p_{\text{error}}^{\text{Gauss}}(N)$ falls below the $G_\mu$ lower bound.

\subsection{The Specific Gap in the Existing Proof}
\label{subsec:gaussian_flaw}

We now locate the specific step in the proof of Lemma 7 in \citet{xia2023adaptive} that fails. The authors attempt to establish $2^{-1/2}\mu$-GDP by first proving $(\epsilon, \delta(\epsilon))$-DP, where $\delta\left(\epsilon\right)=\Phi\left(-\epsilon/\mu+\mu/2\right)-e^{\epsilon}\Phi\left(-\epsilon/\mu-\mu/2\right)$. The problematic step is the passage from a bound for each fixed output index to a privacy guarantee for arbitrary output events.

Specifically, their argument obtains a bound of the form
\begin{equation} \label{eq:flawed_singleton}
	\mathbb{P}(\mathcal{M}(\mathcal{D}) = i) \le e^\epsilon \mathbb{P}(\mathcal{M}(\mathcal{D}') = i) + \delta(\epsilon)
\end{equation}
for each fixed index $i$. However, $(\epsilon, \delta)$-DP requires the inequality
\[
	\mathbb{P}(\mathcal{M}(\mathcal{D}) \in E) \le e^\epsilon \mathbb{P}(\mathcal{M}(\mathcal{D}') \in E) + \delta
\]
to hold uniformly for every subset $E$ of possible outputs.

If we sum the index-wise bounds over a subset $E$ with $|E| = k$, we only obtain
\begin{align*}
	\mathbb{P}(\mathcal{M}(\mathcal{D}) \in E) &= \sum_{i \in E} \mathbb{P}(\mathcal{M}(\mathcal{D}) = i) \\
	&\le \sum_{i \in E} \Big( e^\epsilon \mathbb{P}(\mathcal{M}(\mathcal{D}') = i) + \delta(\epsilon) \Big) \\
	&= e^\epsilon \mathbb{P}(\mathcal{M}(\mathcal{D}') \in E) + k \cdot \delta(\epsilon).
\end{align*}
The additive error accumulates with the size of the event $E$. Therefore, \eqref{eq:flawed_singleton} does not imply $(\epsilon, \delta(\epsilon))$-DP, and hence it also does not imply the claimed $2^{-1/2}\mu$-GDP guarantee for the selection step.

\subsection{Empirical Verification}
We also run a simulation to empirically verify privacy leakage. We consider a selection task among $2N = 2000000$ candidates. We take $\Delta=1$ and $\varepsilon=0.01$, so the centered score gap is $\gamma=\Delta/2-\varepsilon=0.49$. We construct a neighboring pair $(\mathcal{D}, \mathcal{D}')$ where $N$ candidates have centered score $\gamma$ and the other $N$ have centered score $0$. The groups are swapped between $\mathcal{D}$ and $\mathcal{D}'$. We fix the privacy parameter for the selection step at $\mu = 1/\sqrt{2}$. 

Following the construction in \citet{xia2023adaptive}, Gaussian noise is added to each candidate's score. We calculate the  error probability $p_{\text{error}}^{\text{Gauss}}$ through numerical integration of the overlap between the two extreme value distributions. This symmetric test yields an empirical anchor point $(\alpha, \beta) = (p_{\text{error}}^{\text{Gauss}}, p_{\text{error}}^{\text{Gauss}})$. We note that any valid trade-off function must be convex. Consequently, the true trade-off profile of the Gaussian mechanism is bounded above by the piecewise linear curve connecting $(0,1)$, this anchor point, and $(1,0)$. As shown in Figure~\ref{fig:violate}, this upper bound (red curve) falls into the violation zone below the theoretical $G_\mu$ lower bound (black curve), empirically confirming that the Gaussian mechanism fails to provide the claimed $\mu$-GDP guarantee for large candidate sets.
\begin{figure}
	\centering
	\includegraphics[width=0.6\textwidth]{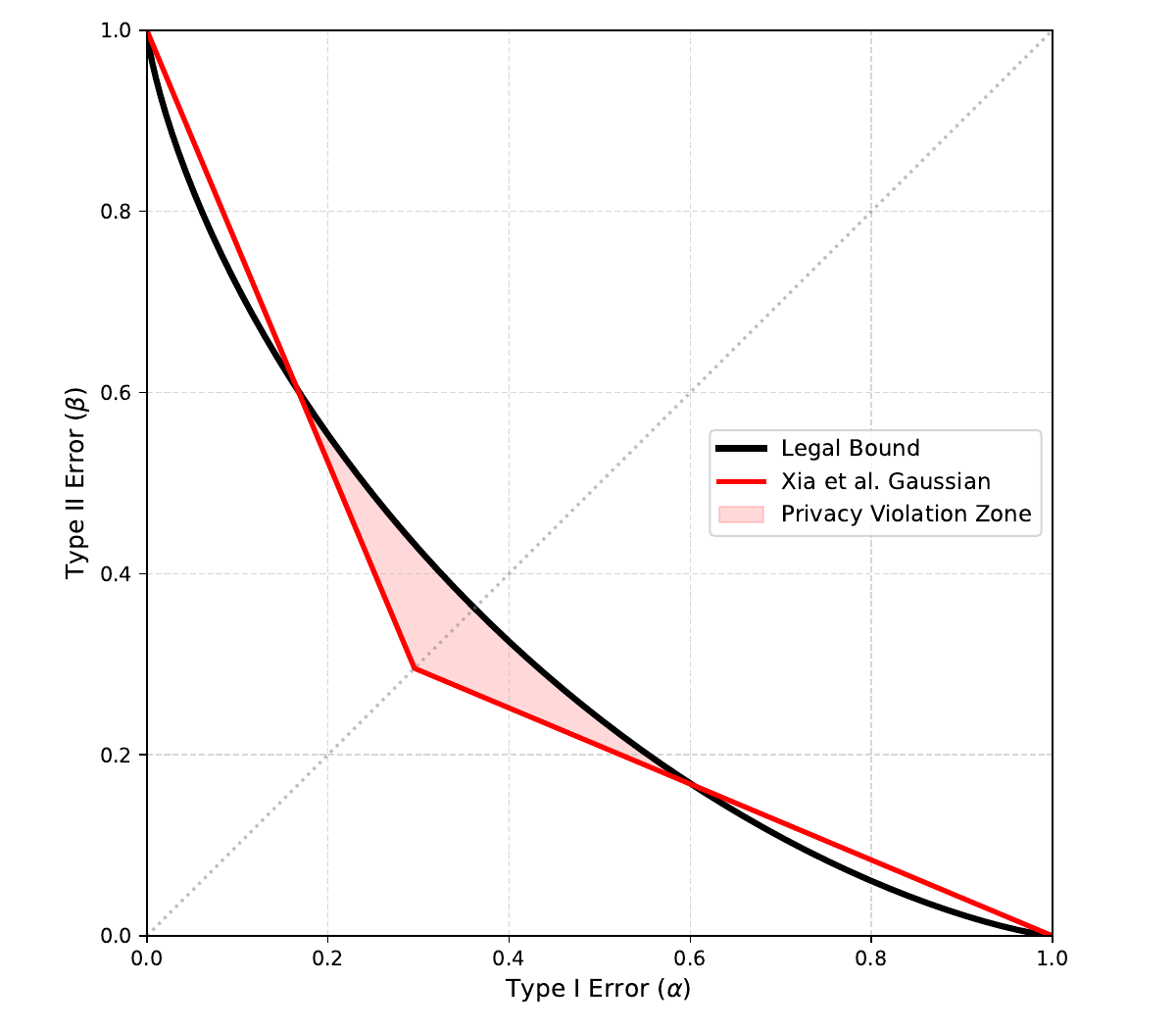}
	\caption{Comparison of the trade-off upper bound for the Gaussian selection mechanism (red) and $G_\mu$ (black). Under the specific setting, the red curve can fall below the black curve, which illustrates the technical difficulty of utilizing Gaussian noise for the $\mu$-GDP selection.}
	\label{fig:violate}
\end{figure}

\subsection{Cardinality-Independent Stability of Gumbel Noise}
To rectify this vulnerability, we employ Gumbel noise $Z_i \sim \text{Gumbel}(0, b)$. The fundamental advantage of the Gumbel distribution is its \textit{max-stability}. Unlike the Gaussian distribution, whose extreme values suffer from variance collapse, the maximum of $N$ independent $\text{Gumbel}(0, b)$ random variables has a Gumbel distribution with scale parameter $b$ and location shifted by $b \log N$.

Consequently, under our symmetric test $\phi$, the maximum scores for the ``loser'' and ``winner'' groups are exactly distributed as:
\begin{align*}
	M_A^{(N)} &= \max_{1 \le i \le N} Z_i \sim \text{Gumbel}(b \log N, b), \\
	M_B^{(N)} &= \max_{N < j \le 2N} (Z_j + \Delta) \sim \text{Gumbel}(\Delta + b \log N, b).
\end{align*}
Because the extreme values preserve the original distribution's dispersion, the fixed signal gap $\Delta$ is not dwarfed by a shrinking variance. Furthermore, the cardinality-dependent location shift $b \log N$ applies symmetrically to both groups. When evaluating the error probability $p_{\text{error}}^{\text{Gumbel}}(N) = \mathbb{P}(M_A^{(N)} > M_B^{(N)})$, the location shifts $b \log N$ cancel, yielding a closed-form logistic probability:
\begin{equation*}
	p_{\text{error}}^{\text{Gumbel}} = \frac{\exp((b \log N)/b)}{\exp((\Delta + b \log N)/b) + \exp((b \log N)/b)} = \frac{1}{1 + \exp(\frac{\Delta}{b})}.
\end{equation*}
The resulting  probability $p_{\text{error}}^{\text{Gumbel}}$ is independent of $N$. By calibrating the scale parameter $b = 2\Delta / \epsilon$ (derived from the desired selection budget $\mu$), the resulting $(\alpha_\phi, \beta_\phi)$ anchor point remains above the $G_{\mu}$ curve. 
    
\end{document}